\newtheorem*{rep@theorem}{\rep@title}
\newcommand{\newreptheorem}[2]{%
	\newenvironment{rep#1}[1]{%
		\def\rep@title{#2 \ref{##1}}%
		\begin{rep@theorem}}%
		{\end{rep@theorem}}}
\newenvironment{lemma-repeat}[1]{\begin{trivlist}
		\item[\hspace{\labelsep}{\bf\noindent Lemma \ref{#1} }]\em }%
	{\end{trivlist}}
\newenvironment{theorem-repeat}[1]{\begin{trivlist}
		\item[\hspace{\labelsep}{\bf\noindent Theorem \ref{#1} }]\em }%
	{\end{trivlist}}
\newcommand{\qedsymb}{\qed}
\newenvironment{proofof}[1]{\begin{trivlist}
		\item[\hspace{\labelsep}{\bf\noindent Proof of #1: }]
	}{\qedsymb\end{trivlist}}
\newtheorem{theorem}{Theorem}
\newtheorem{claim}{Claim}
\newtheorem{lemma}{Lemma}
\newtheorem{definition}{Definition}
\def\eps{\varepsilon}
\DeclareMathOperator{\bin}{bin}
\newcommand{\size}[1]{\ensuremath{\left|#1\right|}}
\newcommand{\set}[1]{\left\{ #1 \right\}}
\DeclareMathOperator{\false}{{\scriptstyle{FALSE}}}
\DeclareMathOperator{\true}{{\scriptstyle{TRUE}}}
\def\disj{\mathrm{DISJ}}
\def\eq{\mathrm{EQ}}
\DeclareMathOperator{\wdist}{wdist}
\DeclareMathOperator{\poly}{poly}
\newcommand\singlebar[1]{\bar{#1}}
\newcommand\doublebar[1]{\bar{\bar{#1}}}
\newcommand{\remove}[1]{}
\newcommand{\ifabs}[2]{#1}
\renewcommand{\ifabs}[2]{#2}
\begin{document}

\begin{titlepage}
	
	\title{Quadratic and Near-Quadratic Lower Bounds\\
		for the CONGEST Model\thanks{Department of Computer Science, Technion, \{ckeren,serikhoury,amipaz\}@cs.technion.ac.il. Supported in part by ISF grant 1696/14.}}
	\author{Keren Censor-Hillel \and Seri Khoury \and Ami Paz}
	\maketitle
\maketitle
\begin{abstract}
We present the first super-linear lower bounds for natural graph problems in the CONGEST model, answering a long-standing open question.

Specifically, we show that any exact computation of a minimum vertex cover or a maximum independent set requires $\Omega(n^2/\log^2{n})$ rounds in the worst case in the CONGEST model, as well as any algorithm for $\chi$-coloring a graph, where $\chi$ is the chromatic number of the graph. We further show that such strong lower bounds are not limited to NP-hard problems, by showing two simple graph problems in P which require a quadratic and near-quadratic number of rounds.

Finally, we address the problem of computing an exact solution to weighted all-pairs-shortest-paths (APSP), which arguably may be considered as a candidate for having a super-linear lower bound.
We show a simple $\Omega(n)$ lower bound for this problem,
which implies a separation between the weighted and unweighted cases, since the latter is known to have a complexity of $\Theta(n/\log{n})$. We also formally prove that the standard Alice-Bob framework is incapable of providing a super-linear lower bound for exact weighted APSP, whose complexity remains an intriguing open question.
\end{abstract}
		
	\end{titlepage}
\section{Introduction}

It is well-known and easily proven that many graph problems are \emph{global} for distributed computing, in the sense that solving them necessitates communication throughout the network. This implies tight $\Theta(D)$ complexities, where $D$ is the diameter of the network, for global problems in the LOCAL model. In this model, a message of unbounded size can be sent over each edge in each round, which allows to learn the entire topology in $D$ rounds. Global problems are widely studied in the CONGEST model, in which the size of each message is restricted to $O(\log{n})$ bits, where $n$ is the size of the network. The trivial complexity of learning the entire topology in the CONGEST model is $O(m)$, where $m$ is the number of edges of the communication graph, and since $m$ can be as large as $\Theta(n^2)$, one of the most basic questions for a global problem is how fast in terms of $n$ it can be solved in the CONGEST model.

Some global problems admit fast $O(D)$-round solutions in the CONGEST model, such as constructing a breadth-first search tree~\cite{Peleg:book00}. Some others have complexities of $\tilde{\Theta}(D+\sqrt{n})$, such as constructing a minimum spanning tree, and various approximation and verification problems~\cite{FrischknechtHW12,SarmaHKKNPPW12,KorKP11,PelegR00,PelegRT12,HolzerPRW14}. Some problems are yet harder, with complexities that are near-linear in $n$~\cite{PelegRT12,AbboudCHK16,LenzenP13,FrischknechtHW12,HolzerW12}. For some problems, no $O(n)$ solutions are known and they are candidates to being even harder that the ones with linear-in-$n$ complexities.

A major open question about global graph problems in the CONGEST model is whether natural graph problems for which a super-linear number of rounds is required indeed exist. In this paper, we answer this question in the affirmative. That is, our conceptual contribution is that \textbf{there exist super-linearly hard problems in the CONGEST model}. In fact, the lower bounds that we prove in this paper are as high as quadratic in $n$, or quadratic up to logarithmic factors, and hold even for networks of a constant diameter. Our lower bounds also imply linear and near-linear lower bounds for the CLIQUE-BROADCAST model.

We note that high lower bounds for the CONGEST model may be obtained rather artificially, by forcing large inputs and outputs that must be exchanged. However, we emphasize that all the problems for which we show our lower bounds can be reduced to simple decision problems, where each node needs to output a single bit. All inputs to the nodes, if any, consist of edge weights that can be represented by $\mbox{polylog} n$ bits.

Technically, we prove a lower bound of $\Omega(n^2/\log^2{n})$ on the number of rounds required for computing an exact minimum vertex cover, which also extends to computing an exact maximum independent set (Section~\ref{sec:mvc}).
This is in stark contrast to the recent $O(\log\Delta/\log\log\Delta)$-round algorithm of~\cite{Bar-YehudaCS16} for obtaining a $(2+\epsilon)$-approximation to the minimum vertex cover.
Similarly, we give an $\Omega(n^2/\log^2{n})$ lower bound for $3$-coloring a $3$-colorable graph, which extends also for deciding whether a graph is $3$-colorable, and also implies the same hardness for computing the chromatic number $\chi$ or computing a $\chi$-coloring (Section~\ref{sec:coloring}). These lower bounds hold even for randomized algorithms which succeed with high probability.\footnote{We say that an event occurs with high probability (w.h.p) if it occurs with probability $\frac{1}{n^c}$, for some constant $c>0$.}

An immediate question that arises is whether only NP-hard problems are super-linearly hard in the CONGEST model.
In Section~\ref{sec:P}, we provide a negative answer to such a postulate, by showing two simple problems that admit polynomial-time sequential algorithms, but in the CONGEST model require $\Omega(n^2)$ rounds (identical subgraph detection) or $\Omega(n^2/\log{n})$ rounds (weighted cycle detection). The latter also holds for  randomized algorithms, while for the former we show a randomized algorithm that completes in $O(D)$ rounds, providing the strongest possible separation between deterministic and randomized complexities for global problems in the CONGEST model.

Finally, we address the intriguing open question of the complexity of computing exact weighted all-pairs-shortest-paths (APSP) in the CONGEST model.
While the complexity of the unweighted version of APSP is $\Theta(n/\log{n})$, as follows from~\cite{FrischknechtHW12,HuaFQALSJ16}, the complexity of weighted APSP remains largely open, and only recently the first sub-quadratic algorithm was given in~\cite{Elkin17}. With the current state-of-the-art, this problem could be considered as a suspect for having a super-linear complexity in the CONGEST model.
While we do not pin-down the complexity of weighted APSP in the CONGEST model, we provide a truly linear lower bound of $\Omega(n)$ rounds for it, which separates its complexity from that of the unweighted case. Moreover, we argue that it is not a coincidence that we are currently unable to show super-linear lower bound for weighted APSP, by formally proving that the commonly used framework of reducing a $2$-party communication problem to a problem in the CONGEST model cannot provide a super-linear lower bound for weighted APSP, regardless of the function and the graph construction used (Section~\ref{sec:APSP}). This implies that \textbf{obtaining any super-linear lower bound for weighted APSP provably requires a new technique}.

\subsection{The Challenge}

Many lower bounds for the CONGEST model rely on reductions from $2$-party communication problems (see, e.g.,~\cite{AbboudCHK16,FrischknechtHW12,HolzerW12,Nanongkai14,PelegR00,SarmaHKKNPPW12,DruckerKO13,Elkin06,NanongkaiSP11,Censor-HillelKP16}). In this setting, two players, Alice and Bob, are given inputs of $K$ bits and need to a single output a bit according to some given function of their inputs. One of the most common problem for reduction is Set Disjointness, in which the players need to decide whether there is an index for which both inputs are $1$. That is, if the inputs represent subsets of $\set{0,\ldots,K-1}$, the output bit of the players needs to indicate whether their input sets are disjoint. The communication complexity of $2$-party Set Disjointness is known to be $\Theta(K)$~\cite{KushilevitzN:book96}.

In a nutshell, there are roughly two standard frameworks for reducing the $2$-party communication problem of computing a function $f$ to a problem $P$ in the CONGEST model. One of these frameworks works as follows. A graph construction is given, which consists of some fixed edges and some edges whose existence depends on the inputs of Alice and Bob. This graph should have the property that a solution to $P$ over it determines the solution to $f$. Then, given an algorithm $ALG$ for solving $P$ in the CONGEST model, the vertices of the graph are split into two disjoint sets, $V_A$ and $V_B$, and Alice simulates $ALG$ over $V_A$ while Bob simulates $ALG$ over $V_B$. The only communication required between Alice and Bob in order to carry out this simulation is the content of messages sent in each direction over the edges of the cut $C=E(V_A,V_B)$.
Therefore, given a graph construction with a cut of size $|C|$ and inputs of size $K$ for a function $f$ whose communication complexity on $K$ bits is at least $CC(f)$, the round complexity of $ALG$ is at least $\Omega(CC(f)/|C|\log{n})$.

The challenge in obtaining super-linear lower bounds was previously that the cuts in the graph constructions were large compared with the input size $K$. For example, the graph construction for the lower bound for computing the diameter in~\cite{FrischknechtHW12} has $K=\Theta(n^2)$ and $|C|=\Theta(n)$, which gives an almost linear lower bound. The graph construction in~\cite{FrischknechtHW12} for the lower bound for computing a ($3/2-\epsilon$)-approximation to the diameter has a smaller cut of $|C|=\Theta(\sqrt{n})$, but this comes at the price of supporting a smaller input size $K=\Theta(n)$, which gives a lower bound that is roughly a square-root of $n$.

To overcome this difficulty, we leverage the recent framework of~\cite{AbboudCHK16}, which provides a bit-gadget whose power is in allowing a logarithmic-size cut. We manage to provide a graph construction that supports inputs of size $K=\Theta(n^2)$ in order to obtain our lower bounds for minimum vertex cover, maximum independent set and $3$-coloring\footnote{It can also be shown, by simple modifications to our constructions, that these problems require $\Omega(m)$ rounds, for graphs with $m$ edges.}. The latter is also inspired by, and is a simplification of, a lower bound construction for the size of proof labelling schemes~\cite{GoosS16}.

Further, for the problems in P that we address, the cut is as small as $|C|=O(1)$. For one of the problems, the size of the input is such that it allows us to obtain the highest possible lower bound of $\Omega(n^2)$ rounds.

With respect to the complexity of the weighted APSP problem, we show an embarrassingly simple graph construction that extends a construction of~\cite{Nanongkai14}, which leads to an $\Omega(n)$ lower bound. However, we argue that a new technique must be developed in order to obtain any super-linear lower bound for weighted APSP. Roughly speaking, this is because given a construction with a set $S$ of nodes that touch the cut, Alice and Bob can exchange $O(|S|n\log{n})$ bits which encode the weights of all lightest paths from any node in their set to a node in $S$.
Since the cut has $\Omega(|S|)$ edges,
and the bandwidth is $\Theta(\log{n})$,
this cannot give a lower bound of more than $\Omega(n)$ rounds. With some additional work, our proof can be carried over to a larger number of players at the price of a small logarithmic factor, as well as to the second Alice-Bob framework used in previous work (e.g.~\cite{SarmaHKKNPPW12}), in which Alice and Bob do not simulate nodes in a fixed partition, but rather in decreasing sets that partially overlap.
Thus, determining the complexity of weighted APSP requires new tools, which we leave as a major open problem.

\subsection{Additional Related Work}

\noindent\textbf{Vertex Coloring, Minimum Vertex Cover, and Maximum Independent Set:} One of the most central problems in graph theory is vertex coloring, which has been extensively studied in the context of distributed computing (see, e.g.,~\cite{BarenboimEPS16,Barenboim16,BarenboimE11,BarenboimE14,BarenboimEK14,Linial92,EmekPSW14,FraigniaudGIP09,FraigniaudHK16,HarrisSS16,MoscibrodaW08,SchneiderW11,PettieS15,ChungPS14,ChangKP16,ColeV86,Barenboim12} and references therein). The special case of finding a $(\Delta +1)$-coloring, where $\Delta$ is the maximum degree of a node in the network, has been the focus of many of these studies, but is a \emph{local} problem, which can be solved in much less than a sublinear number of rounds.

Another classical problem in graph theory is finding a minimum vertex cover (MVC). In distributed computing, the time complexity of approximating MVC has been addressed in several cornerstone studies~\cite{AstrandFPRSU09,Bar-YehudaCS16,AstrandS10,GrandoniKP08,GrandoniKPS08,KhullerVY94,KoufogiannakisY09,KuhnMW16,PolishchukS09,BarenboimEPS16,HanckowiakKP01,PanconesiR01,KuhnMW06}.

Observe that finding a minimum size vertex cover is equivalent to finding a maximum size independent set. However, these problems are not equivalent in an approximation-preserving way. Distributed approximations for maximum independent set has been studied in~\cite{LenzenW08,CzygrinowHW08,BodlaenderHKK16,BYCHGS17}.

~\\
\noindent\textbf{Distance Computations:} Distance computation problems have been widely studied in the CONGEST model for both weighted and unweighted networks~\cite{AbboudCHK16,FrischknechtHW12,HolzerW12,HolzerPRW14,PelegRT12,HolzerP14,LenzenP15,LenzenP13,Nanongkai14,HuaFQALSJ16,HenzingerKN16}. One of the most fundamental problems of distance computations is computing all pairs shortest paths. For unweighted networks, an upper bound of $O(n/\log n)$ was recently shown by~\cite{HuaFQALSJ16},
matching the lower bound of~\cite{FrischknechtHW12}.
Moreover, the possibility of bypassing this near-linear barrier for any constant approximation factor was ruled out by~\cite{Nanongkai14}. For the weighted case, however, we are still very far from understanding the complexity of APSP, as there is still a huge gap between the upper and lower bounds. Recently, Elkin~\cite{Elkin17} showed an $O(n^{\frac{5}{3}}\cdot\log^{\frac{2}{3}}(n))$ upper bound for weighted APSP, while the previously highest lower bound was the near-linear lower bound of~\cite{Nanongkai14} (which holds also for any $(\poly n)$-approximation factor in the weighted case).

Distance computation problems have also been considered in the CONGESTED-CLIQUE model~\cite{HenzingerKN16,CensorKKLPS15,HolzerP14}, in which the underlying communication network forms a clique. In this model~\cite{CensorKKLPS15} showed that
unweighted APSP, and a $(1+o(1))$-approximation for weighted APSP,
can be computed in $O(n^{0.158})$ rounds.

~\\
\noindent\textbf{Subgraph Detection:}
The problem of finding subgraphs of a certain topology has received a lot of attention in both the sequential and the distributed settings (see, e.g.,~\cite{AbboudL13,WilliamsW13,DahlgaardKS17,JorgensenP14,MarxP14,AbboudLW14,DruckerKO13,DolevLP12,AlonYZ97,CensorKKLPS15} and references therein).
The problems of finding paths of length 4 or 5 with zero weight are also related to other fundamental problems, notable in our context is APSP~\cite{AbboudL13}.

\section{Preliminaries}
\label{sec:preliminaries}
\subsection{Communication Complexity}

In a two-party communication complexity problem~\cite{KushilevitzN:book96}, there is a function $f:\set{0,1}^K\times\set{0,1}^K\to\set{\true,\false}$, and two players, Alice and Bob, who are given two input strings, $x,y\in\set{0,1}^K$, respectively, that need to compute $f(x,y)$. The \emph{communication complexity} of a protocol $\pi$ for computing $f$, denoted $CC(\pi)$, is the maximal number of bits Alice and Bob exchange in $\pi$, taken over all values of the pair $(x,y)$. The \emph{deterministic communication complexity} of $f$, denoted $CC(f)$, is the minimum over $CC(\pi)$, taken over all deterministic protocols $\pi$ that compute $f$.

In a \emph{randomized protocol} $\pi$, Alice and Bob may each use a random bit string. A randomized protocol $\pi$ computes $f$ if the probability, over all possible bit strings, that $\pi$ outputs $f(x,y)$ is at least $2/3$. The \emph{randomized communication complexity} of $f$, $CC^R(f)$, is the minimum over $CC(\pi)$, taken over all randomized protocols $\pi$ that compute $f$.

In the \emph{Set Disjointness} problem ($\disj_K$), the function $f$ is $\disj_K(x,y)$, whose output is $\false$ if there is an index $i\in\set{0,\ldots,K-1}$ such that $x_i=y_i=1$, and $\true$ otherwise. In the \emph{Equality} problem ($\eq_K$), the function $f$ is  $\eq_K(x,y)$, whose output is $\true$ if $x=y$, and $\false$ otherwise.

Both the deterministic and randomized communication complexities of the $\disj_K$ problem are known to be $\Omega(K)$~\cite[Example 3.22]{KushilevitzN:book96}. The deterministic communication complexity of $\eq_K$ is in $\Omega(K)$~\cite[Example 1.21]{KushilevitzN:book96}, while its randomized communication complexity is in $\Theta(\log K)$~\cite[Example 3.9]{KushilevitzN:book96}.

\subsection{Lower Bound Graphs}
To prove lower bounds on the number of rounds necessary in order to solve a distributed problem in the CONGEST model, we use reductions from two-party communication complexity problems. To formalize them we use the following definition.

\begin{definition}(Family of Lower Bound Graphs)\newline
\label{def:family}
Fix an integer $K$, a function $f:\set{0,1}^K\times\set{0,1}^K\to\set{\true,\false}$ and a predicate $P$ for graphs. The family of graphs $\{G_{x,y}=(V,E_{x,y})\mid x,y\in\set{0,1}^K\}$, is said to be a family of \emph{lower bound graphs w.r.t. $f$ and $P$} if the following properties hold:
\begin{enumerate}
  \item[(1)] The set of nodes $V$ is the same for all graphs, and we denote by $V=V_A\dot\cup V_B$ a fixed partition of it;
  \item[(2)] Only the existence or the weight of edges in $V_A\times V_A$ may depend on $x$;
  \item[(3)] Only the existence or the weight of edges in $V_B\times V_B$ may depend on $y$;
  \item[(4)] $G_{x,y}$ satisfies the predicate $P$ iff $f(x,y)=\true$.
\end{enumerate}
\end{definition}
We use the following theorem, which is standard in the context of communication complexity-based lower bounds for the CONGEST model (see, e.g.~\cite{AbboudCHK16,FrischknechtHW12,DruckerKO13,HolzerP14}) Its proof is by a standard simulation argument.
\begin{theorem}
\label{thm: general lb framework}
Fix a function $f:\set{0,1}^K\times\set{0,1}^K\to\set{\true,\false}$ and a predicate $P$. If there is a family $\{G_{x,y}\}$ of lower bound graphs with $C = E(V_A, V_B)$ then any deterministic algorithm for deciding $P$ in the CONGEST model requires $\Omega (CC(f)/\size{C}\log n)$ rounds, and any randomized algorithm for deciding $P$ in the CONGEST model requires $\Omega (CC^R(f)/\size{C}\log n)$ rounds.
\end{theorem}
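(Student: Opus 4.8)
The plan is to run the standard simulation argument: turn a fast CONGEST algorithm for deciding $P$ into a cheap communication protocol for $f$, whose total communication is bounded by the number of rounds times the "capacity" of the cut $C$. Let $ALG$ be a deterministic algorithm that decides $P$ in $R$ rounds, and let $x,y\in\set{0,1}^K$ be the inputs of Alice and Bob. Both players conceptually work with the graph $G_{x,y}$ from the family of lower bound graphs. By properties (2) and (3) of Definition~\ref{def:family}, every edge inside $V_A$ (its existence and weight) is a function of $x$, every edge inside $V_B$ is a function of $y$, and every edge of $C=E(V_A,V_B)$ — being in neither $V_A\times V_A$ nor $V_B\times V_B$ — is fixed, hence common knowledge. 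So Alice can locally reconstruct the subgraph induced on $V_A$ together with all cut edges and their endpoints, and symmetrically for Bob. Alice will simulate $ALG$ on all nodes of $V_A$, Bob on all nodes of $V_B$.

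Next I would carry out the round-by-round simulation, maintaining the invariant that at the start of round $i$ Alice knows the full state of every node in $V_A$ and Bob of every node in $V_B$; this holds at round $1$, where the state of a node is just its local input (incident edge weights and IDs). A node's outgoing messages in a round are a function of its current state, so Alice can compute all messages sent by $V_A$-nodes and Bob all messages sent by $V_B$-nodes. The only messages a simulated node needs but its simulator cannot produce are those arriving along cut edges: Alice needs the messages Bob's nodes send across $C$ into $V_A$, and vice versa. The two players exchange exactly these messages. Since each CONGEST message is $O(\log n)$ bits and there are $\size{C}$ cut edges, each carrying one message in each direction, this costs $O(\size{C}\log n)$ bits per round; afterwards each player updates the states of its own nodes, preserving the invariant. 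After $R$ rounds $ALG$ halts and every node outputs whether $G_{x,y}$ satisfies $P$; Alice reads this bit off any node of $V_A$ (we may assume $V_A\neq\emptyset$), which by property (4) equals $f(x,y)$. The total communication is $O(R\cdot\size{C}\log n)$ bits, so $CC(f)\le O(R\cdot\size{C}\log n)$, i.e. $R=\Omega(CC(f)/(\size{C}\log n))$.

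Finally, for the randomized bound I would repeat the same simulation with a randomized $ALG$ that is correct with high probability: Alice draws the random strings of the $V_A$-nodes and Bob those of the $V_B$-nodes (private randomness suffices, and shared randomness only helps). The induced protocol is randomized, uses $O(R\cdot\size{C}\log n)$ bits, and outputs $f(x,y)$ with the same success probability, which is at least $2/3$; hence $CC^R(f)\le O(R\cdot\size{C}\log n)$ and $R=\Omega(CC^R(f)/(\size{C}\log n))$. The argument is mostly bookkeeping, and the step that needs the most care is the one that makes the per-round cost exactly $O(\size{C}\log n)$ rather than something larger: one must observe that cut edges carry no dependence on $x$ or $y$, so neither player ever has to communicate topology, only message contents crossing $C$. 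This is precisely where the partition conditions of Definition~\ref{def:family} are used, and it is the only place the proof could go wrong if the family were defined carelessly.
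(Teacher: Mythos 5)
Your proof is correct and follows essentially the same simulation argument as the paper: Alice and Bob locally build their halves of $G_{x,y}$ using properties (1)--(3), exchange only the $O(\size{C}\log n)$ bits crossing the cut per round, and read off $f(x,y)$ via property (4). The extra detail you supply (the round-by-round state invariant and the handling of private randomness) is a faithful elaboration of the paper's proof rather than a different route.
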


\begin{proof}
Let $ALG$ be a distributed algorithm in the CONGEST model that decides $P$ in $T$ rounds. Given inputs $x,y \in \set{0,1}^K$ to Alice and Bob, respectively, Alice constructs the part of $G_{x,y}$ for the nodes in $V_A$ and Bob does so for the nodes in $V_B$. This can be done by items (1),(2) and (3) in Definition~\ref{def:family}, and since $\set{G_{x,y}}$ satisfies this definition. Alice and Bob simulate $ALG$ by exchanging the messages that are sent during the algorithm between nodes of $V_A$ and nodes of $V_B$ in either direction. (The messages within each set of nodes are simulated locally by the corresponding player without any communication). Since item (4) in Definition~\ref{def:family} also holds, we have that Alice and Bob correctly output $f(x,y)$ based on the output of $ALG$. For each edge in the cut, Alice and Bob exchange $O(\log{n})$ bits per round. Since there are $T$ rounds and $\size{C}$ edges in the cut, the number of bits exchanged in this protocol for computing $f$ is $O(T\size{C}\log{n})$. The lower bounds for $T$ now follows directly from the lower bounds for $CC(f)$ and $CC^R(f)$.
\end{proof}

In what follows, for each decision problem addressed, we describe a fixed graph construction $G=(V,E)$, which we then generalize to a family of graphs $\{G_{x,y}=(V,E_{x,y})\mid x,y\in\set{0,1}^K\}$, which we show to be a family lower bound graphs w.r.t. to some function $f$ and the required predicate $P$. By Theorem ~\ref{thm: general lb framework} and the known lower bounds for the $2$-party communication problem, we deduce a lower bound for any algorithm for deciding $P$ in the CONGEST model.

~\\
\textbf{Remark:} For our constructions which use the Set Disjointness function as $f$, we need to exclude the possibilities of all-$1$ input vectors. This is for the sake of guaranteeing that the graphs are connected, in order to avoid trivial impossibilities. However, this restriction does not change the asymptotic bounds for Set Disjointness, since computing this function while excluding all-$1$ input vectors can be reduced to computing this function for inputs that are shorter by one bit (by having the last bit be fixed to $0$).

\ifabs{
\section{Near-Quadratic Lower Bound for Minimum Vertex Cover}
\label{sec:nphard}
\label{sec:mvc}
}{
\section{Near-Quadratic Lower Bounds for NP-Hard Problems}
\label{sec:nphard}
\subsection{Minimum Vertex Cover}
\label{sec:mvc}
}
The first near-quadratic lower bound we present is for computing a minimum vertex cover, as stated in the following theorem.
\begin{theorem}\label{thm:VC}
	Any distributed algorithm in the CONGEST model for computing a minimum vertex cover or for deciding whether there is a vertex cover of a given size $M$ requires $\Omega(n^2/\log^2n)$ rounds.
\end{theorem}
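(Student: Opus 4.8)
The plan is to apply the framework of Theorem~\ref{thm: general lb framework}, reducing from Set Disjointness on $K=\Theta(n^2)$ bits (whose deterministic and randomized communication complexities are both $\Omega(K)$) to the predicate $P_M$ of having a vertex cover of size at most $M$, for a threshold $M$ fixed later. Thus it suffices to construct a family of lower bound graphs $\{G_{x,y}\}$ with $\size{V}=\Theta(n)$ and a cut $C=E(V_A,V_B)$ of size only $O(\log n)$: Theorem~\ref{thm: general lb framework} then yields a bound of $\Omega\!\big(CC(\disj_K)/(\size{C}\log n)\big)=\Omega(n^2/\log^2 n)$ rounds for deciding $P_M$, and since any algorithm computing a minimum vertex cover lets every node infer whether a cover of size $M$ exists, the same bound holds for exact minimum vertex cover. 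Keeping the cut logarithmic while the encoded input is quadratic is precisely what the bit-gadget of~\cite{AbboudCHK16} provides, and is where the construction departs from the classical near-linear lower bounds.

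For the construction I would view $x,y\in\{0,1\}^K$ as $n'\times n'$ Boolean matrices with $n'=\Theta(n)$, so $K=n'^2$. The skeleton of $G$ has, on Alice's side, row-nodes $\{a_i\}_{i\in[n']}$ and column-nodes $\{a'_j\}_{j\in[n']}$, and mirror sets $\{b_i\},\{b'_j\}$ on Bob's side; the matrix $x$ is encoded entirely within $V_A$ by adding an edge between $a_i$ and $a'_j$ exactly when $x_{ij}=1$, and $y$ is encoded the same way within $V_B$, so items~(2) and~(3) of Definition~\ref{def:family} hold. The two sides are joined only through two copies of the bit-gadget — one tying each $a_i$ to $b_i$, one tying each $a'_j$ to $b'_j$ — in which a node indexed by $i$ is attached to the $O(\log n)$ gadget vertices selected by $\bin(i)$; the only edges crossing between $V_A$ and $V_B$ are the $O(\log n)$ internal edges of these two gadgets, so $\size{C}=O(\log n)$. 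Finally, fixed-size forcing gadgets (small pendant structures and cliques) are hung on the row- and column-nodes and on the gadget vertices so that in isolation each side has a canonical minimum cover of a prescribed size and so that every $G_{x,y}$ is connected; $M$ is set to the size of the natural cover of the full construction in the disjoint case.

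The heart of the proof is the biconditional that $G_{x,y}$ satisfies $P_M$ if and only if $\disj_K(x,y)=\true$. Assuming $x$ and $y$ are disjoint, one exhibits an explicit cover of size exactly $M$: take the canonical endpoint of each gadget edge and each forcing gadget, and for every $(i,j)$ with $x_{ij}=1$ (respectively $y_{ij}=1$) add $a_i$ and $a'_j$ (respectively $b_i$ and $b'_j$); disjointness is exactly what ensures that no row- or column-node is demanded from both the $x$-side and the $y$-side simultaneously, so all demands fit inside the budget $M$. Conversely, argue the contrapositive: if $x_{ij}=y_{ij}=1$ then the two input edges $\{a_i,a'_j\}$ and $\{b_i,b'_j\}$, together with the gadget paths $a_i\leadsto b_i$ and $a'_j\leadsto b'_j$, form a short closed walk whose parity is engineered to be odd, so that any cover of it exceeds the disjoint-case cost by at least one; one checks that this $+1$ penalty is paid once regardless of how many coordinates witness the intersection, whence the minimum cover has size at least $M+1$ and $P_M$ fails.

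The step I expect to be the main obstacle is exactly this quantitative calibration: choosing the forcing gadgets and the parities of the two bit-gadget paths so that (i) the disjoint-case cover is genuinely minimum and of the intended size $M$; (ii) every intersecting coordinate forces at least one extra vertex into the cover; and (iii) the bit-gadget — whose whole purpose is to compress the cut to $O(\log n)$ — does not itself create a cheaper global cover that would sabotage (ii). Once the construction is verified, connectivity of every $G_{x,y}$ is obtained after excluding the all-$1$ input vectors via the remark following Theorem~\ref{thm: general lb framework} (at no asymptotic cost), the bound applies to randomized algorithms as well since $\disj_K$ has linear randomized communication complexity, and it is a genuine statement about the dependence on $n$ since the construction has constant diameter.
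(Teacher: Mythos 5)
Your overall strategy is the paper's: reduce from $\disj_K$ with $K=\Theta(n^2)$, use the bit-gadget of~\cite{AbboudCHK16} to keep the cut at $O(\log n)$ size, and invoke Theorem~\ref{thm: general lb framework}. The skeleton of your graph (row/column node sets on each side, mirrored, tied together only through two bit-gadgets) also matches the paper's $A_1,A_2,B_1,B_2$ with gadgets $F_S,T_S$. However, the core of the argument --- the equivalence between the cover size and the value of $\disj$ --- does not work as you describe it, and the gap is exactly at the ``quantitative calibration'' step you flagged.

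The problem is your encoding and the resulting count. You add the edge $\{a_i,a'_j\}$ when $x_{ij}=1$ and then propose a cover that includes $a_i$ and $a'_j$ for every such pair. The number of vertices this requires is (at least) the size of a minimum vertex cover of the bipartite graph defined by $x$, which varies arbitrarily with $x$ --- from $O(1)$ for a sparse $x$ to $\Theta(n)$ for a dense one --- and is completely uncorrelated with whether $x$ and $y$ intersect. No fixed threshold $M$ can therefore separate the two cases; your claim that ``disjointness ensures all demands fit inside the budget'' is false, since the demands on Alice's nodes come only from $x$ and those on Bob's nodes only from $y$, so there is no cross-side collision for disjointness to prevent. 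The paper avoids this by two coupled choices you are missing: (i) $A_1,A_2,B_1,B_2$ are \emph{cliques}, so any cover must already contain all but one node of each, pinning the cover size at $4(k-1)+4\log k$ and reducing all freedom to the choice of \emph{which} single node of each clique to omit (with the bit-gadget 4-cycles forcing the omitted index in $A_\ell$ to equal that in $B_\ell$); and (ii) the input is encoded in the \emph{complement} --- an edge $\{a_1^i,a_2^j\}$ is present iff $x_{i,j}=0$ --- so a size-$M$ cover exists iff some pair $(i,j)$ can be simultaneously omitted on both sides, i.e., iff $x_{i,j}=y_{i,j}=1$. (Consequently the paper's predicate is ``cover of size $M$ exists iff $\disj=\false$,'' the reverse of your orientation; that reversal alone would be harmless, but your construction does not establish either direction.) Your fallback ``odd closed walk forces a $+1$ penalty'' argument also does not survive scrutiny: with many intersecting coordinates you get many such cycles, and you give no mechanism ensuring the penalty is bounded or, more importantly, that it is at least one only in the intersecting case.
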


Finding the minimum size of a vertex cover is equivalent to finding the maximum size of a maximum independent set, because a set of nodes is a vertex cover if and only if its complement is an independent set.
Thus, Theorem~\ref{thm:MaxIS} is a direct corollary of Theorem~\ref{thm:VC}.

\begin{theorem}\label{thm:MaxIS}
	Any distributed algorithm in the CONGEST model for computing a maximum independent set or for deciding whether there is an independent set of a given size
requires $\Omega(n^2/\log^2n)$ rounds.
\end{theorem}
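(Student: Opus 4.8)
The plan is to derive Theorem~\ref{thm:MaxIS} as a corollary of Theorem~\ref{thm:VC} and then give the substantive part, a proof of Theorem~\ref{thm:VC} via the lower-bound-graph framework of Definition~\ref{def:family} and Theorem~\ref{thm: general lb framework}. For the corollary: a set $S\subseteq V$ is a vertex cover of $G$ iff $V\setminus S$ is an independent set, so $G$ has a vertex cover of size $\le M$ iff it has an independent set of size $\ge n-M$, and a minimum vertex cover is exactly the complement of a maximum independent set. Hence any CONGEST algorithm for maximum independent set (resp.\ for deciding whether an independent set of a given size $s$ exists) yields one for minimum vertex cover (resp.\ for deciding a vertex cover of size $n-s$) with no extra rounds — each node complements its output bit, and the threshold shifts by the globally known value $n$ — so it suffices to prove Theorem~\ref{thm:VC}.

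\emph{Proof plan for Theorem~\ref{thm:VC}.} I would construct a family of lower bound graphs $\{G_{x,y}\}$ with respect to $f=\disj_K$ with $K=\Theta(n^2)$ and the predicate $P_M$ that holds iff $G$ has a vertex cover of size at most $M$, where $M$ depends only on $n$. Since $CC(\disj_K)=CC^R(\disj_K)=\Omega(K)$, Theorem~\ref{thm: general lb framework} gives an $\Omega(n^2/(\size{C}\log n))$ bound for deterministic and randomized algorithms alike, so the whole game is to make the cut $\size{C}=O(\log n)$ while still encoding $K=\Theta(n^2)$ input bits — exactly the difficulty that the bit-gadget of~\cite{AbboudCHK16} is meant to overcome. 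Concretely, view $x,y$ as subsets of $[N]\times[N]$ with $N=\Theta(n)$. Inside $V_A$, encode $x$ by a bipartite graph between "row" nodes $a_1,\dots,a_N$ and "column" nodes $a'_1,\dots,a'_N$ with $a_i\sim a'_j$ iff $x_{ij}=1$; inside $V_B$, encode $y$ symmetrically on $b_i,b'_j$. Link the two copies using two instances of the bit-gadget of~\cite{AbboudCHK16} — one identifying equal row indices, one identifying equal column indices — each contributing only $O(\log n)$ vertices and $O(\log n)$ cut edges, so $\size{C}=O(\log n)$. The key design requirement is that a pair $(i,j)$ with $x_{ij}=y_{ij}=1$ (a witness that the sets intersect) closes up, together with the fixed gadget edges, into a small subgraph that forces exactly one more vertex into every vertex cover than a base value $M_0=M_0(N)$, whereas when the sets are disjoint there is an explicit vertex cover of size $M_0$. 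Padding with $O(1)$-cost connected gadgets to reach exactly $n$ vertices and keep the graph connected (as in the standard reduction that excludes the all-$1$ inputs for $\disj$), and setting $M=M_0$, we get that $G_{x,y}$ satisfies $P_M$ iff $\disj_K(x,y)=\true$. Together with the fixed partition $V=V_A\dot\cup V_B$ and the fact that the only input-dependent edges lie in $V_A\times V_A$ (for $x$) and in $V_B\times V_B$ (for $y$), this verifies conditions (1)--(4) of Definition~\ref{def:family}; Theorem~\ref{thm: general lb framework} then yields $\Omega(n^2/\log^2 n)$ for deciding $P_M$, and since any algorithm computing a minimum vertex cover in particular decides $P_M$, the same bound applies to it — and, by the first paragraph, to maximum independent set.

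\emph{The main obstacle.} It is the gadget design and its analysis. Keeping $\size{C}=O(\log n)$ forces "matching indices" to be detected through bit-gadgets rather than by direct $a_i$--$b_i$ edges (which would cost $\Omega(N)$ cut edges), and one must arrange that an intersection witness can manifest only as a subgraph whose covering cost is provably $M_0+1$, no matter how the rest of the graph is covered. Establishing the "if the sets intersect then every vertex cover has size $\ge M_0+1$" direction — a statement about all covers, as opposed to merely exhibiting a cheap cover in the disjoint case — is the delicate combinatorial step, and it is where the precise wiring of the two bit-gadgets into the bipartite encodings has to be chosen so that the forced subgraph genuinely cannot be covered "for free" by vertices already needed elsewhere.
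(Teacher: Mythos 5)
Your reduction from maximum independent set to minimum vertex cover is exactly the paper's: complementation of the cover, shift of the threshold by $n$, no extra rounds. The gap is in your plan for Theorem~\ref{thm:VC}, and it is not just the "delicate combinatorial step" you flag at the end --- the polarity of your encoding is wrong in a way that cannot be repaired by cleverer wiring. You encode $x_{ij}=1$ as a \emph{present} edge $a_i\sim a'_j$ and want the predicate "VC $\le M_0$" to hold iff the sets are \emph{disjoint}, with $M_0$ depending only on $n$. But then the vertex cover number of the $V_A$-side bipartite graph already varies with $x$ among disjoint instances (it equals the maximum matching size of that bipartite graph, anywhere from $0$ to $\Theta(N)$), so no input-independent threshold $M_0$ can separate disjoint from intersecting inputs. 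Moreover, a single witness edge created by $x_{ij}=y_{ij}=1$ cannot "force one more vertex into every cover": its endpoints are generic nodes that a cover is free to include anyway, so you cannot certify that all covers pay extra.

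The paper resolves both issues by inverting the encoding and the direction of the equivalence. The sets $A_1,A_2,B_1,B_2$ are made \emph{cliques}, so every cover must contain $k-1$ nodes of each and can exclude at most one node per clique; the bit-gadgets (the $4$-cycles on $F_S,T_S$ across the cut, plus the $\bin(\cdot)$ edges) force the excluded indices to agree between the $A$-side and the $B$-side. The input is encoded in the \emph{complement}: an edge $(a_1^i,a_2^j)$ is added iff $x_{i,j}=0$. Excluding $a_1^i$ and $a_2^j$ simultaneously is then legal precisely when $x_{i,j}=1$, so a cover of size exactly $M=4(k-1)+4\log k$ exists iff some $(i,j)$ has $x_{i,j}=y_{i,j}=1$, i.e., iff $\disj(x,y)=\false$ (Lemma~\ref{mainLemmaVC}); the solver's freedom to \emph{choose} which node of each clique to exclude is exactly what searches for the intersection witness. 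Deciding the predicate or its negation is equally hard, so the reversed correspondence costs nothing in Theorem~\ref{thm: general lb framework}. Your cut analysis ($O(\log n)$ edges, all through the bit-gadgets) and the resulting $\Omega(K/\log^2 n)=\Omega(n^2/\log^2 n)$ bound are otherwise in line with the paper.
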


Observe that a lower bound for deciding whether there is a vertex cover of some given size $M$ or not implies a lower bound for computing a minimum vertex cover. This is because computing the size of a given subset of nodes can be easily done in $O(D)$ rounds using standard tools. Therefore, to prove Theorem~\ref{thm:VC} it is sufficient to prove its second part. We do so by describing a family of lower bound graphs with respect to the Set Disjointness function and the predicate $P$ that says that the graph has a vertex cover of size $M$. We begin with describing the fixed graph construction $G=(V,E)$ and then define the family of lower bound graphs and analyze its relevant properties.

~\\
\noindent\textbf{The fixed graph construction:}
\begin{figure}[t]
	\begin{center}
		\includegraphics[scale=0.45]{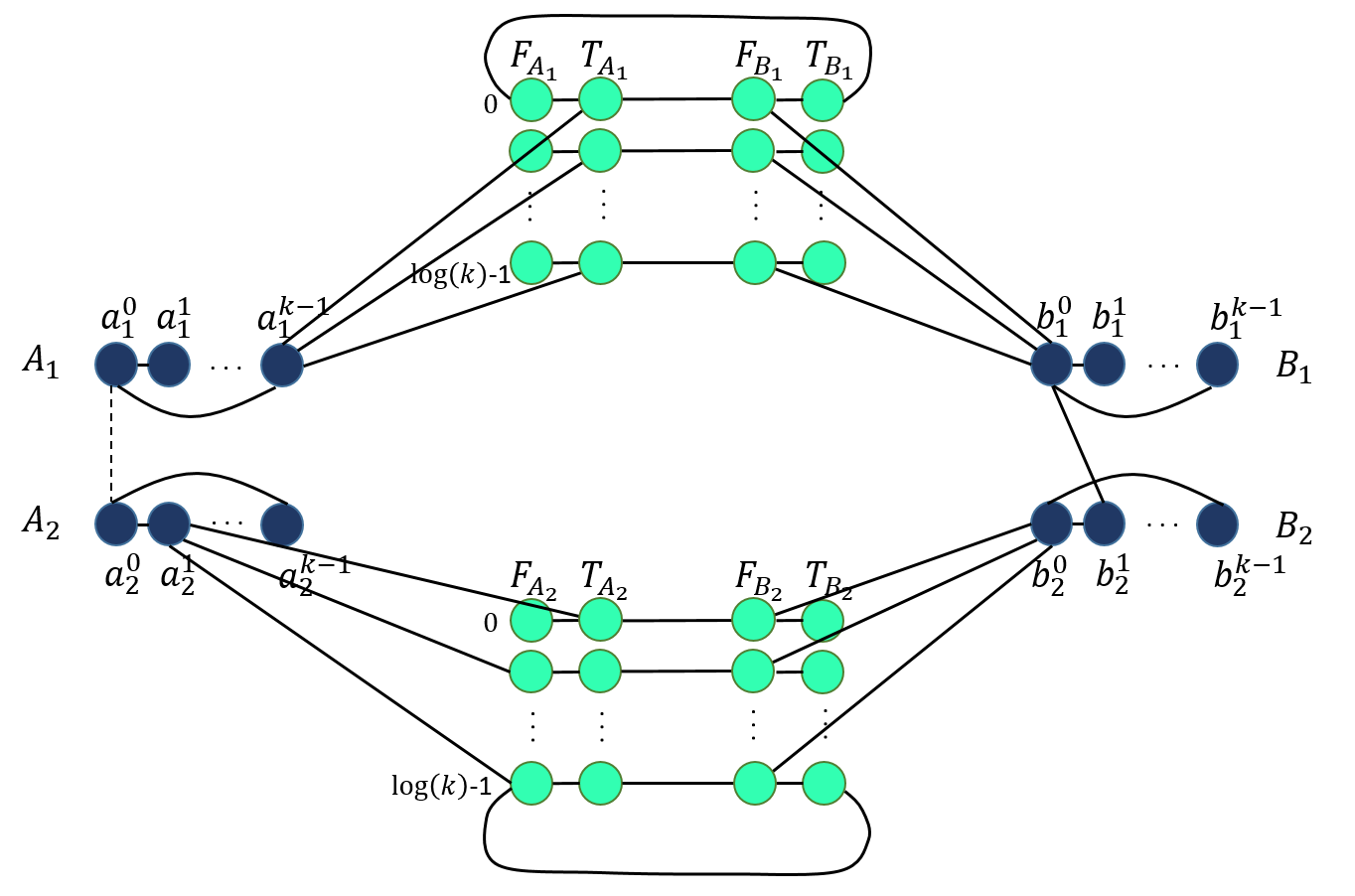}
	\end{center}
	\caption{{\small The family of lower bound graphs for deciding the size of a vertex cover, with many edges omitted for clarity. The node $a^{k-1}_1$ is connected to all the nodes in $T_{A_1}$, and $a^1_2$ is connected to $t^0_{A_2}$ and to all the nodes in $F_{A_2}\setminus \{f^0_{A_2}\}$. Examples of edges from  $b^0_1$ and $b^{0}_2$ to the bit-gadgets are also given. An additional edge, which is among the edges corresponding to the strings $x$ and $y$, is $\{b^0_1, b^1_2\}$, while the edge $\{a^0_1, a^0_2\}$ does not exist. Here, $x_{0,0}=1$ and $y_{0,1}=0$.}}
	\label{fig: consVC}
\end{figure}
Let $k$ be a power of $2$. The fixed graph (Figure~\ref{fig: consVC}) consists of four cliques of size $k$: $A_1=\{a^i_1\mid 0\leq i\leq k-1\}$, $A_2=\{a^i_2\mid 0\leq i\leq k-1\}$, $B_1=\{b^i_1\mid 0\leq i\leq k-1\}$ and $B_2=\{b^i_2\mid 0\leq i\leq k-1\}$.
In addition, for each set $S\in \set{A_1,A_2,B_1,B_2}$,
there are two corresponding sets of nodes of size $\log k$, denoted $F_{S}=\{f^h_{S}\mid 0\leq h\leq \log k-1\}$ and $T_{S}=\{t^h_{S}\mid 0\leq h\leq \log k -1\}$.
The latter are called \emph{bit-gadgets} and their nodes are \emph{bit-nodes}.

The bit-nodes are partitioned into $2\log k$ $4$-cycles: for each $h\in \set{0,\ldots,\log k-1}$ and $\ell\in\set{1,2}$,
we connect the $4$-cycle $(f^h_{A_\ell},t^h_{A_\ell},f^h_{B_\ell},t^h_{B_\ell})$.
Note that there are no edges between pairs of nodes denoted $f^h_S$,
or between pairs of nodes denoted  $t^h_S$.

The nodes of each set $S\in \set{A_1,A_2,B_1,B_2}$ are connected to nodes in the corresponding set of bit-nodes, according to their binary representation, as follows.
Let $s^i_\ell$ be a node in a set $S\in \{A_1,A_2,B_1,B_2\}$, i.e.\ $s\in \set{a,b}$, $\ell\in\set{1,2}$ and $i\in\set{0,\ldots,k-1}$, and let $i_h$ denote the bit number $h$ in the binary representation of $i$.
For such a node $s^i_\ell$
define $\bin(s^i_\ell) = \set{f^h_S\mid i_h=0}\cup \set{t^h_S\mid i_h=1}$,
and connect $s^i_\ell$ by an edge to each of the nodes in $\bin(s^i_\ell)$.
The next two claims address the basic properties of vertex covers of $G$.

\begin{claim}\label{ObSize}
Any vertex cover of $G$ must contain at least $k-1$ nodes from each of the clique $A_1,A_2,B_1$ and $B_2$,
and at least $4\log k$ bit-nodes.
\end{claim}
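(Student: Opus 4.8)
The plan is to handle the two types of structure separately: the four $k$-cliques and the $2\log k$ four-cycles formed by the bit-nodes.

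\textbf{The cliques.} First I would recall the elementary fact that any vertex cover of a clique on $k$ vertices must contain at least $k-1$ of its vertices: if two vertices $u,v$ of the clique were both excluded, the edge $\{u,v\}$ (which is present, since the clique is complete) would be uncovered. Since $A_1, A_2, B_1, B_2$ are each cliques of size $k$ in $G$, and these vertex sets are pairwise disjoint, any vertex cover of $G$ must contain at least $k-1$ nodes from each of them, contributing at least $4(k-1)$ nodes in total, all distinct.

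\textbf{The bit-gadgets.} Next I would observe that the bit-nodes are partitioned into exactly $2\log k$ vertex-disjoint $4$-cycles, namely $(f^h_{A_\ell}, t^h_{A_\ell}, f^h_{B_\ell}, t^h_{B_\ell})$ for each $h \in \{0,\dots,\log k - 1\}$ and $\ell \in \{1,2\}$. A $4$-cycle has no vertex cover of size smaller than $2$ (a single vertex covers only two of its four edges), so each such cycle forces at least $2$ of its nodes into any vertex cover. Because the $2\log k$ cycles are vertex-disjoint, these contributions are again all distinct, giving at least $2 \cdot 2\log k = 4\log k$ bit-nodes in any vertex cover of $G$.

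\textbf{Combining.} Finally, since the clique nodes and the bit-nodes are disjoint sets of vertices, the two lower bounds add, and any vertex cover of $G$ contains at least $k-1$ nodes from each of $A_1, A_2, B_1, B_2$ and at least $4\log k$ bit-nodes, as claimed. I do not anticipate a genuine obstacle here; the only point requiring a moment's care is to confirm from the construction that the bit-node $4$-cycles are genuinely vertex-disjoint and that no bit-node also lies in one of the cliques, which is immediate from the definitions of the sets $A_\ell, B_\ell, F_S, T_S$.
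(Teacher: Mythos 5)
Your proposal is correct and follows essentially the same argument as the paper: each $k$-clique forces $k-1$ of its vertices into any cover, and each of the $2\log k$ vertex-disjoint bit-node $4$-cycles forces $2$ of its vertices, giving $4\log k$ bit-nodes. The only difference is that you spell out the disjointness bookkeeping, which the paper leaves implicit.
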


\begin{proof}
In order to cover all the edges of each if the cliques on $A_1,A_2,B_1$ and $B_2$, any vertex cover must contain at least $k-1$ nodes of the clique.
For each $h\in\set{0,\ldots, \log k -1}$ and $\ell\in \{1,2\}$,
in order to cover the edges of the 4-cycle $(f^h_{A_\ell},t^h_{A_\ell},f^h_{B_\ell},t^h_{B_\ell})$,
any vertex cover must contain at least two of the cycle nodes.
\end{proof}
	
\begin{claim}
\label{claim:VC relating AB}
If $U\subseteq V$ is a vertex cover of $G$ of size $4(k-1)+4\log k$,
then there are two indices $i,j\in\set{0,\ldots,k-1}$
such that $a^i_1,a^j_2,b^i_1,b^j_2$ are not in $U$.
\end{claim}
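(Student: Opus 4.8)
The plan is to exploit the tightness of the size bound. By Claim~\ref{ObSize} every vertex cover of $G$ has size at least $4(k-1)+4\log k$, and this lower bound is the sum of contributions from the four cliques (at least $k-1$ vertices each) and from the $2\log k$ disjoint $4$-cycles $(f^h_{A_\ell},t^h_{A_\ell},f^h_{B_\ell},t^h_{B_\ell})$ (at least $2$ vertices each), over pairwise disjoint vertex sets. So a vertex cover $U$ of size exactly $4(k-1)+4\log k$ must meet every one of these bounds with equality: it contains exactly $k-1$ nodes of each clique — hence exactly one node of each clique is \emph{absent} from $U$ — and exactly two nodes of each $4$-cycle.

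Next I would record the two elementary facts this forces. First, a size-$2$ vertex cover of a $4$-cycle must consist of two non-adjacent vertices, so for each $h$ and $\ell$ the set $U$ contains either $\{f^h_{A_\ell},f^h_{B_\ell}\}$ or $\{t^h_{A_\ell},t^h_{B_\ell}\}$, and nothing else from that cycle. Second, if $a^p_\ell$ denotes the unique node of $A_\ell$ missing from $U$, then every edge incident to it must be covered by its other endpoint; its edges into the bit-gadgets go exactly to the nodes of $\bin(a^p_\ell)$, so $\bin(a^p_\ell)\subseteq U$, i.e.\ $f^h_{A_\ell}\in U$ when $p_h=0$ and $t^h_{A_\ell}\in U$ when $p_h=1$. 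The same statement holds for the unique missing node $b^q_\ell$ of $B_\ell$, giving $\bin(b^q_\ell)\subseteq U$.

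The heart of the argument then chains these facts together on a fixed side $\ell$. For each $h$, the choice of diagonal in the cycle $(h,\ell)$ is pinned down by a single bit: if $p_h=0$ then $f^h_{A_\ell}\in U$, and since this is one of only two cycle-nodes in $U$ it forces the diagonal $\{f^h_{A_\ell},f^h_{B_\ell}\}$, so $f^h_{B_\ell}\in U$ and $t^h_{B_\ell}\notin U$; symmetrically $p_h=1$ gives $t^h_{B_\ell}\in U$ and $f^h_{B_\ell}\notin U$. Thus $f^h_{B_\ell}\in U \iff p_h=0$. But $\bin(b^q_\ell)\subseteq U$ says that $f^h_{B_\ell}\in U$ whenever $q_h=0$ and $t^h_{B_\ell}\in U$ whenever $q_h=1$; comparing the two descriptions of which $B$-bit-nodes lie in $U$ forces $p_h=q_h$ for every $h$, hence $p=q$. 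Writing $i$ for this common index when $\ell=1$ and $j$ for it when $\ell=2$ — the two sides are independent since the cycles for $\ell=1$ are disjoint from those for $\ell=2$ — yields $a^i_1,b^i_1,a^j_2,b^j_2\notin U$, as claimed.

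I expect the only point needing care to be this middle chaining: one must use that $U$ contains \emph{exactly} two nodes of each $4$-cycle in order to rule out the non-diagonal pairs $\{f^h_{A_\ell},t^h_{A_\ell}\}$ and $\{f^h_{A_\ell},t^h_{B_\ell}\}$ (each of which leaves one cycle edge uncovered), so that the forced presence of a single bit-node genuinely determines its diagonal partner on the $B$-side. Everything else is a direct unwinding of the definitions of $\bin$ and of the $4$-cycles.
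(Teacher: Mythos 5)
Your proof is correct and takes essentially the same route as the paper's: both use the tightness of the count to conclude that exactly one node of each clique is missing and exactly two bit-nodes of each $4$-cycle are present, force $\bin$ of each missing clique-node into $U$, and then use the $4$-cycle structure to match the indices on each side $\ell$. The paper phrases the final step as exhibiting an uncovered cross edge $(f^h_{A_1},t^h_{B_1})$ or $(t^h_{A_1},f^h_{B_1})$ when $i_h\neq i'_h$, which is just the contrapositive of your ``forced diagonal'' argument.
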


\begin{proof}
By Claim~\ref{ObSize},
$U$ must contain $k-1$ nodes from each clique $A_1,A_2,B_1$ and $B_2$, and $4\log k$ bit-nodes,
so it must not contain one node from each clique.
Let $a^i_1,a^j_2,b^{i'}_1,b^{j'}_2$ be the nodes in $A_1, A_2,B_1,B_2$ which are not in $U$, respectively.
To cover the edges connecting $a^i_1$ to $\bin(a^i_1)$, $U$ must contain all the nodes of $\bin(a^i_1)$, and similarly, $U$ must contain all the nodes of $\bin(b^{i'}_1)$.
If $i\neq i'$ then there is an index $h\in\set{0,\ldots,\log k-1}$ such that $i_h\neq i'_h$,
so one of the edges $(f^h_{A_1},t^h_{B_1})$ or $(t^h_{A_1},f^h_{B_1})$ is not covered by $U$.
Thus, it must hold that $i=i'$.
A similar argument shows $j=j'$.
\end{proof}

\noindent\textbf{Adding edges corresponding to the strings $x$ and $y$:}
Given two binary strings $x,y\in\set{0,1}^{k^2}$,
we augment the graph $G$ defined above
with additional edges, which defines $G_{x,y}$.
Assume that $x$ and $y$ are indexed by pairs of the form $(i,j)\in \set{0,\ldots,k-1}^2$.
For each such pair $(i,j)$
we add to $G_{x,y}$ the following edges.
If $x_{i,j}=0$, then we add an edge between the nodes $a_1^i$ and $a_2^j$, and if $y_{i,j}=0$ then we add an edge between the nodes $b_1^i$ and $b_2^j$.
To prove that $\set{G_{xy}}$ is a family of lower bound graphs, it remains to prove the next lemma.

\begin{lemma}\label{mainLemmaVC}
The graph $G_{x,y}$ has a vertex cover of cardinality $M=4(k-1)+4\log k$
iff $\disj(x,y)=\false$.
\end{lemma}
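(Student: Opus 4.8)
The plan is to prove the two directions of the equivalence separately: for ``$\Leftarrow$'' I would exhibit an explicit vertex cover of size $M$, and for ``$\Rightarrow$'' I would invoke Claims~\ref{ObSize} and~\ref{claim:VC relating AB} together with the rule by which the $x$- and $y$-edges were added.

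For the direction $\disj(x,y)=\false \Rightarrow G_{x,y}$ has a vertex cover of size $M$: I would fix a pair $(i,j)$ with $x_{i,j}=y_{i,j}=1$, which exists since $\disj(x,y)=\false$; then, by the construction of the $x$- and $y$-edges, neither $\{a^i_1,a^j_2\}$ nor $\{b^i_1,b^j_2\}$ is an edge of $G_{x,y}$. I would let $U$ contain all clique nodes except the four nodes $a^i_1,a^j_2,b^i_1,b^j_2$, and, for every $h\in\set{0,\dots,\log k-1}$, the pair $\set{f^h_{A_1},f^h_{B_1}}$ or $\set{t^h_{A_1},t^h_{B_1}}$ according to whether $i_h=0$ or $i_h=1$, and likewise the pair in the $(A_2,B_2)$ $4$-cycle chosen according to the bit $j_h$. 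Since each of the $2\log k$ $4$-cycles contributes two bit-nodes, $\size{U}=4(k-1)+4\log k=M$. Checking that $U$ is a vertex cover goes edge-class by edge-class: clique edges are covered because at most one node is missing from each clique; each chosen pair consists of two non-adjacent vertices of its $4$-cycle, hence covers that cycle; for a $\bin$-edge incident to a clique node $s$, either $s\in U$, or $s$ is one of the four omitted nodes, in which case all of $\bin(s)$ was placed into $U$ by the choice above; and an edge $\set{a^{i'}_1,a^{j'}_2}$ that is present because $x_{i',j'}=0$ has an endpoint in $U$ unless $(i',j')=(i,j)$, which cannot happen since that edge is absent, and symmetrically for $y$-edges.

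For the converse, let $U$ be a vertex cover of $G_{x,y}$ of size $M$. Since $G$ is a subgraph of $G_{x,y}$, $U$ is also a vertex cover of $G$ of size $4(k-1)+4\log k$, so Claim~\ref{claim:VC relating AB} yields indices $i,j$ with $a^i_1,a^j_2,b^i_1,b^j_2\notin U$. The potential edge $\set{a^i_1,a^j_2}$ has both endpoints outside the vertex cover $U$, so it is not present in $G_{x,y}$; by the way $x$-edges were added, this forces $x_{i,j}=1$, and symmetrically $y_{i,j}=1$. Hence there is a coordinate $(i,j)$ on which both strings are $1$, i.e.\ $\disj(x,y)=\false$.

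The bulk of the argument is the routine edge-by-edge verification in the first direction; the one delicate point — and the part I expect to be the real content — is arranging the two bit-nodes chosen in each $4$-cycle so that they simultaneously cover the cycle and equal $\bin(a^i_1)\cup\bin(b^i_1)$ (respectively $\bin(a^j_2)\cup\bin(b^j_2)$). This works precisely because the \emph{same} index $i$ controls both $a^i_1$ and $b^i_1$, so their $\bin$-sets select opposite vertices in each $(A_1,B_1)$ $4$-cycle, and likewise for $j$; this is the structural mirror of the matching conclusion in Claim~\ref{claim:VC relating AB}.
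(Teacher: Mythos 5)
Your proposal is correct and follows essentially the same route as the paper: the same explicit cover $U$ (your per-cycle pair selection is exactly $\bin(a^i_1)\cup\bin(b^j_2)\cup\bin(b^i_1)\cup\bin(a^j_2)$), the same edge-class verification, and the same use of Claim~\ref{claim:VC relating AB} for the converse.
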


\begin{proof}
For the first implication, assume that $\disj(x,y)=\false$
and let $i,j\in\set{0,\ldots,k-1}$ be such that $x_{i,j}=y_{i,j}=1$.
Note that in this case $a^i_1$ is not connected to $a^j_2$,
and $b^i_1$ is not connected to $b^j_2$.
We define a set $U \subseteq V$ as the union of the two sets of nodes $(A_1\setminus \{a^i_1\})\cup (A_2\setminus \{a^j_2\})\cup (B_1\setminus \{b^i_1\})\cup (B_2\setminus \{b^j_2\})$ and $\bin(a^i_1)\cup \bin(a^j_2)\cup\bin(b^i_1)\cup\bin(b^j_2)$,
and show that $U$ is a vertex cover of $G_{x,y}$.

First, $U$ covers all the edges inside the cliques $A_1,A_2,B_1$ and $B_2$,
as it contains $k-1$ nodes from each clique.
These nodes also cover all the edges connecting nodes in $A_1$ to nodes in $A_2$ and all the edges connecting nodes in $B_1$ to nodes in $B_2$.
Furthermore, $U$ covers any edge connecting some node
$u\in (A_1\setminus \{a^i_1\})\cup (A_2\setminus \{a^j_2\})\cup (B_1\setminus \{b^i_1\})\cup (B_2\setminus \{b^j_2\})$ with the bit-gadgets.
For each node $s\in a^i_1,a^j_2,b^i_1,b^j_2$,
the nodes $\bin(s)$ are in $U$,
so $U$ also cover the edges connecting $s$ to the bit gadget.
Finally, $U$ covers all the edges inside the bit gadgets,
as from each $4$-cycle
$(f^h_{A_\ell},t^h_{A_\ell},f^h_{B_\ell},t^h_{B_\ell})$
it contains two non-adjacent nodes:
if $i_h=0$ then $f^h_{A_1},f^h_{B_1} \in U$
and otherwise $t^h_{A_1},t^h_{B_1} \in U$,
and
if $j_h=0$ then $f^h_{A_2},f^h_{B_2} \in U$
and otherwise $t^h_{A_2},t^h_{B_2} \in U$.
We thus have that $U$ is a vertex cover of size $4(k-1)+4\log k$,
as needed.
	
For the other implication, let $U\subseteq V$ be a vertex cover of $G_{x,y}$ of size $4(k-1)+4\log k$.
As the set of edges of $G$ is contained in the set of edges of $G_{x,y}$,
$U$ is also a cover of $G$,
and by Claim~\ref{claim:VC relating AB}
there are indices $i,j\in\set{0,\ldots,k-1}$
such that $a^i_1,a^j_2,b^i_1,b^j_2$ are not in $U$.
Since $U$ is a cover, the graph does not contain the edges
$(a^i_1, a^j_2)$ and $(b^i_1, b^j_2)$,
so we conclude that $x_{i,j}=y_{i,j} =1$,
which implies that $\disj(x,y)=\false$.
\end{proof}

Having constructed the family of lower bound graphs, we are now ready to prove Theorem~\ref{thm:VC}.
\begin{proofof}{Theorem~\ref{thm:VC}} To complete the proof of Theorem~\ref{thm:VC}, we divide the nodes of $G$ (which are also the nodes of $G_{x,y}$) into two sets. Let $V_A = A_1 \cup A_2 \cup F_{A_1} \cup T_{A_1} \cup F_{A_2} \cup T_{A_2}$ and $V_B=V\setminus V_A$.
Note that $n\in \Theta(k)$, and thus $K=|x|=|y|=\Theta(n^2)$.
Furthermore, note that the only edges in the cut $E(V_A, V_B)$
are the edges between nodes in
$\{F_{A_1} \cup T_{A_1}\cup F_{A_2} \cup T_{A_2}\}$ and nodes in $\{F_{B_1} \cup T_{B_1}\cup F_{B_2} \cup T_{B_2}\}$,
which are in total $\Theta(\log n)$ edges. Since Lemma~\ref{mainLemmaVC} shows that $\{G_{x,y}\}$ is a family of lower bound graphs, we can apply Theorem~\ref{thm: general lb framework} on the above partition to deduce that because of the lower bound for Set Disjointness, any algorithm in the CONGEST model for deciding whether a given graph has a cover of cardinality $M=4(k-1)+4\log k$ requires at least $\Omega(K/\log^2(n))=\Omega(n^2/\log^2(n))$ rounds.
\end{proofof}

\newcommand{\ColoringSection}{

Given a graph $G$, we denote by $\chi(G)$ the minimal number of colors in a proper vertex-coloring of $G$.
In this section we consider the problems of
coloring a graph in $\chi$ colors,
computing $\chi$
and approximating it.
We prove the next theorem.

\begin{theorem}
\label{thm: coloring veriants lb}
Any distributed algorithm in the CONGEST model
that colors a $\chi$-colorable graph $G$ in $\chi$ colors
or compute $\chi(G)$
requires $\Omega(n^2/\log^2n)$ rounds.

Any distributed algorithm in the CONGEST model
that decides if $\chi(G)\leq c$ for a given integer $c$,
requires $\Omega((n-c)^2/(c\log n+\log^2n))$ rounds.
\end{theorem}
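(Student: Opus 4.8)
The approach is to build a family of lower bound graphs with respect to Set Disjointness and the predicate "$G$ is $3$-colorable", mirroring the structure already used for vertex cover but replacing the clique-plus-bit-gadget machinery with a coloring-forcing analogue. The skeleton should be: for each of four "index gadgets" $A_1, A_2, B_1, B_2$, we want $k$ mutually exclusive choices encoded so that a proper $3$-coloring must "select" one index $i$ from the $A$-side gadgets and one index $j$ from the $B$-side gadgets; the bit-gadgets (again $2\log k$ small constant-size gadgets, one per bit position, wired by binary representation) enforce that the index selected on the $A_1$ side equals that on the $B_1$ side, and likewise for the index-$2$ gadgets, exactly as Claim \ref{claim:VC relating AB} does via the $4$-cycles. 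Then the input-dependent edges: if $x_{i,j}=1$ we add a gadget between the $A_1$-node for $i$ and the $A_2$-node for $j$ that is $3$-colorable only if those two nodes did not both get "selected" (i.e. a path/triangle forcing a conflict), and symmetrically for $y_{i,j}=1$ on the $B$ side. The key lemma, analogous to Lemma \ref{mainLemmaVC}, is then: $G_{x,y}$ is $3$-colorable iff there is no coordinate $(i,j)$ with $x_{i,j}=y_{i,j}=1$, i.e. iff $\disj(x,y)=\true$. Since the paper remarks this construction is "a simplification of a lower bound construction for the size of proof labelling schemes~\cite{GoosS16}", the cleanest route is to adapt that gadgetry; the constant-diameter property and the $O(\log n)$-size cut follow just as in the vertex-cover proof by putting all "index-$1$-or-$2$ on the $A$-side" nodes and their bit-gadgets into $V_A$.

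For the first statement, once the $3$-colorability lower bound is in hand, the standard padding trick gives the rest: a graph $G$ is $3$-colorable iff $G$ together with a disjoint $K_{\chi-3}$ ... more precisely, to force the chromatic-number / $\chi$-coloring variants, attach a clique of the right size joined appropriately so that $\chi(G_{x,y}) = 3$ when $\disj(x,y)=\true$ and $\chi(G_{x,y}) = 4$ otherwise (e.g. an extra node adjacent to a triangle inside one of the input gadgets, or make the "conflict" gadget a $K_4$-minus-an-edge so a collision forces a fourth color). An algorithm computing $\chi$ or producing a $\chi$-coloring then distinguishes the two cases, and since $n = \Theta(k)$ and $K = \Theta(n^2)$, Theorem \ref{thm: general lb framework} with $|C| = \Theta(\log n)$ yields $\Omega(n^2/\log^2 n)$.

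For the second statement — deciding $\chi(G) \le c$ for arbitrary $c$ — the plan is to reuse the same $\Theta(c)$-vertex core construction but inflate it: replace each single "selection" vertex or bit-gadget vertex by a clique of size $\Theta(c)$, or more simply take the $3$-colorability core on $\Theta(c)$ vertices, add a disjoint clique $K_{c-3}$ fully joined to a constant-size portion of the core, and then add $n - \Theta(c)$ isolated-ish padding vertices (e.g. an independent set, or a path) to bring the total to $n$; the padding contributes nothing to $\chi$ but forces the "true" instance size to be $n$. Now the input strings have length $K = \Theta(c^2)$, the cut is $\Theta(c\log c + \log^2 c)$-ish — actually, with the $\Theta(c)$-blowup of each bit-gadget node the cut becomes $\Theta(c \log c + \log^2 c)$, which matches the $c\log n + \log^2 n$ denominator in the statement (absorbing $\log c \le \log n$) — and Theorem \ref{thm: general lb framework} gives $\Omega(K / (|C| \log n)) = \Omega(c^2/((c\log n + \log^2 n)))$. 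To get the stated $(n-c)^2$ numerator rather than $c^2$, the right parameter choice is instead to let the Set Disjointness instance have size $\Theta((n-c)^2)$: keep $\Theta(n-c)$ "index" vertices per gadget and use the size-$c$ clique purely as the color-count amplifier, so that the free parameter governing $K$ is $n-c$ while $c$ only enters through the clique. I would set it up so $k = \Theta(n - c)$, giving $K = \Theta((n-c)^2)$ and a cut of $\Theta(c \log n + \log^2 n)$ (the clique adds $\Theta(c)$ cut-edges per bit-gadget, or $\Theta(c\log n)$ total, plus the $\Theta(\log^2 n)$ from the bit-gadgets themselves).

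The main obstacle I anticipate is designing the input-dependent "conflict" gadget correctly: it must be $3$-colorable (or, for the second part, $c$-colorable) in every non-conflicting configuration of its two endpoints' colors, yet un-colorable precisely when both endpoints are in the "selected" state — and it must do this using only edges inside $V_A \times V_A$ (for $x$) or $V_B \times V_B$ (for $y$), so that the partition hypothesis of Definition \ref{def:family} is respected and the cut stays small. Getting a gadget that is simultaneously forgiving in three of the four endpoint-color-pair cases and fatal in the fourth, while keeping it constant-size, is the delicate combinatorial core; everything else (the binary-encoding bit-gadgets, the diameter bound, the partition, the final invocation of Theorem \ref{thm: general lb framework}) is a routine adaptation of the vertex-cover argument already given.
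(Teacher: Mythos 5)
Your high-level skeleton matches the paper's: drop the cliques from the vertex-cover construction, keep the bit-gadgets to force index agreement, make "colored by a distinguished color $c_0$" play the role of "left out of the cover," and wire the inputs so that 3-colorability encodes Set Disjointness. But the proposal leaves unresolved exactly the three pieces that constitute the actual proof, and you flag them yourself as "the delicate combinatorial core." (1) \emph{Palette agreement across the cut:} your index-equality argument via the 4-cycles needs Alice's and Bob's sides to agree on which color class is the "selected" color, since a coloring is only defined up to permutation. The paper achieves this with two triangles $C^0_a,C^1_a,C^2_a$ and $C^0_b,C^1_b,C^2_b$ joined by the six edges $(C^i_a,C^j_b)$, $i\neq j$, forcing $C^i_a$ and $C^i_b$ to share a color; without some such gadget Claim~\ref{claim: 3col relating alices and bobs c0 nodes} has no meaning. (2) \emph{Forcing a selection:} you assert that each of $A_1,A_2,B_1,B_2$ must contain a $c_0$-colored node but give no mechanism; the paper gets this from the auxiliary paths $\singlebar S,\doublebar S$ and a parity argument on an even-length path whose endpoints are both adjacent to $C^2_a$ (Claim~\ref{claim: 3col c0 exists}). (3) \emph{The conflict gadget:} you anticipate needing a constant-size gadget that is colorable in three of the four endpoint-state combinations and fatal in the fourth. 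Beware that any gadget introducing even one new vertex per pair $(i,j)$ yields $n=\Theta(k^2)$ and collapses the bound to roughly $\Omega(n/\log^2 n)$. The paper's gadget is a bare edge $(a_1^i,a_2^j)$, with no new vertices; it works only because non-selected $A_1$-nodes are forced to color $c_1$ (via adjacency to $C^2_a$) and non-selected $A_2$-nodes to $c_2$ (via adjacency to $C^1_a$), so the edge is monochromatic precisely when both endpoints are selected. That color-forcing is part of the construction, not an afterthought.

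Two further corrections. Your key lemma is stated with the wrong polarity for your own wiring: if you add a conflict gadget when $x_{i,j}=1$, the graph is 3-colorable iff there exists $(i,j)$ with $x_{i,j}=y_{i,j}=0$, which is not $\disj(x,y)=\true$ (consider $x$ all ones, $y$ all zeros). It is still a valid reduction from Set Disjointness after complementing the inputs, and the paper sidesteps this by adding the edge when $x_{i,j}=0$, so that 3-colorability holds iff $\disj(x,y)=\false$; but as written your iff is false. For the second statement, the paper does not inflate the bit-gadgets: it adds $c-3$ new nodes on each side, joins each to the whole of its side (and the $C_b^i$'s also to $C_a^0,C_a^1,C_a^2$), which forces the original construction to be 3-colored and adds only $\Theta(c)$ cut edges, giving the denominator $c\log n+\log^2 n$; your "$\Theta(c)$ cut edges per bit-gadget" accounting would give $\Theta(c\log n)$ cut edges and a weaker $\Omega((n-c)^2/(c\log^2 n))$ bound.
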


We give a detailed lower bound construction for the first part of the theorem,
by showing that distinguishing $\chi\leq3$ from $\chi\geq4$
is hard.
Then, we extend our construction to deal with deciding whether $\chi\leq c$.

~\\
\noindent\textbf{The fixed graph construction:}  We describe a family of lower bound graphs, which builds upon the family of graphs defined in Section~\ref{sec:mvc}. We define $G=(V,E)$ as follows (see Figure~\ref{fig:coloring}).
\begin{figure}[t]
	\begin{center}
		\includegraphics[scale=0.5]{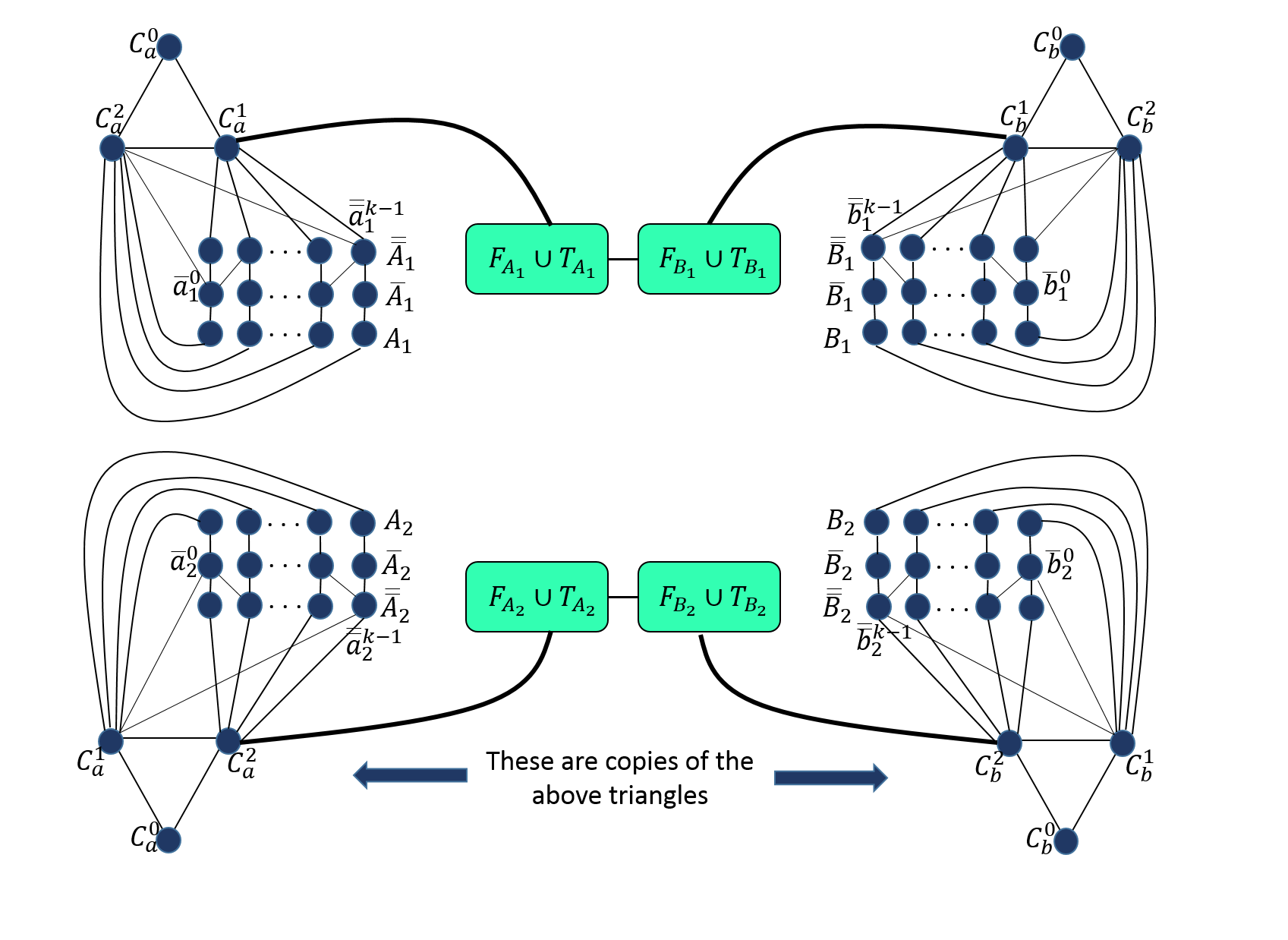}
	\end{center}
	\caption{{\small The family of lower bound graphs for coloring, with many edges omitted for clarity. The node $C^1_a$ is connected to all the nodes in $F_{A_1}\cup T_{A_1}$ and $C^1_b$ is connected to all the nodes in $F_{B_1}\cup T_{B_1}$. The node $C^2_a$ is connected to all the nodes in $F_{A_2}\cup T_{A_2}$ and $C^2_b$ is connected to all the nodes in $F_{B_2}\cup T_{B_2}$.}}
	\label{fig:coloring}
\end{figure}

There are four sets of size $k$: $A_1=\{a^i_1\mid 0\leq i\leq k-1\}$, $A_2=\{a^i_2\mid 0\leq i\leq k-1\}$, $B_1=\{b^i_1\mid 0\leq i\leq k-1\}$ and $B_2=\{b^i_2\mid 0\leq i\leq k-1\}$. As opposed to the construction in Section~\ref{sec:mvc}, the nodes of these sets are not connected to one another. In addition, as in Section~\ref{sec:mvc}, for each set $S\in \set{A_1,A_2,B_1,B_2}$, there are two corresponding sets of nodes of size $\log k $, denoted $F_{S}=\{f^h_{S}\mid 0\leq h\leq \log k -1\}$ and $T_{S}=\{t^h_{S}\mid 0\leq h\leq \log k -1\}$.
For each $h\in \set{0,\ldots,\log k -1}$ and $\ell\in\set{1,2}$, the nodes $(f^h_{A_\ell},t^h_{A_\ell},f^h_{B_\ell},t^h_{B_\ell})$ constitute a $4$-cycle.
Each node $s^i_\ell$ in a set $S\in \set{A_1,A_2,B_1,B_2}$ is connected by to all nodes in $\bin(s^i_\ell)$.
Up to here, the construction differs from the construction in Section~\ref{sec:mvc} only by not having edges inside the sets $A_1,A_2,B_1,B_2$.

We now add the following two gadgets to the graph.
\begin{enumerate}
    \item We add three nodes $C^0_a, C^1_a, C^2_a$ connected as a triangle, another set of three nodes $C^0_b, C^1_b, C^2_b$ connected as a triangle, and edges connecting $C^i_a$ to $C^j_b$ for each $i\neq j \in\set{0,1,2}$. We connect all the nodes of the form $f^h_{A_1},t^h_{A_1}$, $h\in \set{0,\ldots, \log k -1}$, to $C_a^1$. Similarly, we connect all the nodes $f^h_{B_1},t^h_{B_1}$ to $C_b^1$, the nodes $f^h_{A_2},t^h_{A_2}$ to $C_a^2$ and the nodes $f^h_{B_2},t^h_{B_2}$ to $C_b^2$.
    \item For each set $S \in A_1,A_2,B_1,B_2$, we add two sets of nodes, $\singlebar S = \set{\singlebar s^i_{\ell}\mid s^i_\ell \in S}$ and $\doublebar S = \set{\doublebar s^i_{\ell}\mid s^i_\ell \in S}$.
        For each $\ell\in\set{1,2}$ and $i\in\set{0,\ldots,k-1}$ we connect a path $(s^i_{\ell}, \singlebar s^i_{\ell}, \doublebar s^i_{\ell})$, and for each $\ell\in\set{1,2}$ and $i\in\set{0,\ldots,n-2}$, we connect $\doublebar s^i_{\ell}$ to $\singlebar s^{i+1}_{\ell}$.
\end{enumerate}
In addition, we connect the gadgets by the edges:
    \begin{enumerate}[label=(\alph*)]
        \item $(C^2_a, a_1^i)$ and $(C^1_a, \doublebar a_{1}^i)$,
        for each $i\in \set{0,\ldots,k-1}$;
            $(C^2_a, \singlebar a^0_{1})$ and $(C^2_a, \doublebar a^{k-1}_{1})$.
        \item $(C^2_b, b_1^i)$ and $(C^1_b, \doublebar b_{1}^i)$,
            for each $i\in \set{0,\ldots,k-1}$;
            $(C^2_b, \singlebar b^0_{1})$ and $(C^2_b, \doublebar b^{k-1}_{1})$.
        \item $(C^1_a, a_2^i)$ and $(C^2_a, \doublebar a_{2}^i)$,
            for each $i\in \set{0,\ldots,k-1}$;
            $(C^1_a, \singlebar a^0_{2})$ and $(C^1_a, \doublebar a^{k-1}_{2})$.
        \item $(C^1_b, b_2^i)$ and $(C^2_b, \doublebar b_{2}^i)$,
        for each $i\in \set{0,\ldots,k-1}$;
        $(C^1_b, \singlebar b^0_{2})$ and $(C^1_b, \doublebar b^{k-1}_{2})$.
    \end{enumerate}

Assume there is a proper $3$-coloring of $G$. Denote by $c_0,c_1$ and $c_2$ the colors of $C^0_a,C^1_a$ and $C^2_a$ respectively.
By construction, these are also the colors of $C^0_b,C^1_b$ and $C^2_b$, respectively.
For the nodes appearing in Section~\ref{sec:mvc},
coloring a node by $c_0$ is analogous to not including it in the vertex cover.

\begin{claim}
\label{claim: 3col c0 exists}
In each set $S\in \{A_1,A_2,B_1,B_2\}$, at least one node is colored by $c_0$.
\end{claim}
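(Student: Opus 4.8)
The plan is to fix a proper $3$-coloring of $G$ and argue by contradiction. Suppose some set $S\in\set{A_1,A_2,B_1,B_2}$ contains no node colored $c_0$. By the symmetry of the construction it suffices to handle $S=A_1$: the case $B_1$ is obtained by replacing $C^1_a,C^2_a$ with $C^1_b,C^2_b$ (which carry the same two colors $c_1,c_2$), and the cases $A_2,B_2$ are obtained by transposing the roles of $c_1$ and $c_2$ in the argument below, using the gadget edges (c) and (d) in place of (a).

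First I would read off the constraints forced by the triangle gadget together with the connecting edges of type (a). Since $C^1_a$ has color $c_1$ and $C^2_a$ has color $c_2$: every $a^i_1$ is adjacent to $C^2_a$, so $a^i_1\in\set{c_0,c_1}$; every $\doublebar a^i_1$ is adjacent to $C^1_a$, so $\doublebar a^i_1\in\set{c_0,c_2}$; the node $\singlebar a^0_1$ is adjacent to $C^2_a$, so $\singlebar a^0_1\in\set{c_0,c_1}$; and $\doublebar a^{k-1}_1$ is adjacent to both $C^1_a$ and $C^2_a$, forcing $\doublebar a^{k-1}_1=c_0$.

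Next I would use gadget (2), which (after joining the length-$2$ paths $(a^i_1,\singlebar a^i_1,\doublebar a^i_1)$ through the edges $\set{\doublebar a^i_1,\singlebar a^{i+1}_1}$) strings the nodes into the path $a^0_1-\singlebar a^0_1-\doublebar a^0_1-\singlebar a^1_1-\doublebar a^1_1-\dots-\singlebar a^{k-1}_1-\doublebar a^{k-1}_1$ with each $a^i_1$ (for $i\ge1$) a pendant on $\singlebar a^i_1$. Under the contradiction hypothesis every $a^i_1$ is colored $c_1$. I would then prove by induction on $i$ that $\singlebar a^i_1=c_0$ and $\doublebar a^i_1=c_2$: for $i=0$, $\singlebar a^0_1\ne c_1$ (its neighbor $a^0_1$ is $c_1$) and $\singlebar a^0_1\in\set{c_0,c_1}$, so $\singlebar a^0_1=c_0$, whence $\doublebar a^0_1\in\set{c_0,c_2}$ with $c_0$ excluded; for the step, $\singlebar a^{i+1}_1$ is adjacent to $\doublebar a^i_1=c_2$ and to $a^{i+1}_1=c_1$, so $\singlebar a^{i+1}_1=c_0$, and then $\doublebar a^{i+1}_1=c_2$ as before. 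Taking $i=k-1$ gives $\doublebar a^{k-1}_1=c_2$, contradicting the forced value $\doublebar a^{k-1}_1=c_0$. Hence $A_1$ contains a node colored $c_0$.

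The only real care needed — more bookkeeping than obstacle — is keeping the incidences of gadget (2) straight alongside the precise edge lists (a)--(d), and transposing the colors consistently for $A_2$ and $B_2$: there the node adjacent to $C^1_a$ (resp.\ $C^1_b$) is $a^i_2$ (resp.\ $b^i_2$) itself, so under the hypothesis it is forced to $c_2$, the induction then yields $\doublebar a^i_2=c_1$ (resp.\ $\doublebar b^i_2=c_1$), and this again clashes with the value $c_0$ forced at the last index $i=k-1$. No idea beyond propagating a forced $2$-coloring down a path until it collides with a pinned endpoint is required.
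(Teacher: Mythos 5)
Your proof is correct and follows essentially the same route as the paper's: under the assumption that $A_1$ avoids $c_0$, all of $A_1$ is forced to $c_1$ by the edges to $C^2_a$, the colors then propagate along the path through the $\singlebar a^i_1,\doublebar a^i_1$ nodes, and the pinning edges $(C^2_a,\singlebar a^0_1)$ and $(C^2_a,\doublebar a^{k-1}_1)$ yield the contradiction. The only difference is presentational — you track the forced colors by explicit induction, whereas the paper phrases it as a parity argument on an alternating $2$-coloring of an even-length path — and your handling of the symmetric cases matches the paper's.
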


\begin{proof}
We start by proving the claim for $S=A_1$. Assume, towards a contradiction, that all nodes of $A_1$ are colored by $c_1$ and $c_2$.
All these nodes are connected to $C^2_a$, so they must all be colored by $c_1$.
Hence, all the nodes $\singlebar a^i_{1}$, $i\in \set{0,\ldots,k-1}$,
are colored by $c_0$ and $c_2$.
The nodes $\doublebar a^i_{1}$, $i\in \set{0,\ldots,k-1}$,
are connected to $C^1_a$,
so they are colored by $c_0$ and $c_2$ as well.

Hence, we have a path $(\singlebar a^0_{1}, \doublebar a^0_{1}, \singlebar a^1_{1}, \doublebar a^1_{1},\ldots \singlebar a^{k-1}_{1}, \doublebar a^{k-1}_{1})$
with an even number of nodes,
starting in $\singlebar a^0_{1}$ and ending in $\doublebar a^{k-1}_{1}$.
This path must be colored by alternating $c_0$ and $c_2$,
but both $\singlebar a^0_{1}$ and  $\doublebar a^{k-1}_{1}$ are connected to $C^2_a$,
so they cannot be colored by $c_2$, a contradiction.

A similar proof shows the claim for $S=B_1$.
For $S\in\set{A_2,B_2}$, we use a similar argument but change the roles of $c_1$ and $c_2$.
\end{proof}

\begin{claim}
\label{claim: 3col relating alices and bobs c0 nodes}
For each $i\in \set{0,\ldots, k-1}$,
the node $a_1^i$ is colored by $c_0$ iff $b_1^i$ is colored by $c_0$ and
the node $a_2^i$ is colored by $c_0$ iff $b_2^i$ is colored by $c_0$.
\end{claim}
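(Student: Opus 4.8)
The plan is to first pin down the colors that are forced on the gadget nodes and the bit-nodes, then exploit the fact that each bit-gadget $4$-cycle has only two colors available to it, and finally combine this with Claim~\ref{claim: 3col c0 exists} to upgrade a ``can be colored $c_0$'' statement into an ``is colored $c_0$'' statement on both sides simultaneously.

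First I would record the colors of the triangle nodes. Since $C^0_a,C^1_a,C^2_a$ form a triangle, $c_0,c_1,c_2$ are distinct; since $C^i_a$ is adjacent to $C^j_b$ for every $i\neq j$, the node $C^0_b$ (adjacent to $C^1_a$ and $C^2_a$) must be colored $c_0$, and likewise $C^1_b$ is colored $c_1$ and $C^2_b$ is colored $c_2$. Consequently every bit-node of $A_1$ and of $B_1$ is adjacent to a node colored $c_1$ (namely $C^1_a$, resp.\ $C^1_b$), so these bit-nodes are colored from $\{c_0,c_2\}$; symmetrically every bit-node of $A_2$ and $B_2$ is colored from $\{c_0,c_1\}$. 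Now fix $h$ and look at the $4$-cycle $(f^h_{A_1},t^h_{A_1},f^h_{B_1},t^h_{B_1})$: it is properly colored using only the two colors $c_0,c_2$, and in a $4$-cycle any proper $2$-coloring is the bipartition coloring, so the two antipodal pairs $\{f^h_{A_1},f^h_{B_1}\}$ and $\{t^h_{A_1},t^h_{B_1}\}$ are each monochromatic and receive the two distinct colors. Define $\beta_h\in\{0,1\}$ so that $f^h_{A_1},f^h_{B_1}$ are colored $c_0$ when $\beta_h=0$ and $t^h_{A_1},t^h_{B_1}$ are colored $c_0$ when $\beta_h=1$; define $\beta'_h$ analogously for the cycle $(f^h_{A_2},t^h_{A_2},f^h_{B_2},t^h_{B_2})$ with $c_1$ playing the role of $c_2$.

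Next I would fix an index $i$ and analyze $a_1^i$. It is adjacent to $C^2_a$, hence colored from $\{c_0,c_1\}$, and it is adjacent to every node of $\bin(a_1^i)=\set{f^h_{A_1}\mid i_h=0}\cup\set{t^h_{A_1}\mid i_h=1}$. If $i_h=\beta_h$ for some $h$, then the corresponding neighbor of $a_1^i$ inside the bit-gadget is colored $c_0$, which forces $a_1^i$ to be colored $c_1$. Hence $a_1^i$ can be colored $c_0$ only for the single index $i^\star$ with $i^\star_h=1-\beta_h$ for all $h$. The identical computation — using that $f^h_{B_1}$ is colored $c_0$ iff $\beta_h=0$ and $t^h_{B_1}$ is colored $c_0$ iff $\beta_h=1$ — shows $b_1^i$ can be colored $c_0$ only for the same index $i^\star$. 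By Claim~\ref{claim: 3col c0 exists} some node of $A_1$ is colored $c_0$, so it must be $a_1^{i^\star}$; likewise $b_1^{i^\star}$ is colored $c_0$, while every other $a_1^i$ and $b_1^i$ is colored $c_1$. Therefore $a_1^i$ is colored $c_0$ iff $i=i^\star$ iff $b_1^i$ is colored $c_0$. Replacing $c_2$ by $c_1$, $C^2_a$ by $C^1_a$, $C^2_b$ by $C^1_b$ and using $\beta'$ in place of $\beta$, the same argument applied to $A_2,B_2$ (whose bit-nodes avoid $c_2$, and whose nodes $a_2^i,b_2^i$ avoid $c_1$) yields that $a_2^i$ is colored $c_0$ iff $b_2^i$ is colored $c_0$, which completes the claim.

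The only subtle point — and the main thing to be careful about — is that the statement is not a purely local consequence of adjacency: the bit-gadget constraints by themselves only single out a unique candidate index $i^\star$ on each of the $A$- and $B$-sides, and one genuinely needs Claim~\ref{claim: 3col c0 exists} to conclude that this candidate is actually realized, and is realized on both sides. The bookkeeping of which antipodal pair in each $4$-cycle carries $c_0$ (the definition of $\beta_h$ and $\beta'_h$) and the verification that the implication ``$i_h=\beta_h\Rightarrow$ color $c_1$'' holds verbatim for both $a_1^i$ and $b_1^i$ is the routine but error-prone part of the write-up.
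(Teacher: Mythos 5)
Your proof is correct and follows essentially the same route as the paper's: both arguments use the colors forced on the bit-nodes by the triangle gadget, the $4$-cycle structure linking the $A$- and $B$-side binary representations, and Claim~\ref{claim: 3col c0 exists} to upgrade uniqueness of the candidate index to actual existence on both sides. The only difference is organizational — you first pin down the full $2$-coloring of each bit-gadget $4$-cycle and extract a single shared candidate index $i^\star$, whereas the paper assumes $a_1^i$ and $b_1^j$ are both colored $c_0$ and derives $i=j$ directly from the adjacency of complementary bit-nodes.
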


\begin{proof}
Assume $a_1^i$ is colored by $c_0$,
so all of its adjacent nodes $\bin(a_1^i)$ can only be colored by $c_1$ or $c_2$.
As all of these nodes are connected to $C^a_1$, they must be colored by $c_2$.
Similarly, if a node $b_1^j$ in $B_1$ is colored by $c_0$,
then the nodes $\bin(b_1^j)$, which are also adjacent to $C^b_1$,
must be colored by $c_2$.

If $i\neq j$ then there must be a bit $i$ such that $i_h\neq j_h$, and there must be a pair of neighboring nodes $(f^h_{A_1},t^h_{B_1})$ or $(t^h_{A_1},f^h_{B_1})$ which are colored by $c_2$.
Thus, the only option is $i=j$.
By Claim~\ref{claim: 3col c0 exists}, there is a node in $B_1$ that is colored by $c_0$, and so it must be $b_1^i$.

An analogous argument shows that if $b_1^i$ is colored by $c_0$, then so does $a_1^i$.
For $a_2^i$ and $b_2^i$, similar arguments apply, where $c_1$ plays the role of $c_2$.
\end{proof}

~\\
\noindent\textbf{Adding edges corresponding to the strings $x$ and $y$:}
Given two bit strings $x,y\in\set{0,1}^{k^2}$, we augment the graph $G$ described above with additional edges, which defines $G_{x,y}$.

Assume $x$ and $y$ are indexed by pairs of the form $(i,j)\in \set{0,\ldots,k-1}^2$.
To construct $G_{x,y}$, add edges to $G$ by the following rules:
if $x_{i,j}=0$ then add the edge $(a_1^i,a_2^j)$, and if $y_{i,j}=0$ then add the edge $(b_1^i,b_2^j)$.
To prove that $\set{G_{x,y}}$ is a family of lower bound graphs, it remains to prove the next lemma.

\begin{lemma}
\label{lemma:threefamily}
The graph $G_{x,y}$ is $3$-colorable iff $\disj(x,y)=\false$.
\end{lemma}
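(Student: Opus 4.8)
The plan is to prove both implications by leveraging the structural claims already established (Claims~\ref{claim: 3col c0 exists} and~\ref{claim: 3col relating alices and bobs c0 nodes}), in close analogy with the vertex-cover argument of Lemma~\ref{mainLemmaVC}, where "colored by $c_0$" plays the role of "not in the vertex cover".

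For the direction $\disj(x,y)=\false \Rightarrow G_{x,y}$ is $3$-colorable, I would fix indices $i,j$ with $x_{i,j}=y_{i,j}=1$, so that the edges $(a_1^i,a_2^j)$ and $(b_1^i,b_2^j)$ are \emph{absent} from $G_{x,y}$. I then exhibit an explicit $3$-coloring: set the colors of $C^0_a,C^1_a,C^2_a$ (and $C^0_b,C^1_b,C^2_b$) to $c_0,c_1,c_2$; color $a_1^i,a_2^j,b_1^i,b_2^j$ by $c_0$, and every other node of $A_1,A_2,B_1,B_2$ by $c_2$ (for the $A_1,B_1$ copies) or $c_1$ (for the $A_2,B_2$ copies), matching the neighbor constraints with $C^2_a,C^2_b,C^1_a,C^1_b$; color the bit-gadget nodes in each $4$-cycle $(f^h_{A_\ell},t^h_{A_\ell},f^h_{B_\ell},t^h_{B_\ell})$ consistently with $\bin(a_1^i),\bin(b_1^i)$ (resp. the $j$ versions), which is possible exactly because $i$ determines a single node on each pair side and the $4$-cycle is properly $2$-colorable by $\{c_1,c_2\}$ plus the forced $c_0$ entries; and finally color the paths $(s^i_\ell,\singlebar s^i_\ell,\doublebar s^i_\ell)$ and the chain linking consecutive $\doublebar s^i_\ell$ to $\singlebar s^{i+1}_\ell$ by alternating the two colors not equal to the color of the attached $C$-node, taking care at the two endpoints $\singlebar s^0_\ell,\doublebar s^{k-1}_\ell$ where an extra edge to a $C$-node is present. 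The one subtlety is that the path/chain over all of $A_1$ has an even number of internal nodes, so the alternating $2$-coloring is consistent with both endpoint constraints precisely when the "skipped" index (the one colored $c_0$) is present — this is where $a_1^i$ being colored $c_0$ is essential, and I would verify the parity bookkeeping carefully. I then check that no monochromatic edge survives: clique-internal edges are gone in this construction, $A_1$–$A_2$ and $B_1$–$B_2$ edges that remain connect a $c_2$ node to a $c_1$ node (the only potentially bad pair $a_1^i$–$a_2^j$ has been removed by hypothesis), bit-gadget edges are fine by the $4$-cycle coloring, and the gadget edges to the $C$-nodes are fine by the color assignment above.

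For the converse, $G_{x,y}$ is $3$-colorable $\Rightarrow \disj(x,y)=\false$, I take any proper $3$-coloring, normalize notation so $C^0_a,C^1_a,C^2_a$ get $c_0,c_1,c_2$ (and by the cross edges $C^i_a$–$C^j_b$ the $b$-triangle gets the same). By Claim~\ref{claim: 3col c0 exists} each of $A_1,A_2,B_1,B_2$ contains a node colored $c_0$, and by Claim~\ref{claim: 3col relating alices and bobs c0 nodes} the $c_0$-colored index in $A_1$ equals that in $B_1$, call it $i$, and similarly the $c_0$ index $j$ is common to $A_2$ and $B_2$. Since $a_1^i$ is colored $c_0$ and $b_1^i$ is colored $c_0$ — wait, I need the $c_0$ nodes to be \emph{forced} to be unique per side; Claim~\ref{claim: 3col c0 exists} gives existence and the argument in Claim~\ref{claim: 3col relating alices and bobs c0 nodes} shows the matching index is unique, so $i,j$ are well-defined. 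Now $a_1^i$ and $a_2^j$ are both colored $c_0$, hence the edge $(a_1^i,a_2^j)$ cannot be present in $G_{x,y}$, which forces $x_{i,j}=1$; likewise $b_1^i,b_2^j$ both colored $c_0$ forces $y_{i,j}=1$. Therefore there is an index $(i,j)$ with $x_{i,j}=y_{i,j}=1$, i.e. $\disj(x,y)=\false$.

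The main obstacle I anticipate is the forward direction's explicit coloring of the $\singlebar S,\doublebar S$ "path gadgets": one must simultaneously satisfy the alternation forced by the path edges, the endpoint edges to $C^1$ or $C^2$, the edge $\doublebar s^i_\ell$–$\singlebar s^{i+1}_\ell$ linking consecutive triples, and the edge $s^i_\ell$–$\singlebar s^i_\ell$ connecting back to the (already colored) $A/B$ node — and this global consistency hinges on the parity of the chain and on the location of the single $c_0$-colored node matching the index $i$ (resp. $j$) picked from the disjointness witness. Getting the parity accounting right, and double-checking that the endpoint constraints $(C^2_a,\singlebar a^0_1)$ and $(C^2_a,\doublebar a^{k-1}_1)$ are compatible with the alternation (which is exactly the mechanism Claim~\ref{claim: 3col c0 exists} exploits in the contrapositive), is the delicate part; everything else is a routine case check mirroring Lemma~\ref{mainLemmaVC}.
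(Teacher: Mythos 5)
Your overall route is the same as the paper's: the backward implication via an explicit coloring, and the forward implication via Claims~\ref{claim: 3col c0 exists} and~\ref{claim: 3col relating alices and bobs c0 nodes}. The forward (3-colorable $\Rightarrow \disj=\false$) half of your argument is correct; note that you do not need uniqueness of the $c_0$-colored node in each set at all --- existence of \emph{some} $a_1^i$ and $a_2^j$ colored $c_0$ already forces the edge $(a_1^i,a_2^j)$ to be absent, hence $x_{i,j}=1$, and Claim~\ref{claim: 3col relating alices and bobs c0 nodes} transfers the same indices to $B_1,B_2$ to get $y_{i,j}=1$. Your digression about forcing uniqueness is unnecessary.

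In the explicit coloring, however, you have the colors of the former-clique sets backwards, and as written the coloring is improper. Every node of $A_1$ is joined to $C^2_a$ (edges $(C^2_a,a_1^i)$ for all $i$) and every node of $A_2$ is joined to $C^1_a$, so the nodes of $A_1\setminus\set{a_1^i}$ must receive $c_1$ and those of $A_2\setminus\set{a_2^j}$ must receive $c_2$ (and symmetrically for $B_1,B_2$) --- exactly the opposite of your assignment of $c_2$ to the $A_1,B_1$ copies and $c_1$ to the $A_2,B_2$ copies. Your choice also clashes with the bit-gadget: the nodes of $\bin(a_1^i)$ are adjacent to $a_1^i$ (colored $c_0$) and to $C^1_a$, so they are forced to $c_2$, and they are adjacent to other nodes $a_1^{i'}$, which under your assignment are also $c_2$. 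The fix is just to swap $c_1$ and $c_2$ on these sets, after which the $A_1$--$A_2$ and $B_1$--$B_2$ input edges remain bichromatic and the rest goes through. Finally, you leave the $\singlebar S,\doublebar S$ chain coloring as a plan rather than a proof; the paper resolves the parity issue you flag by coloring $\singlebar s^{i'},\doublebar s^{i'}$ with the alternation $c_0,c_2$ (resp.\ $c_0,c_1$) for $i'<i$ and the shifted alternation for $i'>i$, with $\singlebar s^i$ taking the third color $c_1$ (resp.\ $c_2$) at the break point $i$; this is the step that uses the existence of the $c_0$-colored node $s^i_\ell$ and should be written out rather than deferred.
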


\begin{proof}
For the first direction, assume $G_{x,y}$ is $3$-colorable, and denote the colors by $c_0, c_1$ and $c_2$, as before. By Claim~\ref{claim: 3col c0 exists},
there are nodes $a_1^i\in A_1$ and $a_2^j\in A_2$ that are both colored by $c_0$. Hence, the edge $(a_1^i,a_2^j)$ does not exist in $G_{x,y}$, implying $x_{i,j}=1$. By Claim~\ref{claim: 3col relating alices and bobs c0 nodes}, the nodes $b_1^i$ and $b_2^j$ are also colored $c_0$, so $y_{i,j}=1$ as well, giving that $\disj(x,y)=\false$, as needed.

For the other direction, assume $\disj(x,y)=\false$, i.e, there is an index $(i,j)\in \set{0,\ldots,k-1}^2$ such that $x_{i,j}=y_{i,j}=1$.
Consider the following coloring.
\begin{enumerate}
    \item Color $C_a^i$ and $C_b^i$ by $c_i$, for $i\in \set{0,1,2}$.
    \item Color the nodes $a_1^i, b_1^i, a_2^j$ and $b_2^j$ by $c_0$. Color the nodes $a_1^{i'}$ and $b_1^{i'}$, for $i'\neq i$, by $c_1$, and the nodes $a_2^{j'}$ and $b_1^{j'}$, for $j'\neq j$, by $c_2$.
    \item Color the nodes of $\bin(a_1^i)$ by $c_2$,
        and similarly color the nodes of $\bin(b_1^i)$ by $c_2$.
        Color the rest of the nodes in this gadget,
        i.e.\ $\bin(a_1^{k-i})$ and $\bin(b_1^{k-i})$, by $c_0$. Similarly, color $\bin(a_2^j)$ and $\bin(b_2^j)$ by $c_0$ and
        $\bin(a_2^{k-j})$ and $\bin(b_2^{k-j})$ by $c_1$.
    \item Finally, color the nodes of the forms $\singlebar s^i_{\ell}$ and $\doublebar s^i_{\ell}$ as follows.
    \begin{enumerate}
        \item Color $\singlebar a_{1}^i$ and $\singlebar b_{1}^i$ by $c_1$,
            all nodes $\singlebar a_{1}^{i'}$ and $\singlebar b_{1}^{i'}$ with $i'< i$ by $c_0$,
            and all nodes $\singlebar a_{1}^{i'}$ and $\singlebar b_{1}^{i'}$ with $i'> i$ by $c_2$.
        \item Similarly, color $\singlebar a_{2}^i$ and $\singlebar b_{2}^i$ by $c_2$,
            all nodes $\singlebar a_{2}^{i'}$ and $\singlebar b_{2}^{i'}$ with $i'< i$ by $c_0$,
            and all nodes $\singlebar a_{2}^{i'}$ and $\singlebar b_{2}^{i'}$ with $i'> i$ by $c_1$.
        \item Color all nodes $\doublebar a_{1}^{i'}$ and $\doublebar b_{1}^{i'}$ with $i'< i$ by $c_2$,
            and all nodes $\doublebar a_{1}^{i'}$ and $\doublebar b_{1}^{i'}$ with $i'\geq i$ by $c_0$.
        \item Similarly, color all nodes $\doublebar a_{2}^{i'}$ and $\doublebar b_{2}^{i'}$ with $i'< i$ by $c_1$,
            and all nodes $\doublebar a_{2}^{i'}$ and $\doublebar b_{2}^{i'}$ with $i'\geq i$ by $c_0$.
    \end{enumerate}
\end{enumerate}
It is not hard to verify that the suggested coloring is indeed a proper $3$-coloring of $G_{x,y}$, which completes the proof.
\end{proof}

Having constructed the family of lower bound graphs, we are now ready to prove Theorem~\ref{thm: coloring veriants lb}.
\begin{proofof}{Theorem~\ref{thm: coloring veriants lb}}
To complete the proof of Theorem~\ref{thm: coloring veriants lb},  we divide the nodes of $G$ (which are also the nodes of $G_{x,y}$) into two sets.
Let $V_A = A_1 \cup A_2 \cup F_{A_1} \cup T_{A_1} \cup F_{A_2} \cup T_{A_2} \cup \set{C_a^0,C_a^1,C_a^2} \cup \singlebar A_1 \cup \doublebar A_1 \cup \singlebar A_2 \cup \doublebar A_2$, and $V_B=V\setminus V_A$.
Note that $n\in \Theta(k)$.

The edges in the cut $E(V_A, V_B)$ are the $6$ edges connecting $\set{C_a^0,C_a^1,C_a^2}$ and $\set{C_b^0,C_b^1,C_b^2}$, and $2$ edges for every $4$-cycle of the nodes of $F_{A_1} \cup T_{A_1} \cup F_{B_1} \cup T_{B_1}$ and $F_{A_2} \cup T_{A_2} \cup F_{B_2} \cup T_{B_2}$,
for a total of $\Theta(\log n)$ edges.
Since Lemma~\ref{lemma:threefamily} shows that $\{G_{x,y}\}$ is a family of lower bound graphs with respect to $\disj_K$, $K=k^2\in\Theta(n^2)$
and the predicate $\chi\leq3$, we can apply Theorem~\ref{thm: general lb framework} on the above partition to deduce that any algorithm in the CONGEST model for deciding whether a given graph is $3$-colorable requires at least $\Omega(n^2/\log^2n)$ rounds.

Any algorithm that computes $\chi$ of the input graph,
or produces a $\chi$-coloring of it,
may be used to deciding whether $\chi\leq3$,
in at most $O(D)$ additional rounds.
Thus, the lower bound applies to these problems as well.

Our construction and proof naturally extend to handle $c$-coloring,
for any $c\geq 3$.
To this end, we add to $G$ (and to $G_{x,y}$)
new nodes denoted $C_a^i$, $i\in\set{3,\ldots,c-1}$,
and connect them to all of $V_A$,
and new nodes denoted $C_b^i$, $i\in\set{3,\ldots,c-1}$,
and connect them to all of $V_B$ and also to $C_a^0, C_a^1$ and $C_a^2$.
The nodes $C_a^i$ are added to $V_a$, and the rest are added to $V_b$,
which increases the cut size by $\Theta(c)$ edges.

Assume the extended graph is colorable by $c$ colors,
and denote by $c_i$ the color of the node $C_a^i$
(these nodes are connected by a clique, so their colors are distinct).
The nodes $C_b^i$, $i\in\set{2,\ldots,c-1}$ form a clique,
and they are all connected to the nodes $C_a^0,C_a^1$ and $C_a^2$,
so they are colored by the colors $\set{c_3,\ldots,c_{c-1}}$,
in some order.
All the original nodes of $V_A$ are connected to
$C_a^i$, $i\in\set{3,\ldots,c-1}$,
and all the original nodes of $V_B$ are connected to
$C_b^i$, $i\in\set{3,\ldots,c-1}$,
so the original graph must be colored by $3$ colors,
which we know is possible iff $\disj(x,y)=\false$.

We added $2c-6$ nodes to the graph,
so the inputs strings are of length $K=n-2c+6$.
Thus, the new graphs constitute a family of lower bound graphs
with respect to $\eq_{K}$ and the predicate $\chi\leq c$,
the communication complexity of $\eq_{K}$
is in $\Omega(K^2)=\Omega((n-c)^2)$,
the cut size is $\Theta(c+\log n)$,
and Theorem~\ref{thm: general lb framework} completes the proof.
\end{proofof}

~\\
\noindent\textbf{A lower bound for $(4/3-\epsilon)$-approximation:} Finally, we extend our construction to give a lower bound
for approximate coloring.
That is,
we show a similar lower bound for computing a $(4/3-\eps)$-approximation to $\chi$
and for finding a coloring in $(4/3-\eps)\chi$ colors.

Observe that since $\chi$ is integral, any $(4/3-\epsilon)$-approximation algorithm must return the exact solution in case $\chi=3$. Thus, in order to rule out the possibility for an algorithm which is allowed to return a $(4/3-\eps)$-approximation which is not the exact solution, we need a more general construction.
For any integer $c$,
we show a lower bound for distinguishing between the case
$\chi\leq 3c$ and $\chi\geq 4c$.

\begin{claim}\label{claim: approx-coloring}
Given an integer $c$, any distributed algorithm in the CONGEST model that distinguishes a graph $G$ with $\chi(G)\leq 3c$ from a graph with $\chi(G)\geq 4c$ requires $\Omega(n^2/(c^3\log^2n))$ rounds.
\end{claim}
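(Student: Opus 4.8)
The plan is to reduce from the exact $3$-coloring lower bound already established in Lemma~\ref{lemma:threefamily} and Theorem~\ref{thm: coloring veriants lb}, by applying a ``blow-up'' (lexicographic/tensor product) operation that multiplies the chromatic number by $c$ while keeping the construction inside the two-party simulation framework of Theorem~\ref{thm: general lb framework}. Concretely, given the family $\{G_{x,y}\}$ of lower bound graphs for $3$-colorability, I would define $H_{x,y} = G_{x,y}[K_c]$, the graph obtained by replacing each vertex $v$ of $G_{x,y}$ with a clique $Q_v$ of $c$ copies, and joining every vertex of $Q_u$ to every vertex of $Q_v$ whenever $\{u,v\}\in E_{x,y}$. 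The standard fact about the lexicographic product with a clique is that $\chi(G[K_c]) = c\cdot\chi(G)$: a proper $t$-coloring of $G[K_c]$ restricted to any $Q_v$ uses $c$ distinct colors, and collapsing the color classes shows $G$ is colorable with $\lceil t/c\rceil$ colors in the relevant sense; conversely a proper $q$-coloring of $G$ yields a proper $cq$-coloring of $G[K_c]$ by assigning disjoint color blocks. Hence $\chi(G_{x,y})\le 3 \Rightarrow \chi(H_{x,y})\le 3c$ and $\chi(G_{x,y})\ge 4 \Rightarrow \chi(H_{x,y})\ge 4c$, so an algorithm distinguishing $\chi\le 3c$ from $\chi\ge 4c$ on $H_{x,y}$ decides $3$-colorability of $G_{x,y}$, which by Lemma~\ref{lemma:threefamily} decides $\disj_K$ with $K=k^2=\Theta((n/c)^2)$ since $|V(H_{x,y})| = c\cdot|V(G_{x,y})| = \Theta(ck)$, i.e. $n = \Theta(ck)$.

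The second ingredient is bounding the cut in the blown-up graph and checking the blow-up respects the lower-bound-graph structure. I would set $V_A^H = \bigcup_{v\in V_A} Q_v$ and $V_B^H = \bigcup_{v\in V_B} Q_v$, where $V_A, V_B$ is the partition from the proof of Theorem~\ref{thm: coloring veriants lb}. Each original cut edge $\{u,v\}$ with $u\in V_A$, $v\in V_B$ becomes a complete bipartite graph between $Q_u$ and $Q_v$, contributing $c^2$ edges; since the original cut had $\Theta(\log n)$ edges (here $\Theta(\log(n/c))$ edges in terms of the new $n$), the new cut has $\Theta(c^2\log n)$ edges. Items (1)--(3) of Definition~\ref{def:family} are inherited: the vertex set of $H_{x,y}$ is fixed, the $x$-dependent edges of $G_{x,y}$ lie in $V_A\times V_A$ and hence their blown-up versions lie in $V_A^H\times V_A^H$, and symmetrically for $y$; item (4) is exactly the chromatic-number correspondence above, where the predicate $P$ is ``$\chi \le 3c$''. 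Then Theorem~\ref{thm: general lb framework} with $f=\disj_K$, $|C| = \Theta(c^2\log n)$, $CC(\disj_K)=\Omega(K)=\Omega(n^2/c^2)$, and bandwidth $\Theta(\log n)$ gives a bound of $\Omega\!\left(\frac{n^2/c^2}{c^2\log n\cdot\log n}\right) = \Omega\!\left(\frac{n^2}{c^4\log^2 n}\right)$ rounds.

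There is a gap between this and the claimed $\Omega(n^2/(c^3\log^2 n))$: the naive blow-up loses a factor $c^4$ (a factor $c^2$ from the shrunken input length and a factor $c^2$ from the widened cut), whereas the claim only loses $c^3$. To recover the extra factor of $c$ I would avoid blowing up the bit-gadget part of the construction. The bit-nodes and the $C_a^i, C_b^i$ gadget nodes are where the small cut lives; instead of replacing those by $c$-cliques, I would keep them as single nodes (or blow them up only enough to propagate the color constraints), and instead attach to each original clique-side node $a_\ell^i, b_\ell^i$ (and the path nodes) a $c$-clique while making the bit-gadget/coloring-gadget machinery enforce, for the blown-up node, that it avoids a block of $c-1$ ``forbidden'' colors rather than a single forbidden color. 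Under such a refined construction the cut stays $\Theta(c\log n + \log^2 n)$ rather than $\Theta(c^2\log n)$ — one factor of $c$ comes from thickening the six gadget-to-gadget edges and the bit-gadget cut edges by $c$, not $c^2$ — while the input length stays $\Theta((n/c)^2)$, yielding $\Omega\!\left(\frac{n^2/c^2}{(c\log n)\log n}\right) = \Omega\!\left(\frac{n^2}{c^3\log^2 n}\right)$. The main obstacle is precisely this bookkeeping: verifying that one can simultaneously (i) multiply $\chi$ by exactly $c$ on the ``yes'' side, (ii) keep the threshold at $4c$ on the ``no'' side, and (iii) only pay a single factor of $c$ in the cut, which requires carefully redesigning the gadget adjacencies (the analogues of Claims~\ref{claim: 3col c0 exists} and~\ref{claim: 3col relating alices and bobs c0 nodes}) so that their proofs go through with color-blocks of size $c$ in place of single colors. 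Once those gadget lemmas are re-established, the reduction and the application of Theorem~\ref{thm: general lb framework} are routine.
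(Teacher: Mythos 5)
Your reduction has two genuine gaps, and the first one is fatal as written. The ``standard fact'' $\chi(G[K_c])=c\cdot\chi(G)$ on which the argument rests is false. A proper coloring of $G[K_c]$ is exactly an assignment of a $c$-subset of colors to each vertex of $G$ such that adjacent vertices receive disjoint subsets; hence $\chi(G[K_c])$ is the $c$-fold chromatic number $\chi_c(G)$, which is governed by the fractional chromatic number ($\chi_c(G)\ge c\,\chi_f(G)$, with $\chi_f=\inf_c \chi_c/c$) and can be far below $c\,\chi(G)$. Already $\chi(C_5[K_2])=\chi_2(C_5)=5<2\cdot 3$, and Kneser graphs give arbitrarily large gaps. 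Your ``collapsing the color classes'' step fails because the $c$-set assigned to a vertex need not be contained in a single block of a partition of the $t$ colors into blocks of size $c$. Consequently, while the ``yes'' direction $\chi(G_{x,y})\le 3\Rightarrow\chi(H_{x,y})\le 3c$ is fine, the ``no'' direction $\chi(G_{x,y})\ge 4\Rightarrow\chi(H_{x,y})\ge 4c$ does not follow: you would need $\chi_c(G_{x,y})\ge 4c$ (e.g., $\chi_f(G_{x,y})\ge 4$) on non-disjoint inputs, and nothing in Lemma~\ref{lemma:threefamily} establishes this. The second gap is the one you flag yourself: even granting the product identity, your fully specified construction only yields $\Omega(n^2/(c^4\log^2 n))$, and the ``refined construction'' meant to recover the missing factor of $c$ is precisely the part you leave unverified (re-proving the analogues of Claims~\ref{claim: 3col c0 exists} and~\ref{claim: 3col relating alices and bobs c0 nodes} with color blocks of size $c$). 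So the claim as stated is not established.

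For comparison, the paper's construction sidesteps both problems: it takes $c$ vertex-disjoint copies $G_{x,y}(1),\dots,G_{x,y}(c)$ of the \emph{same} graph (no vertex is blown up) and adds a complete join between $V_A(i)$ and $V_A(j)$, and between $V_B(i)$ and $V_B(j)$, for all $i\ne j$. These join edges lie entirely within one player's side, so the Alice--Bob cut is just the union of the $c$ per-copy cuts, of size $\Theta(c\log n)$ rather than $\Theta(c^2\log n)$, while $K=k^2=\Theta(n^2/c^2)$; Theorem~\ref{thm: general lb framework} then gives $\Omega(n^2/(c^3\log^2 n))$ directly. The joins force the color palettes of the $V_A(i)$'s (and of the $V_B(i)$'s) to be pairwise disjoint, which is how the threshold gets multiplied by $c$ without any appeal to multifold colorings and without modifying the internal gadgets.
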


To prove Claim~\ref{claim: approx-coloring} we show a family of lower bound graphs with respect to the $\disj_{K}$ function, where $K\in \Theta(n^2/c^2)$,
and the predicate $\chi\leq3c$ ($\true$)
or $\chi\geq 4c$ ($\false$).
The predicate is not defined for other values of $\chi$.

We create a graph $G^{c}_{x,y}$, composed of $c$ copies of $G_{x,y}$.
The $i$-th copy is denoted $G_{x,y}(i)$,
and its nodes are partitioned into $V_A(i)$ and $V_B(i)$.
We connect all the nodes of $V_A(i)$ to all nodes of $V_A(j)$,
for each $i\neq j$.
Similarly, we connect all the nodes of $V_B(i)$ to all the nodes of $V_B(j)$.
This construction guarantees that each copy is colored by different colors, and hence
if $\disj(x,y)=\false$ then $\chi(G^{c}_{x,y})=3c$
and otherwise $\chi(G^{c}_{x,y})\geq3c$.
Therefore, $G^{c}_{x,y}$ is a family of lower bound graphs.

\begin{proofof}{Claim~\ref{claim: approx-coloring}}
Note that $n\in \Theta(kc)$. Thus, $K=|x|=|y|=n^2/c^2$.
Furthermore, observe that for each $G_{x,y}(i)$, there are $O(\log n)$ edges in the cut,
so in total $G^{c}_{x,y}$ contains $O(c\log n)$ edges in the cut.
Since we showed that $G^{c}_{x,y}$ is a family of lower bound graphs,
we can apply Theorem~\ref{thm: general lb framework} to deduce that because of the lower bound for Set Disjointness,
any algorithm in the CONGEST model for distinguishing between $\chi\leq 3c$ and $\chi\geq 4c$ requires at least $\Omega(n^2/c^3\log^2(n))$ rounds.
\end{proofof}

For any $\epsilon>0$ and any $c$ it holds that $(4/3-\epsilon)3c<4c$.
Thus, we can choose $c$ to be an arbitrary constant to achieve the following theorem.

\begin{theorem}
\label{cor: 3col approx lb}
For any constant $\eps>0$,
any distributed algorithm in the CONGEST model
that computes a $(4/3-\eps)$-approximation to $\chi$
requires $\Omega(n^2/\log^2n)$ rounds.
\end{theorem}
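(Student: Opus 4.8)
The plan is to derive Theorem~\ref{cor: 3col approx lb} as an essentially immediate corollary of Claim~\ref{claim: approx-coloring}, by observing that a $(4/3-\eps)$-approximation algorithm for $\chi$ is, in particular, a distinguisher between $\chi\le 3c$ and $\chi\ge 4c$ once $c$ is fixed to a suitable constant. So first I would fix $\eps>0$ and pick any positive integer constant $c=c(\eps)$; the only arithmetic I need is the elementary inequality $(4/3-\eps)\cdot 3c = 4c-3\eps c < 4c$, which holds for every $c\ge 1$ (indeed $c=1$ already suffices, but leaving $c$ as a constant keeps the dependence on $\eps$ transparent and matches the statement of Claim~\ref{claim: approx-coloring}).

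Next I would run the hypothetical algorithm on the graph $G^{c}_{x,y}$ of Claim~\ref{claim: approx-coloring}. Suppose $ALG$ computes a $(4/3-\eps)$-approximation to $\chi$ in $T$ rounds. If $\chi(G^{c}_{x,y})\le 3c$, then $ALG$ must output a value at most $(4/3-\eps)\cdot 3c<4c$; if $\chi(G^{c}_{x,y})\ge 4c$, then $ALG$ outputs a value that is at least $\chi(G^{c}_{x,y})\ge 4c$ (since the estimate never underestimates $\chi$). Hence every node, by comparing the output to the fixed threshold $4c$, learns which of the two cases holds, so $ALG$ in particular distinguishes $\chi\le 3c$ from $\chi\ge 4c$. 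If $ALG$ instead returns an explicit coloring in $(4/3-\eps)\chi$ colors rather than a numeric estimate, the nodes first aggregate the number of distinct colors used along a BFS tree; since the lower-bound graphs have constant diameter, this adds only $O(D)=O(1)$ rounds, and the same comparison then applies.

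Finally I would invoke Claim~\ref{claim: approx-coloring} on this distinguisher to conclude $T=\Omega(n^2/(c^3\log^2 n))$, and since $c=c(\eps)$ is a constant, $c^3=O(1)$ and this is $\Omega(n^2/\log^2 n)$, as claimed; the argument is identical for the two phrasings (approximating $\chi$ versus producing a coloring in $(4/3-\eps)\chi$ colors).

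I do not expect a genuine obstacle: all of the combinatorial work — the $c$-fold blow-up $G^{c}_{x,y}$ that forces disjoint palettes across the $c$ copies and thereby separates $\chi=3c$ from $\chi\ge 4c$ — is already done in Claim~\ref{claim: approx-coloring}. The one point worth a sentence of care is \emph{why} the blow-up is needed at all: on an input with $\chi=3$ a $(4/3-\eps)$-approximation is automatically exact, but an algorithm is still free to err on inputs whose chromatic number lies strictly between the two relevant values; scaling the gap by $c$ pushes this "forbidden middle" beyond the reach of the approximation ratio, and because $c$ can be taken constant this costs only a constant factor in the final bound.
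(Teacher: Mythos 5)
Your proposal is correct and follows essentially the same route as the paper: the authors likewise observe that $(4/3-\eps)\cdot 3c<4c$ for any $c$, so a $(4/3-\eps)$-approximation algorithm distinguishes $\chi\leq 3c$ from $\chi\geq 4c$, and they then invoke Claim~\ref{claim: approx-coloring} with $c$ a constant to absorb the $c^{3}$ factor. Your additional remarks (handling the explicit-coloring variant via an $O(D)$ aggregation, and the explanation of why the $c$-fold blow-up is needed) are consistent with the paper's discussion and do not change the argument.
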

}

\ifabs{
}
{\subsection{Graph Coloring}
\label{sec:coloring}
\ColoringSection
}

\newcommand{\Psection}{
In this section we support our claim that what makes problems hard for the CONGEST model is not necessarily them being NP-hard problems. First, we address a class of subgraph detection problems, which requires detecting cycles of length $8$ and a given weight, and show a near-quadratic lower bound on the number of rounds required for solving it, although its sequential complexity is polynomial. Then, we define a problem which we call the \emph{Identical Subgraphs Detection} problem, in which the goal is to decide whether two given subgraphs are identical. While this last problem is rather artificial, it allows us to obtain a strictly quadratic lower bound for the CONGEST model, with a problem that requires only a single-bit output.

\subsection{Weighted Cycle Detection}
\label{sec:cycle}

In this section we show a lower bound on the number of rounds needed in order to decide the graph contains a cycle of length $8$ and weight $W$, such that $W$ is a $\mbox{polylog}(n)$-bit value given as an input. Note that this problem can be solved easily in polynomial time in the sequential setting by simply checking all of the at most $n\choose 8$ cycles of length $8$.

\begin{theorem}\label{thm: wCycles}
	Any distributed algorithm in the CONGEST model that decides if a graph with edge weights $w:E\rightarrow[0,\mbox{poly}(n)]$ contains a cycle of length $8$ and weight $W$ requires $\Omega(n^2/\log^2n)$ rounds.
\end{theorem}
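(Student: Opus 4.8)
The plan is to exhibit a family of lower bound graphs $\{G_{x,y}\mid x,y\in\{0,1\}^{k^2}\}$ with respect to $\disj_{k^2}$ and the predicate $P$ that says ``the graph contains a simple cycle on $8$ vertices whose edge weights sum to exactly $W$'', and then apply Theorem~\ref{thm: general lb framework}. Index $x$ and $y$ by pairs $(i,j)\in\{0,\dots,k-1\}^2$. The vertex set is partitioned as $V=V_A\dot\cup V_B$, where $V_A$ contains two designated nodes $p_A,q_A$ and two sets $A_1=\{a_1^i\}_i$, $A_2=\{a_2^j\}_j$, and $V_B$ is its mirror image with $p_B,q_B$, $B_1$, $B_2$. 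The input-independent edges are: the two ``cut'' edges $(p_A,p_B)$ and $(q_A,q_B)$ of weight $0$; the ``fan'' edges $(p_A,a_1^i)$ of weight $N+i$, $(a_2^j,q_A)$ of weight $2N+4kj$, $(q_B,b_2^j)$ of weight $4N+4k(k-j)$, and $(b_1^i,p_B)$ of weight $8N+(k-i)$, for all $i,j$; and, to keep the graph connected for every input, one auxiliary edge $(p_A,q_A)$ of huge weight, say $100N$. Here $N$ is a fixed value larger than $8k^2$, so all weights (and $W$, defined below) are $O(k^2)=\poly(n)$. To form $G_{x,y}$ I add, for each pair $(i,j)$, a weight-$0$ edge $(a_1^i,a_2^j)$ iff $x_{i,j}=1$, and a weight-$0$ edge $(b_1^i,b_2^j)$ iff $y_{i,j}=1$; these added edges lie inside $V_A$ and inside $V_B$ respectively, as Definition~\ref{def:family} requires. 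Then $n=\Theta(k)$, the input length is $K=k^2=\Theta(n^2)$, and the only edges between $V_A$ and $V_B$ are the two cut edges, so $|C|=O(1)$.

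The heart of the argument is the lemma that $G_{x,y}$ contains an $8$-cycle of weight $W:=15N+k+4k^2$ if and only if $\disj(x,y)=\false$. For the easy direction, if $x_{i,j}=y_{i,j}=1$ for some pair $(i,j)$, then all edges of the cycle $p_A,a_1^i,a_2^j,q_A,q_B,b_2^j,b_1^i,p_B$ are present, it has exactly $8$ vertices, and its weight telescopes to $15N+(i+k-i)+4k(j+k-j)=W$. For the converse I would first observe that, since the only $V_A$--$V_B$ edges are the two weight-$0$ cut edges, any simple cycle either stays inside one side or uses both cut edges exactly once (it must cross the cut an even number of times). In the latter case the cycle decomposes into a $p_A$--$q_A$ path inside $V_A$, which must leave $p_A$ along an ``$N+i$''-edge and enter $q_A$ along a ``$2N+4kj$''-edge while alternating through $A_1,A_2$ over weight-$0$ edges, plus a $q_B$--$p_B$ path of the analogous form, plus the two cut edges; since the $p_A$--$q_A$ and $q_B$--$p_B$ paths each have odd length $\ge3$ and their lengths sum to $6$, each has length exactly $3$, i.e.\ is $p_A,a_1^i,a_2^j,q_A$ with $x_{i,j}=1$ and $q_B,b_2^{j'},b_1^{i'},p_B$ with $y_{i',j'}=1$. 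The weights are engineered so that the total weight of such a cycle is $15N+(i+k-i')+4k(j+k-j')$, and because $i+k-i'\in[1,2k-1]\subset[0,4k)$ plays the role of the ``low digit'' in base $4k$ while $j+k-j'$ plays the ``high digit'' (no carries occur as the small part is below $8k^2<N$), this equals $W$ precisely when $i=i'$ and $j=j'$, producing a pair with $x_{i,j}=y_{i,j}=1$. The remaining case — a cycle contained in $V_A$ (or, symmetrically, in $V_B$) — is ruled out by the same digit bookkeeping: such a cycle uses $0$ or $2$ of the ``$N+i$''-edges and $0$ or $2$ of the ``$2N+4kj$''-edges, so its weight lies in $[0,N)\cup[2N,3N)\cup[4N,5N)\cup[6N,7N)$, missing $W\in[15N,16N)$; and any cycle through the auxiliary edge has weight $\ge100N>W$.

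Given the lemma, $\{G_{x,y}\}$ is a family of lower bound graphs for $\disj_{k^2}$ and $P$ with $|C|=O(1)$, so Theorem~\ref{thm: general lb framework}, together with $CC(\disj_{k^2})=\Omega(k^2)$ and the matching randomized bound, yields a lower bound of $\Omega(k^2/\log n)=\Omega(n^2/\log n)$ rounds, which in particular establishes Theorem~\ref{thm: wCycles} and holds for randomized algorithms as well. I expect the main obstacle to be the converse direction of the lemma: one has to be certain that the weight assignment admits no ``spurious'' $8$-cycle of weight exactly $W$ — in particular no cycle that winds several times through the $A_1$--$A_2$ (or $B_1$--$B_2$) bipartite part, and no cycle using two fan-edges at a single designated node — which is exactly what the choice of offsets $N,2N,4N,8N$ (powers of two, so the per-family edge multiplicities can be read off the weight) and the base-$4k$ packing of the index data are there to force; making these constants mutually consistent while keeping $w(E)\subseteq[0,\poly(n)]$ is the only delicate piece of bookkeeping.
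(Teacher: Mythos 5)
Your reduction is correct and is, at its core, the paper's own construction: the same four sets $A_1,A_2,B_1,B_2$, the same four hub nodes (your $p_A,q_A,p_B,q_B$ are the paper's $c_{A_1},c_{A_2},c_{B_1},c_{B_2}$), the same two weight-$0$ cut edges, a reduction from $\disj_{k^2}$, and the same target $8$-cycle. The one genuine difference is where the index information is stored. The paper makes all fixed edges weight $0$ and puts the encoding on the input-dependent edges themselves: $(a_1^i,a_2^j)$ gets weight $k^3+ki+j$ and $(b_1^i,b_2^j)$ gets weight $k^3-(ki+j)$, with $W=2k^3$; the converse direction then reduces to a short claim that a weight-$2k^3$ cycle contains exactly two input edges, one from each side, after which the $+/-$ encoding forces the indices to match. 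You instead make the input edges weight $0$ and encode $i$ and $j$ on the always-present fan edges, using offsets $N,2N,4N,8N$ to identify which fans a cycle uses; this works, but it shifts the burden onto a longer case analysis (cut-crossing parity, path lengths, one-sided cycles, the auxiliary edge), whereas the paper's placement makes the "exactly two input edges" argument essentially one line. Two small bookkeeping points in your version: the interval list $[0,N)\cup[2N,3N)\cup[4N,5N)\cup[6N,7N)$ covers only cycles inside $V_A$ (the $V_B$-side cycles land in $[0,N)\cup[8N,9N)\cup[16N,17N)\cup[24N,25N)$, which also miss $W\in(15N,16N)$), and the "small parts" of a one-sided cycle can reach $8k^2+2k$, so $N>8k^2$ is marginally too weak --- take, say, $N\geq 16k^2$. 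Both are trivially repaired. Your final bound $\Omega(n^2/\log n)$ matches what the paper's own proof actually derives (its theorem statement only claims $\Omega(n^2/\log^2 n)$), and your explicit auxiliary edge handles connectivity more cleanly than the paper's global remark does.
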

Similarly to the previous sections, to prove Theorem~\ref{thm: wCycles} we describe a family of lower bound graphs with respect to the Set Disjointness function and the predicate $P$ that says that the graph contains a cycle of length $8$ and weight $W$.

~\\
\noindent\textbf{The fixed graph construction:}
The fixed graph construction $G=(V,E)$ is defined as follows. The set of nodes contains four sets $A_1,A_2,B_1$ and $B_2$, each of size $k$. To simplify our proofs in this section, we assume that $k\geq 3$. For each set $S\in \{A_1,A_2,B_1,B_2\}$ there is a node $c_S$, which is connected to each of the nodes in $S$ by an edge of weight $0$. In addition there is an edge between $c_{A_1}$ and $c_{B_1}$ of weight 0 and an edge between $c_{A_2}$ and $c_{B_2}$ of weight 0 (see Figure~\ref{fig: cycle}).
\begin{figure}[t]
	\begin{center}
		\includegraphics[scale=0.5]{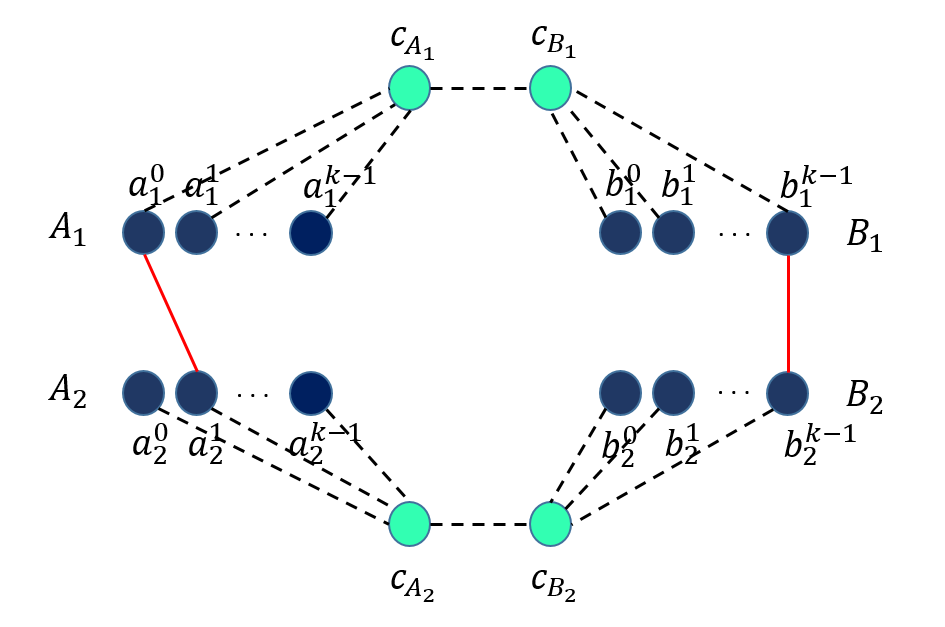}
	\end{center}
	\caption{{\small The family of lower bound graphs for detecting weighted cycles, with many edges omitted for clarity. Here, $x_{0,1}=1$ and $y_{k-1,k-1}=1$. Thus, $a^0_1$ is connected to $a^1_2$ by an edge of weight $k^3+k\cdot0+1=k^3+1$, and $b^{k-1}_1$ is connected to $b^{k-1}_2$ by an edge of weight $k^3-(k(k-1)+k-1)=k^3-k^2+1$. All the dashed edges are of weight 0.}}
	\label{fig: cycle}
\end{figure}

~\\
\noindent\textbf{Adding edges corresponding to the strings $x$ and $y$:}
Given two binary strings $x,y\in\set{0,1}^{k^2}$, we augment the fixed graph $G$ defined in the previous section with additional edges, which defines $G_{x,y}$. Recall that we assume that $k\geq 3$. Let $x$ and $y$ be indexed by pairs of the form $(i,j)\in \set{0,\ldots,k-1}^2$. For each $(i,j)\in \set{0,\ldots,k-1}^2$, we add to $G_{x,y}$ the following edges. If $x_{i,j}=1$, then we add an edge of weight $k^3+ki +j$ between the nodes $a_1^i$ and $a_2^j$. If $y_{i,j}=1$, then we add an edge of weight $k^3-(ki+j)$ between the nodes $b_1^i$ and $b_2^j$. We denote by $InputEdges$ the set of edges $\{(u,v)\mid u\in A_1 \wedge v\in A_2\}\cup \{(u,v)\mid u\in B_1 \wedge v\in B_2\}$, and we denote by $w(u,v)$ the weight of the edge $(u,v)$.

Observe that the graph does not contain edges of negative weight. Furthermore, the weight of any edge in $InputEdges$ does not exceed $k^3+k^2-1$, which is the weight of the edge $(a^{k-1}_1,a^{k-1}_2)$, in case $x_{k-1,k-1}=1$. Similarly, the weight of an edge in $InputEdges$ is not less than $k^3-k^2+1$, which is the weight of the edge $(b^{k-1}_1,b^{k-1}_2)$, in case $y_{k-1,k-1}=1$. Using these two simple observations, we deduce the following claim.

\begin{claim}\label{2-edges-cycle}
	For any cycle of weight $2k^3$, the number of edges it contains that are in $InputEdges$ is exactly two.
\end{claim}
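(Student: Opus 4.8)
The plan is to argue by a weight-counting argument that partitions the edges of any length-$8$ cycle according to which of the five weight classes they come from. The edges of $G_{x,y}$ are of three types: the $0$-weight edges (those incident to some $c_S$, and the two edges $\{c_{A_1},c_{B_1}\}$, $\{c_{A_2},c_{B_2}\}$); the edges of $A_1\times A_2$, each of weight in the range $[k^3, k^3+k^2-1]$; and the edges of $B_1\times B_2$, each of weight in the range $[k^3-k^2+1, k^3]$. So if a cycle contains $p$ edges from $A_1\times A_2$ and $q$ edges from $B_1\times B_2$ (and the rest of weight $0$), its total weight lies strictly between $p(k^3-k^2+1)+q(k^3-k^2+1)$-ish and $(p+q)(k^3+k^2-1)$; more precisely it lies in $[(p+q)(k^3-k^2+1),\,(p+q)(k^3+k^2-1)]$ after replacing the crude bounds by the per-type ranges. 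The key observation is that for $k\ge 3$ the interval around $(p+q)k^3$ of radius $(p+q)(k^2-1)$ does not contain $2k^3$ unless $p+q=2$: for $p+q=0$ the weight is $0\ne 2k^3$; for $p+q=1$ the weight is at most $k^3+k^2-1<2k^3$ (using $k\ge 2$); for $p+q=2$ the weight lies in $[2k^3-2k^2+2,\,2k^3+2k^2-2]$, which does contain $2k^3$; and for $p+q\ge 3$ the weight is at least $3(k^3-k^2+1)=3k^3-3k^2+3>2k^3$ precisely when $k^3>3k^2-3$, i.e.\ for $k\ge 3$. Hence any cycle of weight exactly $2k^3$ must use exactly two $InputEdges$.

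First I would record the two numerical facts already stated in the text: every $InputEdge$ has weight in $[k^3-k^2+1,\,k^3+k^2-1]$, and no edge has negative weight. Then I would split into the cases $p+q\in\{0,1\}$, $p+q=2$, $p+q\ge 3$ exactly as above, using $k\ge 3$ to kill the last case. Note that the claim as stated does not even need the cycle to have length $8$ — the bound $p+q\ge 3 \Rightarrow$ weight $>2k^3$ holds for any cycle — so the length-$8$ hypothesis is not needed here (it will be used in the subsequent lemmas that pin down \emph{which} two input edges appear and force $i,j$ to match). I would phrase the claim's proof purely in terms of the multiset of edge weights.

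The main (and only mild) obstacle is making sure the inequalities are tight for small $k$: the boundary case is $p+q=3$ with $k=3$, where $3(k^3-k^2+1)=3\cdot19=57>54=2k^3$, so the strict inequality does hold, but it is close, and one must be careful that the assumption $k\ge 3$ (rather than $k\ge 2$) is exactly what is needed — which is why the construction explicitly assumed $k\ge 3$. I would double-check the $p+q=1$ case as well: $k^3+k^2-1 < 2k^3 \iff k^2 < k^3+1$, true for all $k\ge 1$, so that case is comfortable. With these elementary bounds in hand the proof is a short case analysis, and I would present it as such rather than as anything more elaborate.
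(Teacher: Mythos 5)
Your proof is correct and is essentially the paper's own argument: both bound every edge of $InputEdges$ in the interval $[k^3-k^2+1,\,k^3+k^2-1]$, note all other edges have weight $0$, and rule out the cases of one (at most $k^3+k^2-1<2k^3$) and three or more (at least $3k^3-3k^2+3>2k^3$, using $k\ge 3$) input edges. Your additional observations — that the length-$8$ hypothesis is not needed for this particular claim, and that $k\ge 3$ is exactly the threshold for the $p+q=3$ case — are accurate and consistent with the paper's stated assumption.
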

\begin{proof}
	Let $C$ be a cycle of weight $2k^3$, and assume for the sake of contradiction that $C$ does not contain exactly two edges from $InputEdges$. In case $C$ contains exactly one edge from $InputEdges$, then the weight of $C$ is at most $k^3+k^2-1< 2k^3$, because all the other edges of $C$ are of weight 0. Otherwise, in case $C$ contains three or more edges from $InputEdges$, it holds that the weight of $C$ is at least $3k^3-3k^2+3>2k^3$, because all the other edges on $C$ are of non-negative weight.
\end{proof}

To prove that $\set{G_{x,y}}$ satisfies the definition of a family of lower bound graphs,
we prove the following lemma.

\begin{lemma}\label{lemma: maincycles}
	The graph $G_{x,y}$ contains a cycle of length $8$ and weight $W=2k^3$ if and only if $\disj(x,y)=\false$.
\end{lemma}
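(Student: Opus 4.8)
The plan is to prove both directions of the iff by exhibiting, respectively, an explicit $8$-cycle when $\disj(x,y)=\false$, and by analyzing the structure of any weight-$2k^3$ $8$-cycle when one exists. For the forward direction, suppose $\disj(x,y)=\false$, so there is an index $(i,j)$ with $x_{i,j}=y_{i,j}=1$. Then $G_{x,y}$ contains the edge $(a_1^i,a_2^j)$ of weight $k^3+ki+j$ and the edge $(b_1^i,b_2^j)$ of weight $k^3-(ki+j)$. I would close these into a cycle using only weight-$0$ edges: $a_1^i - c_{A_1} - c_{B_1} - b_1^i - b_2^j - c_{B_2} - c_{A_2} - a_2^j - a_1^i$. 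This uses the two input edges plus the six fixed weight-$0$ edges $\{a_1^i,c_{A_1}\}$, $\{c_{A_1},c_{B_1}\}$, $\{c_{B_1},b_1^i\}$, $\{b_2^j,c_{B_2}\}$, $\{c_{B_2},c_{A_2}\}$, $\{c_{A_2},a_2^j\}$, giving a cycle of length exactly $8$ and total weight $(k^3+ki+j)+(k^3-(ki+j))=2k^3=W$, as required.

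For the reverse direction, let $C$ be a cycle of length $8$ and weight $2k^3$. By Claim~\ref{2-edges-cycle}, $C$ contains exactly two edges of $InputEdges$; call them $e$ and $e'$. I would argue that these two edges cannot both be ``$A$-type'' (between $A_1$ and $A_2$) nor both ``$B$-type'': an $A$-type edge has weight in $[k^3, k^3+k^2-1]$ and a $B$-type edge has weight in $[k^3-k^2+1, k^3]$, so two $A$-type edges together weigh at least $2k^3$ with equality impossible (the weights $k^3+ki+j$ are all strictly above $k^3$ since we would need $i=j=0$, but then it is the single value $k^3$; I need to check the boundary — actually $x_{0,0}=1$ gives weight exactly $k^3$, so I must be slightly more careful and instead use a parity/connectivity argument). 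The cleaner route: the only edges leaving $A_1\cup A_2$ other than $InputEdges$ go through $c_{A_1}$ and $c_{A_2}$, and similarly for the $B$-side; since $C$ is a single cycle using exactly two input edges, removing those two edges leaves two paths, one lying entirely in the $A$-side component and one entirely in the $B$-side component, which forces $e$ to be $A$-type and $e'$ to be $B$-type (or vice versa). Write $e=(a_1^i,a_2^j)$ with weight $k^3+ki+j$ (so $x_{i,j}=1$) and $e'=(b_1^{i'},b_2^{j'})$ with weight $k^3-(ki'+j')$ (so $y_{i',j'}=1$). Then the weight condition gives $ki+j = ki'+j'$, and since $i,j,i',j'\in\{0,\dots,k-1\}$ this base-$k$ representation forces $i=i'$ and $j=j'$. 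Hence $x_{i,j}=y_{i,j}=1$, so $\disj(x,y)=\false$.

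The main obstacle I anticipate is the structural claim in the reverse direction — pinning down that the two input edges must be one $A$-type and one $B$-type, and that the connecting paths are short enough to fit within an $8$-cycle. I would handle this by a careful case analysis on where $C$ enters and leaves the four cliques via the cut vertices $c_{A_1},c_{A_2},c_{B_1},c_{B_2}$, using that $A_1\cup A_2$ (plus $c_{A_1},c_{A_2}$) connects to the rest of the graph only through the edge $\{c_{A_1},c_{B_1}\}$ and one of the input edges, and symmetrically for the $B$-side; together with the length-$8$ budget this leaves essentially only the cycle exhibited in the forward direction (up to choice of endpoints), which simultaneously confirms both the length and the weight bookkeeping. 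The remaining arithmetic — that equal values in $\{0,\dots,k-1\}^2$ under the map $(i,j)\mapsto ki+j$ coincide — is the standard uniqueness of base-$k$ digits and is routine.
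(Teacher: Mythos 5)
Your forward direction is exactly the paper's: the explicit $8$-cycle $(a^i_1,c_{A_1},c_{B_1},b^i_1,b^j_2,c_{B_2},c_{A_2},a^j_2)$ with the two input edges contributing $(k^3+ki+j)+(k^3-(ki+j))=2k^3$. The final arithmetic step in the reverse direction (uniqueness of the base-$k$ representation $ki+j$) also matches. The gap is in the middle of the reverse direction, where you must rule out the cycle using two $A$-type or two $B$-type input edges. You abandon the weight argument over a boundary worry that in fact dissolves immediately: the two input edges of a cycle are \emph{distinct} edges, hence correspond to distinct pairs $(i,j)\neq(i',j')$, hence to distinct values of $ki+j$; so two $A$-type edges have weights that are distinct elements of $\set{k^3,\ldots,k^3+k^2-1}$ and sum to at least $k^3+(k^3+1)=2k^3+1>2k^3$ (and symmetrically two $B$-type edges sum to at most $2k^3-1$). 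This is precisely the paper's argument, and it needs no connectivity reasoning at all.

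The substitute "connectivity" argument you propose instead is based on a wrong picture of the graph. After deleting the input edges, the weight-$0$ subgraph does \emph{not} split into an "$A$-side" and a "$B$-side": $c_{A_1}$ is joined to $c_{B_1}$ and $c_{A_2}$ to $c_{B_2}$, so the two components are $A_1\cup B_1\cup\set{c_{A_1},c_{B_1}}$ and $A_2\cup B_2\cup\set{c_{A_2},c_{B_2}}$ (every input edge crosses between these, whether it is $A$-type or $B$-type). Consequently, connectivity alone is perfectly consistent with a cycle using two $A$-type edges: $(a_1^i,c_{A_1},a_1^{i'},a_2^{j'},c_{A_2},a_2^{j})$ is such a cycle. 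What rules it out is the \emph{length} budget — the connecting paths then have length $2$ each, giving a $6$-cycle rather than an $8$-cycle — but you never carry out this count, and your stated claim ("one path entirely in the $A$-side, one entirely in the $B$-side") is false even for the intended cycle, whose two weight-$0$ paths each cross between Alice's and Bob's nodes. So as written the key structural step does not go through; either complete the length-$8$ case analysis honestly, or (more simply) finish the weight argument you started.
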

\begin{proof}
	For the first direction, assume that $\disj(x,y)=\false$ and let $0\leq i,j\leq k-1$ be such that $x_{i,j}=1$ and $y_{i,j}=1$. Consider the cycle $(a^i_1,c_{A_1},c_{B_1},b^i_1,b^j_2,c_{B_2},c_{A_2},a^j_2)$. It is easy to verify that this is a cycle of length 8 and weight $w(a^j_1,a^i_2)+w(b^i_1,b^j_2)=k^3+ki+j+k^3-ki-j=2k^3$, as needed.
	
	For the other direction, assume that the graph contains a cycle $C$ of length $8$ and weight $2k^3$. By Claim~\ref{2-edges-cycle}, $C$ contains exactly two edges from $InputEdges$. Denote these two edges by $(u_1,v_1)$ and $(u_2,v_2)$. Since all the other edges in $C$ are of weight 0, the weight of $C$ is $w(u_1,v_1)+w(u_2,v_2)$. The rest of the proof is by case analysis, as follows. First, it is not possible that $(u_1,v_1),(u_2,v_2)\in \{(u,v)\mid u\in A_1 \wedge v\in A_2\}$, since in this case $w(u_1,v_1)+w(u_2,v_2) \geq w(a^0_1,a^0_2)+w(a^0_1,a^1_2)=2k^3+1$. Similarly, it is not possible that $(u_1,v_1),(u_2,v_2)\in \{(u,v)\mid u\in B_1 \wedge v\in B_2\}$, since in this case $w(u_1,v_1)+w(u_2,v_2)\leq w(b^{0}_1,b^{0}_2)+w(b^{0}_1,b^{1}_2)=2k^3-1$. Finally, suppose without loss of generality that $(u_1,v_1)\in \{(u,v)\mid u\in A_1 \wedge v\in A_2\}$ and $(u_2,v_2)\in \{(u,v)\mid u\in B_1 \wedge v\in B_2\}$. Denote $u_1=a^i_1, u_2=a^j_1, v_1=b^{i'}_1$ and $v_2=b^{j'}_2$. It holds that $w(a^i_1,a^j_2)+w(b^{i'}_1,b^{j'}_2)=2k^3$ if and only if $i=i'$ and $j=j'$, which implies that $x_{i,j}=1$ and $y_{i,j}=1$ and $\disj(x,y)=\false$.
\end{proof}

Having constructed the family of lower bound graphs, we are now ready to prove Theorem~\ref{thm: wCycles}.
\begin{proofof}{Theorem~\ref{thm: wCycles}}
	To complete the proof of Theorem~\ref{thm: wCycles}, we divide the nodes of $G$ (which are also the nodes of $G_{x,y}$) into two sets. Let $V_A = A_1 \cup A_2 \cup \{c_{A_1},c_{A_2}\}$ and $V_B=V\setminus V_A$. Note that $n\in \Theta(k)$. Thus, $K=|x|=|y|=\Theta(n^2)$. Furthermore, note that the only edges in the cut $E(V_A, V_B)$ are the edges $(c_{A_1},c_{B_1})$ and $(c_{A_2},c_{B_2})$.
	Since Lemma~\ref{lemma: maincycles} shows that $\{G_{x,y}\}$ is a family of lower bound graphs, we apply Theorem~\ref{thm: general lb framework} on the above partition to deduce that because of the lower bound for Set Disjointness, any algorithm in the CONGEST model for deciding whether a given graph contains a cycle of length $8$ and weight $W=2k^3$ requires at least $\Omega(K/\log n)=\Omega(n^2/\log n)$ rounds.
\end{proofof}

\subsection{Identical Subgraphs Detection}
In this section we show the strongest possible, quadratic lower bound,
for a problem which can be solved in linear time in the sequential setting.

Consider the following sequential specification of a graph problem.

\begin{definition}(The Identical Subgraphs Detection Problem)
\label{def:identicalproblem}\newline
Given a weighted input graph $G=(V,E,w)$, with an edge-weight function $w:E \to \set{0,\ldots,W-1}$, $W\in \poly n$, such that the set of nodes $V$ is partitioned into two enumerated sets of the same size, $V_A=\{a_0,...,a_{k-1}\}$ and $V_B=\{b_0,...,b_{k-1}\}$, the \emph{Identical Subgraphs Detection} problem is to determine whether the subgraph induced by the set $V_A$ is identical to the subgraph induced by the set $V_B$, in the sense that for each $0\leq i,j\leq k-1$ it holds that $(a_i,a_j)\in E$ if and only if $(b_i,b_j)\in E$ and $w(a_i,a_j)=w(b_i,b_j)$ if these edges exist.
\end{definition}

The identical subgraphs detection problem can be solved easily in linear time in the sequential setting by a single pass over the set of edges. However, as we prove next, it requires a quadratic number of rounds in the CONGEST model, in any deterministic solution (note that this restriction did not apply in the previous sections).
For clarity, we emphasize that in the distributed setting, the input to each node in the identical subgraphs detection problem is its enumeration $a_i$ or $b_i$, as well as the enumerations of its neighbors and the weights of the respective edges.
The outputs of all nodes should be $\true$ if the subgraphs are identical,
and $\false$ otherwise.

\begin{theorem}\label{thm: Identical Subgraphs}
	Any distributed deterministic algorithm in the CONGEST model for solving the identical subgraphs detection problem requires $\Omega(n^2)$ rounds.
\end{theorem}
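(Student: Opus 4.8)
The plan is to exhibit a family of lower bound graphs with respect to the Equality function $\eq_K$ and the predicate $P$ stating that the subgraph induced by $V_A$ equals, as a labeled weighted graph, the subgraph induced by $V_B$, and then to invoke Theorem~\ref{thm: general lb framework}. The observation driving the \emph{strict} $\Omega(n^2)$ bound is that a weighted graph on $k=n/2$ vertices, with edge weights drawn from a $\poly(n)$-sized range, encodes $\Theta(k^2\log n)=\Theta(n^2\log n)$ bits: there are $\Theta(k^2)$ potential edges and each carries $\Theta(\log n)$ bits. Embedding an input of this length while keeping the cut at $O(1)$ edges makes the $\log n$ factor lost to the $O(\log n)$-bit bandwidth in Theorem~\ref{thm: general lb framework} cancel exactly. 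Since $\eq_K$ has randomized communication complexity only $\Theta(\log K)$, this can only give a deterministic bound, which matches the statement, and this restriction is inherent: as noted in the introduction, the problem admits an $O(D)$-round randomized algorithm.

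Concretely, I would fix $V_A=\{a_0,\dots,a_{k-1}\}$ and $V_B=\{b_0,\dots,b_{k-1}\}$ with $n=2k$, and, to ensure connectivity for every input, include three fixed weight-$0$ structures: the path $a_0-a_1-\cdots-a_{k-1}$, the path $b_0-b_1-\cdots-b_{k-1}$, and the bridge edge $\{a_0,b_0\}$. Fix a weight bound $W=\Theta(n)$, permissible since $W\in\poly n$, and an injective encoding sending a string $z\in\{0,1\}^K$ to a choice, for each of the $\binom{k}{2}-(k-1)=\Theta(k^2)$ non-path pairs, of either "no edge" or an edge with one of the $W$ available weights, where $K=\Theta(k^2\log n)$ is taken maximal subject to $2^K\le (W+1)^{\binom{k}{2}-(k-1)}$. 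Given $x,y\in\{0,1\}^K$, the graph $G_{x,y}$ realizes the encoding of $x$ on the non-path pairs inside $V_A$ and that of $y$ on the non-path pairs inside $V_B$.

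It remains to verify Definition~\ref{def:family}. Items (1)--(3) hold by construction: the node set and its partition are fixed, the only input-dependent edges inside $V_A$ depend solely on $x$, those inside $V_B$ solely on $y$, and the two paths and the bridge are input-independent. For item (4): the induced subgraphs on $V_A$ and on $V_B$ carry the same fixed path, so they coincide iff the input-dependent pairs agree under the common injective encoding, i.e. iff $x=y$, i.e. iff $\eq_K(x,y)=\true$; moreover $G_{x,y}$ is connected for all $x,y$, so the predicate is well defined and all nodes can settle on one output bit. With this partition the cut $E(V_A,V_B)$ consists of the single edge $\{a_0,b_0\}$, so $\size{C}=1$; since $n\in\Theta(k)$ we have $K=\Theta(n^2\log n)$, and Theorem~\ref{thm: general lb framework} together with $CC(\eq_K)=\Omega(K)$ yields $\Omega(K/(\size{C}\log n))=\Omega(n^2\log n/\log n)=\Omega(n^2)$ rounds.

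The one delicate point is the information-theoretic accounting in the first paragraph: one must be sure that the $\Theta(n^2)$ free edge slots, each a $\poly(n)$-valued weight, genuinely support an input of length $K=\Theta(n^2\log n)$, since it is exactly this extra logarithmic factor---available here because of the weights, but absent in the unweighted and $\disj$-based constructions of the earlier sections---that upgrades the generic $\Omega(n^2/\log n)$ into a strict $\Omega(n^2)$. A minor point is guaranteeing connectivity without enlarging the cut, which the two fixed spanning paths achieve at the negligible cost of $k-1$ slots per side, leaving $K$ unchanged up to constants.
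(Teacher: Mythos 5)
Your proposal is correct and follows essentially the same route as the paper: a reduction from $\eq_K$ with $K=\Theta(n^2\log n)$ bits packed into the edge weights of the two induced subgraphs, a single cut edge $\{a_0,b_0\}$, and an application of Theorem~\ref{thm: general lb framework} using the deterministic communication complexity $CC(\eq_K)=\Omega(K)$. The only (immaterial) difference is that the paper fixes two cliques and encodes the input purely in the weights, whereas you ensure connectivity with spanning paths and additionally encode edge existence.
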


To prove Theorem~\ref{thm: Identical Subgraphs} we describe a family of lower bound graphs.

~\\
\noindent\textbf{The fixed graph construction:}
The fixed graph $G$ is composed of two $k$-node cliques
on the node sets $V_A=\{a_0,...,a_{k-1}\}$ and $V_B=\{b_0,...,b_{k-1}\}$,
and one extra edge $(a_0,b_0)$.

~\\
\noindent\textbf{Adding edge weights corresponding to the strings $x$ and $y$:}
Given two binary strings $x$ and $y$, each of size $K={k \choose 2}\log n $, we augment the graph $G$ with additional edge weights, which define $G_{x,y}$. For simplicity, assume that $x$ and $y$ are vectors of $\log n $-bit numbers each having ${k\choose 2}$ entries enumerated as $x_{i,j}$ and $y_{i,j}$, with $i<j$, $i,j\in \set{0,\ldots,k-1}$. For each such $i$ and $j$ we set the weights of $w(a_i,a_j)=x_{i,j}$ and $w(b_i,b_j)=y_{i,j}$,
and we set $w(a_0,b_0)=0$.
 Note that $\set{G_{x,y}}$ is a family of lower bound graphs with respect to $\eq_K$ and the predicate $P$ that says that the subgraphs are identical in the aforemention sense.

\begin{proofof}{Theorem~\ref{thm: Identical Subgraphs}}
Note that $n\in \Theta(k)$, and thus $K=|x|=|y|=\Theta(n^2\log n)$. Furthermore, the only edge in the cut $E(V_A, V_B)$ is the edge $(a_0,b_0)$. Since we showed that $\set{G_{x,y}}$ is a family of lower bound graphs, we can apply Theorem~\ref{thm: general lb framework} on the above partition to deduce that because of the lower bound for $\eq_K$, any deterministic algorithm in the CONGEST model for solving the identical subgraphs detection problem requires at least $\Omega(K/\log n)=\Omega(n^2)$ rounds.
\end{proofof}

We remark that in a distributed algorithm for the identical subgraphs detection problem running on our family of lower bound graphs,
information about essentially all the edges and weights in the subgraphs induced on $V_A$ and $V_B$ needs to be sent across the edge $(a_0,b_0)$. This might raise the suspicion that this problem is reducible to learning the entire graph, making the lower bound trivial.
To argue that this is far from being the case,
we present a randomized algorithm that solves the identical subgraphs detection problem in $O(D)$ rounds and succeeds w.h.p.
This has the additional benefit of providing the strongest possible separation between deterministic and randomized complexities for global problems in the CONGEST model, as the former is $\Omega(n^2)$ and the latter is at most $O(D)$.

Our starting point is the following randomized algorithm for the $\eq_K$ problem, presented in~\cite[Exersise 3.6]{KushilevitzN:book96}. Alice chooses a prime number $p$ among the first $K^2$ primes uniformly at random. She treats her input string $x$ as a binary representation of an integer $\bar x=\sum_{\ell=0}^{K-1} 2^\ell x_\ell$, and sends $p$ and $\bar x\pmod p$ to Bob. Bob similarly computes $\bar y$, compares $\bar x\bmod p$ with $\bar y\bmod p$, and returns $\true$ if they are equal and false otherwise. The error probability of this protocol is at most $1/K$.

We present a simple adaptation of this algorithm for the identical subgraph detection problem. Consider the following encoding of a weighted induced subgraph on $V_A$: for each pair $i,j$ of indices, we have $\lceil\log W \rceil+1$ bits, indicating the existence of the edge and its weight (recall that $W\in\poly n$ is an upper bound on the edge weights). This weighted induced subgraph is thus represented by a $K\in O(n^2\log n)$ bit-string, denoted $x = x_0,\ldots, x_{K-1}$, and each pair $(i,j)$ has a set $S_{i,j}$ of indices representing the edge $(a_i,a_j)$.
Note that the bits $\set{x_\ell\mid \ell\in s_{i,j}}$ are known to both $a_i$ and $a_j$, and in the algorithm we use the node with smaller index in order to encode these bits. Similarly, a $K\in O(n^2\log n)$ bit-string, denoted $y = y_0,\ldots, y_{K-1}$ encodes a weighted induced subgraph on $V_B$.

\textbf{The Algorithm.}
As standard, assume the input graph is connected. The nodes are enumerated as in Definition~\ref{def:identicalproblem}. The algorithm starts with some node, say, $a_0$, constructing a BFS tree, which completes in $O(D)$ rounds. Then, $a_0$ chooses a prime number $p$ among the first $K^2$ primes uniformly at random and sends $p$ to all the nodes over the tree, which takes $O(D)$ rounds.

Each node $a_i$ computes the sum $\sum_{j>i}\sum_{\ell\in S_{i,j}} x_\ell 2^\ell \mod p$, and the nodes then aggregate these local sums modulo $p$ up the tree, until $a_0$ computes the sum
$\bar x\mod p = \sum_{j\neq i}\sum_{\ell\in S_{i,j}} x_\ell 2^\ell \mod p$. A similar procedure is then invoked w.r.t $\bar{y}$. Finally, $a_0$ compares $\bar x\mod p$ and $\bar y\mod p$, and downcasts over the BFS tree its output, which is $\true$ if these values are equal and is $\false$ otherwise.

If the subgraphs are identical, $a_0$ always returns $\true$, while otherwise their encoding differs in at least one bit, and as in the case of $\eq_K$, $a_0$ returns $\true$ falsely with probability at most $1/K\in O(1/n^2)$.

\begin{theorem}
\label{thm: Identical Subgraphs randomized alg}
There is a randomized algorithm in the CONGEST model that 
solves the identical subgraphs detection problem
on any connected graph
in $O(D)$ rounds.
\end{theorem}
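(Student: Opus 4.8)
The plan is to observe that the algorithm needed is exactly the one described in the paragraphs preceding the theorem statement, so the proof reduces to two routine tasks: bounding the cost of each phase by $O(D)$, and importing the one-sided error bound of the $\eq_K$ fingerprinting protocol. I would present it in that order.

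First I would handle the round complexity. Building the BFS tree rooted at $a_0$ takes $O(D)$ rounds, and during the same phase the nodes can learn $n$ by a convergecast, and hence agree on the a~priori polynomial bound $W$ and on the encoding length $K\in O(n^2\log n)$. Since $p$ is chosen among the first $K^2$ primes, standard estimates give $p=O(K^2\log K)$, so $\log p=O(\log K)=O(\log n)$; thus $p$ fits in a single CONGEST message and its broadcast down the tree costs $O(D)$ rounds. Each node $a_i$ then computes its local contribution $\sum_{j>i}\sum_{\ell\in S_{i,j}} x_\ell 2^\ell \bmod p$ with no communication: it knows its own index, the indices of its neighbours, and the weights of its incident edges (hence the bits $x_\ell$ for $\ell\in S_{i,j}$), and it can compute each needed power $2^\ell \bmod p$ locally by iterated squaring. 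The two convergecasts that sum these residues modulo $p$ up the tree — one for $\bar x$, one for $\bar y$ — take $O(D)$ rounds each, every message being a single residue mod $p$, i.e.\ $O(\log n)$ bits. The concluding comparison at $a_0$ and the downcast of the single output bit add another $O(D)$ rounds, for a total of $O(D)$.

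Next I would argue correctness. Adopting the convention that the edge $(a_i,a_j)$ with $i<j$ is encoded by the lower-indexed endpoint, the integer $\bar x=\sum_{\ell=0}^{K-1} x_\ell 2^\ell$ is precisely the number whose binary representation is the agreed encoding of the subgraph induced on $V_A$, and $\bar x=\sum_i\bigl(\sum_{j>i}\sum_{\ell\in S_{i,j}} x_\ell 2^\ell\bigr)$; since addition modulo $p$ is associative and commutative, the convergecast delivers $\bar x \bmod p$ to $a_0$, and likewise $\bar y \bmod p$. If the induced subgraphs are identical in the sense of Definition~\ref{def:identicalproblem}, the encodings coincide, so $\bar x=\bar y$, $a_0$ computes equal residues, and all nodes output $\true$ — the algorithm is never wrong in this case. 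If the subgraphs differ, then $\bar x\neq\bar y$ with $0\le \bar x,\bar y<2^K$, and the fingerprinting analysis of the $\eq_K$ protocol (\cite[Exercise 3.6]{KushilevitzN:book96}) shows that a uniformly random prime among the first $K^2$ primes divides $\bar x-\bar y$ with probability at most $1/K\in O(1/n^2)$; hence $a_0$ outputs $\false$ with probability $1-O(1/n^2)$, and its downcast propagates this answer to all nodes.

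The only points needing genuine care — none of them a real obstacle — are that all nodes must agree on a common bit-layout of the encoding, so that the index sets $S_{i,j}$ are globally consistent and every edge contributes exactly once (arranged by first learning $n$ and using the known bound $W$), and that the per-message size stays $O(\log n)$, which holds since both $p$ and any residue modulo $p$ have $O(\log n)$ bits. With these observations the proof is complete.
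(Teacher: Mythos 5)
Your proposal is correct and matches the paper's own argument: the paper proves this theorem exactly by the BFS-tree fingerprinting algorithm described in the paragraphs preceding the statement, with the same $O(D)$ accounting per phase and the same $1/K\in O(1/n^2)$ one-sided error bound inherited from the randomized $\eq_K$ protocol. The extra details you supply (that $\log p=O(\log n)$ so messages fit in the CONGEST bandwidth, and that nodes can agree on the bit-layout after learning $n$) are routine and consistent with what the paper leaves implicit.
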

}

\ifabs{}{
\section{Quadratic and Near-Quadratic Lower Bounds for Problems in P}
\label{sec:P}
\Psection
}

\section{Weighted APSP}
\label{sec:APSP}

In this section we use the following, natural extension of Definition~\ref{def:family}, in order to address more general 2-party functions, as well as distributed problems that are not decision problems.

For a function $f:\set{0,1}^{K_1}\times\set{0,1}^{K_2}\to \set{0,1}^{L_1}\times\set{0,1}^{L_2}$, we define a family of lower bound graphs in a similar way as Definition~\ref{def:family}, except that we replace item (4) in the definition with a generalized requirement that says that for $G_{x,y}$, the values of the of nodes in $V_A$ uniquely determine the left-hand side of $f(x,y)$, and the values of the of nodes in $V_B$ determine the right-hand side of $f(x,y)$. Next, we argue that theorem similar to Theorem~\ref{thm: general lb framework} holds for this case.

\begin{theorem}
\label{thm: general lb framework APSP}
Fix a function $f:\set{0,1}^{K_1}\times\set{0,1}^{K_2}\to \set{0,1}^{L_1}\times\set{0,1}^{L_2}$ and a graph problem $P$. If there is a family $\set{G_{x,y}}$ of lower bound graphs with $C = E(V_A, V_B)$ then any deterministic algorithm for solving $P$ in the CONGEST model requires $\Omega (CC(f)/\size{C}\log n)$ rounds, and any randomized algorithm for deciding $P$ in the CONGEST model requires $\Omega (CC^R(f)/\size{C}\log n)$ rounds.
\end{theorem}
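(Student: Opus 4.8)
The plan is to repeat, essentially verbatim, the simulation argument used to prove Theorem~\ref{thm: general lb framework}, changing only the way Alice and Bob read off the answer at the end. Fix an algorithm $ALG$ that solves $P$ in the CONGEST model in $T$ rounds. Given inputs $x\in\{0,1\}^{K_1}$ and $y\in\{0,1\}^{K_2}$, Alice constructs the subgraph of $G_{x,y}$ induced on $V_A$ together with the cut edges incident to $V_A$, and Bob constructs the part induced on $V_B$; by the analogues of items (1)--(3) of Definition~\ref{def:family} each player can do this from its own input alone, since only the existence/weights of edges inside $V_A\times V_A$ depend on $x$ and only those inside $V_B\times V_B$ depend on $y$. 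They then run $ALG$ jointly: each player locally simulates the nodes it owns, and whenever a message must cross an edge of $C=E(V_A,V_B)$ the owning player sends its $O(\log n)$ bits to the other player. After $T$ rounds every node has produced its output.

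Next I would invoke the generalized item (4): the outputs of the nodes of $V_A$ determine the left-hand side of $f(x,y)\in\{0,1\}^{L_1}$, and the outputs of the nodes of $V_B$ determine the right-hand side $\in\{0,1\}^{L_2}$. Hence, with no further communication, Alice can output her half of $f(x,y)$ and Bob his half, so the protocol just described is a valid two-party protocol for $f$ in the standard model where each player is required only to produce its own part of the output --- which is precisely the model in which $CC(f)$ and $CC^R(f)$ are measured. Its communication cost is at most $2\size{C}\log n$ bits per round over $T$ rounds, i.e.\ $O(T\size{C}\log n)$ bits in total. Comparing with $CC(f)$ yields $T=\Omega(CC(f)/(\size{C}\log n))$. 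For a randomized $ALG$, each player uses its own random string to supply the private randomness of the nodes it simulates, so the simulation gives a randomized protocol for $f$ with the same success probability (trivially amplified above $2/3$ if needed), whence $T=\Omega(CC^R(f)/(\size{C}\log n))$.

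The argument is essentially routine, and I expect the only points needing care to be bookkeeping rather than ideas. First, a node of $V_A$ incident to several cut edges may send different messages along different edges in the same round, so Alice must transmit one $O(\log n)$-bit message per cut edge per direction per round; this is exactly why the per-round overhead is $\Theta(\size{C}\log n)$ and not $\Theta(\log n)$, just as in Theorem~\ref{thm: general lb framework}. Second, one must make sure the generalized item (4) is stated sharply enough that the phrase ``the values of the nodes in $V_A$ uniquely determine the left-hand side of $f(x,y)$'' can be used directly to let Alice compute her output locally. Beyond these, the genuinely new content relative to Theorem~\ref{thm: general lb framework} is merely the observation that a two-sided output function fits the communication-complexity framework without modification, so no new lower-bound technique is required --- which is the main ``obstacle'' only in the sense that it is the one step where the reader must be convinced nothing more is needed.
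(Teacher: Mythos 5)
Your proposal is correct and follows essentially the same route as the paper: the paper proves this theorem by observing that the simulation argument of Theorem~\ref{thm: general lb framework} carries over verbatim, with the generalized item (4) letting Alice and Bob read off their respective halves of $f(x,y)$ from the outputs of the nodes they simulate. Your additional bookkeeping remarks (per-cut-edge messages, private randomness) are consistent with the paper's intent and raise no issues.
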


The proof is similar to that of Theorem~\ref{thm: general lb framework}. Notice that the only difference between the theorems, apart from the sizes of the inputs and outputs of $f$, are with respect to item (4) in the definition of a family of lower bound graphs. However, the essence of this condition remains the same and is all that is required by the proof: the values that a solution to $P$ assigns to nodes in $V_A$ determines the output of Alice for $f(x,y)$, and the values that a solution to $P$ assigns to nodes in $V_B$ determines the output of Bob for $f(x,y)$.

\subsection{A Linear Lower Bound for Weighted APSP}\label{sec:linearlb}

Nanongkai~\cite{Nanongkai14} showed that any algorithm in the CONGEST model for computing a $\poly(n)$-approximation for weighted all pairs shortest paths (APSP) requires at least $\Omega(n/\log n)$ rounds. In this section we show that a slight modification to this construction yields an $\Omega(n)$ lower bound for computing exact weighted APSP. As explained in the introduction, this gives a separation between the complexities of the weighted and unweighted versions of APSP. At a high level, while we use the same simple topology for our lower bound as in~\cite{Nanongkai14}, the reason that we are able to shave off the extra logarithmic factor is because our construction uses $O(\log{n})$ bits for encoding the weight of each edge out of many optional weights, while in~\cite{Nanongkai14} only a single bit is used per edge for encoding one of only two options for its weight.

\begin{theorem}\label{Thm:wAPSP}
Any distributed algorithm in the CONGEST model for computing exact weighted all pairs shortest paths requires at least $\Omega(n)$ rounds.
\end{theorem}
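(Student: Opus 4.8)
The plan is to exhibit a family of lower bound graphs in the generalized sense introduced just before Theorem~\ref{thm: general lb framework APSP}, having a cut of constant size and an input of length $\Theta(n\log n)$, and then to invoke that theorem. Concretely, I would use (a slight modification of) the simple topology of~\cite{Nanongkai14}: take two stars, one with center $c_A$ and leaves $a_1,\dots,a_k$ (these together with $c_A$ form $V_A$), one with center $c_B$ and leaves $b_1,\dots,b_k$ (forming $V_B$), and join the two centers by a single edge $(c_A,c_B)$ of fixed weight $1$. This graph is connected, has constant unweighted diameter, and the only edge of the cut $E(V_A,V_B)$ is $(c_A,c_B)$, so $\size{C}=1$; also $n=2k+2$, hence $k=\Theta(n)$.

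Next I would encode the inputs in edge weights. Given $x=(x_1,\dots,x_k)$ and $y=(y_1,\dots,y_k)$, where each $x_i,y_i$ is a positive integer with $\Theta(\log n)$ bits (so that $x,y\in\set{0,1}^K$ with $K=\Theta(n\log n)$ and all weights lie in $[1,\poly n]$), define $G_{x,y}$ from $G$ by setting $w(c_A,a_i)=x_i$ and $w(c_B,b_i)=y_i$, keeping $w(c_A,c_B)=1$. Only weights of edges inside $V_A\times V_A$ depend on $x$ and only weights inside $V_B\times V_B$ depend on $y$, as required by items (2)–(3). The key structural point is that the unique simple path from $c_A$ to any $b_i$ passes through $c_B$ — every leaf has degree $1$ and the centers are joined by one edge — so $\dist(c_A,b_i)=1+y_i$, and symmetrically $\dist(c_B,a_i)=1+x_i$. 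Hence the APSP output at $c_A\in V_A$ determines $y$ and the APSP output at $c_B\in V_B$ determines $x$, so $\set{G_{x,y}}$ is a family of lower bound graphs with respect to the ``swap'' function $f(x,y)=(y,x)$ and the weighted APSP problem, in the generalized sense of Theorem~\ref{thm: general lb framework APSP}.

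It then remains to lower-bound $CC(f)$. In any protocol, Alice's output is a function of $x$ and the transcript; since for each fixed $x$ she must output $y$ correctly for all $2^{K}$ choices of $y$, at least $2^{K}$ distinct transcripts must occur, so $CC(\pi)\ge K$ for every protocol $\pi$, i.e.\ $CC(f)=\Omega(K)=\Omega(n\log n)$ (and a symmetric argument applies to Bob). Plugging $CC(f)=\Omega(n\log n)$ and $\size{C}=1$ into Theorem~\ref{thm: general lb framework APSP} yields a lower bound of $\Omega\!\big(CC(f)/(\size{C}\log n)\big)=\Omega(n)$ rounds, which is Theorem~\ref{Thm:wAPSP}.

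I do not expect a real obstacle here — as the introduction notes, the construction is ``embarrassingly simple.'' The only points requiring care are (i) verifying there is no shorter path that could corrupt the encoded value, which is immediate from the degree-$1$ leaves and the single connecting edge, and (ii) checking that $\Theta(\log n)$ bits per edge can be packed into a polynomially bounded integer weight, so that the total input length is $\Theta(n\log n)$ rather than merely $\Theta(n)$ — this is precisely the difference from~\cite{Nanongkai14}, where using a single bit per edge only gives $\Omega(n/\log n)$. I would also remark that, unlike the Set Disjointness constructions, no exclusion of special inputs is needed since $G_{x,y}$ is connected for every $x,y$.
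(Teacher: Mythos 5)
Your proposal is correct and follows essentially the same route as the paper: a star-like topology with a single cut edge, input encoded via $\Theta(\log n)$-bit edge weights to get $K=\Theta(n\log n)$, a trivial information-theoretic bound of $\Omega(K)$ on the communication needed for one player to recover the other's input, and an application of Theorem~\ref{thm: general lb framework APSP}. The only (cosmetic) difference is that you symmetrize the construction with two stars and the swap function, whereas the paper uses one star on Alice's side, a single node $b$ on Bob's side, and the one-sided function with $K_2=L_1=0$.
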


The reduction is from the following, perhaps simplest, $2$-party communication problem. Alice has an input string $x$ of size $K$ and Bob needs to learn the string of Alice. Any algorithm (possibly randomized) for solving this problem requires at least $\Omega(K)$ bits of communication, by a trivial information theoretic argument.

Notice that the problem of having Bob learn Alice's input is not a binary function as addressed in Section~\ref{sec:preliminaries}. Similarly, computing weighted APSP is not a decision problem, but rather a problem whose solution assigns a value to each node (which is its vector of distances from all other nodes). We therefore use the extended Theorem~\ref{thm: general lb framework APSP} above.

~\\
\noindent\textbf{The fixed graph construction:}
The fixed graph construction $G = (V,E)$ is defined as follows.
It contains a set of $n-2$ nodes, denoted $A=\{a_0,...,a_{n-3}\}$,
which are all connected to an additional node $a$.
The node $a$ is connected to the last node $b$, by an edge of weight 0.

~\\
\noindent\textbf{Adding edge weights corresponding to the string $x$:} Given the binary string $x$ of size $K=(n-2)\log n$ we augment the graph $G$ with edge weights, which defines $G_{x}$, by having each non-overlapping batch of $\log{n}$ bits encode a weight of an edge from $A$ to $a$. It is straightforward to see that $G_{x}$ is a family of lower bound graphs for a function $f$ where $K_2=L_1=0$, since the weights of the edges determine the right-hand side of the output (while the left-hand side is empty).

\begin{proofof}{Theorem~\ref{Thm:wAPSP}}
To prove Theorem~\ref{Thm:wAPSP}, we let $V_A = A\cup\{a\}$ and $V_B=\set{b}$.
Note that $K=|x|=\Theta(n\log n)$. Furthermore, note that the only edge in the cut $E(V_A, V_B)$ is the edge $(a,b)$.
Since we showed that $\{G_{x}\}$ is a family of lower bound graphs, we apply Theorem~\ref{thm: general lb framework APSP} on the above partition to deduce that because $K$ bits are required to be communicated in order for Bob to know Alice's $K$-bit input, any algorithm in the CONGEST model for computing weighted APSP requires at least $\Omega(K /\log n)=\Omega(n)$ rounds.
\end{proofof}

\subsection{The Alice-Bob Framework Cannot Give a Super-Linear Lower Bound for Weighted APSP}
\label{sec:APSPalicebob}
In this section we argue that a reduction from any 2-party function with a constant partition of the graph into Alice and Bob's sides is provable incapable of providing a super-linear lower bound for computing weighted all pairs shortest paths in the CONGEST model.
A more detailed inspection of our analysis shows a stronger claim:
our claim also holds for algorithms for the CONGEST-BROADCAST model, where in each round each node must send the same $(\log{n})$-bit message to all of its neighbors. The following theorem states our claim.

\begin{theorem}
\label{thm:noAliceBob}
Let $f:\set{0,1}^{K_1}\times\set{0,1}^{K_2}\to \set{0,1}^{L_1}\times\set{0,1}^{L_2}$ be a function and let $G_{x,y}$ be a family of lower bound graphs w.r.t.\ $f$ and the weighted APSP problem. When applying Theorem~\ref{thm: general lb framework APSP} to $f$ and $G_{x,y}$, the lower bound obtained for the number of rounds for computing weighted APSP is at most linear in $n$.
\end{theorem}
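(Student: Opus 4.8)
The plan is to show that any lower bound derived via Theorem~\ref{thm: general lb framework APSP} from a family of lower bound graphs for weighted APSP is bounded above by $O(n)$, by exhibiting an explicit two-party protocol for the associated function $f$ whose communication cost is $O(n\log n)$. Since $|C| \geq $ (number of nodes in $V_A$ touching the cut), call this set $S \subseteq V_A$, and since the cut has at least $|S|$ edges, the derived lower bound is at most $O(CC(f)/(|S|\log n))$; an $O(|S| n \log n)$ protocol then forces this to be $O(n)$. So the whole game is to design this protocol.

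First I would set up the key structural observation: in $G_{x,y}$, every path between a node $u \in V_A$ and a node $v \in V_B$ must cross the cut $C = E(V_A, V_B)$, hence must pass through a cut edge and in particular through a node of $S$ (the endpoints in $V_A$ of cut edges). Therefore, for any $u \in V_A$ and $v \in V_B$,
\[
\dist_{G_{x,y}}(u,v) = \min_{s \in S}\bigl(\dist_{G_A}(u,s) + \dist_{G_{x,y}}(s,v)\bigr),
\]
where $G_A$ is the subgraph induced on $V_A$ (which Alice knows entirely, since by Definition~\ref{def:family} only edges inside $V_A$ depend on $x$ and only edges inside $V_B$ depend on $y$). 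A symmetric identity holds from Bob's side. The plan for the protocol is then: Alice computes, for every $s \in S$ and every $u \in V_A$, the distance $\dist_{G_A}(u,s)$ — purely local computation — and likewise Bob computes $\dist_{G_B}(v,s)$ for all $s \in S$, $v \in V_B$. They exchange enough information about the $S$-to-$S$ distances and the one-hop cut-edge weights to let each of them reconstruct $\dist_{G_{x,y}}(s, \cdot)$ for all $s \in S$; concretely, Alice sends Bob, for each $s, s' \in S$, the weight $\dist_{G_A}(s,s')$, and each cut edge's weight, which is $O(|S|^2 \log n + |C|\log n)$ bits — but we must be careful, since $|S|^2$ could exceed $|S| n$. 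The cleaner route is: Alice sends Bob the vector $(\dist_{G_A}(s, s'))_{s,s' \in S}$ only if $|S| \leq n$, which it always is, so this is $O(|S|^2\log n) = O(|S| n \log n)$ bits; then Bob can compute all of $\dist_{G_{x,y}}$ restricted to $S \times (S \cup V_B)$ by combining with his local distances and the cut-edge weights (which he also receives, $O(|C|\log n) = O(|S| n \log n)$ bits). Now Bob knows $\dist_{G_{x,y}}(s,v)$ for all $s \in S$, $v \in V$; symmetrically, after Bob sends Alice the analogous $O(|S| n\log n)$-bit message, Alice knows $\dist_{G_{x,y}}(s,u)$ for all $s \in S$, $u \in V$. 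Combined with the min-over-$S$ identity above and her local $\dist_{G_A}$, Alice can compute $\dist_{G_{x,y}}(u, \cdot)$ for every $u \in V_A$ — i.e., the values a correct APSP algorithm assigns to nodes in $V_A$ — and hence her side $f(x,y)|_{\text{left}}$; symmetrically for Bob. Total communication: $O(|S| n \log n)$.

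Feeding this into Theorem~\ref{thm: general lb framework APSP}: we have $CC(f) = O(|S| n \log n)$, and since $|C| \geq |S|$ (each cut edge contributes an endpoint to $S$, and distinct edges can share endpoints so actually $|C| \geq |S|$ may need the bound $|C| \geq |S|$ — indeed every node of $S$ is an endpoint of at least one cut edge, so $|C| \geq |S|$ might fail if many edges share few endpoints; instead use that $|C| \geq |S|$ is wrong in general, so I should instead track $|S| \leq 2|C|$... no: $S$ is the set of $V_A$-endpoints of cut edges, so $|S| \leq |C|$, which is the direction we want). Thus the lower bound Theorem~\ref{thm: general lb framework APSP} yields is $\Omega(CC(f)/(|C|\log n)) = O(|S| n\log n/(|S|\log n)) = O(n)$, as claimed.

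The main obstacle I anticipate is making the protocol genuinely correct in the presence of paths that cross the cut \emph{multiple} times: the identity $\dist_{G_{x,y}}(u,v) = \min_{s\in S}(\dist_{G_A}(u,s) + \dist_{G_{x,y}}(s,v))$ is fine, but computing $\dist_{G_{x,y}}(s,v)$ itself may require a path that re-enters $V_A$; the resolution is that once both players know all cut-edge weights and all $S$-to-$S$ and $S$-to-local distances, they are effectively running a Floyd–Warshall / Dijkstra on a small "contracted" graph on $S$ plus their own side, and this is self-consistent — but I must argue this contraction faithfully preserves all distances, which is the one place a careful (though standard) lemma is needed. A secondary point to handle cleanly is the extension claimed in the introduction to the second Alice–Bob framework and to more players, but for this theorem as stated only the fixed-partition case is needed, so I would confine the proof to that and relegate the rest to a remark.
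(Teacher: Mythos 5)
Your proposal is correct and follows essentially the same route as the paper: you simulate an APSP solution by a two-party protocol in which each player contracts the other side to its cut-touching nodes (the paper's virtual graphs $G_A'$, $G_B'$), exchange $O(|V(C)|\,n\log n)$ bits of distance information, and then observe that plugging this upper bound on $CC(f)$ into the framework theorem caps the derivable lower bound at $O(n)$ since $|V(C)|=O(|C|)$. The ``careful lemma'' you flag --- that the contracted graph faithfully preserves distances even for paths crossing the cut multiple times --- is exactly what the paper's Lemma~\ref{thm:AliceBobCompute} proves via the path-replacement argument, so your outline matches the paper's proof in both structure and substance.
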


Roughly speaking, we show that given an input graph $G=(V,E)$ and a partition of the set of vertices into two sets $V=V_A\cup V_B$, such that the graph induced by the nodes in $V_A$ is simulated by Alice and the graph induced by nodes in $V_B$ is simulated by Bob, Alice and Bob can compute weighted all pairs shortest paths by communicating $O(n\log n)$ bits of information for each node touching the cut $C=(V_A,V_B)$ induced by the partition. This means that for any 2-party function $f$ and any family of lower bound graphs w.r.t. $f$ and weighted APSP according to the extended definition of Section~\ref{sec:linearlb}, since Alice and Bob can compute weighted APSP which determines their output for $f$ by exchanging only $O(|V(C)|n\log n)$ bits, where $V(C)$ is the set of nodes touching $C$, the value $CC(f)$ is at most $O(|V(C)|n\log n)$. But then the lower bound obtained by Theorem~\ref{thm: general lb framework APSP} cannot be better than $\Omega(n)$, and hence no super-linear lower can be deduced by this framework as is.

Formally, given a graph $G=(V=V_A\dot\cup V_B,E)$
we denote $C=E(V_A,V_B)$. Let $V(C)$ denote the nodes touching the cut $C$, with $C_A=V(C)\cap V_A$ and $C_B=V(C)\cap V_B$.
Let $G_A=(V_A,E_A)$ be the subgraph induced by the nodes in $V_A$ and let $G_B=(V_B,E_B)$ be the subgraph induced by the nodes in $V_B$.
For a graph $H$, we denote the weighted distance between two nodes $u,v$ by $\wdist_{H}(u,v)$.

\begin{lemma}
\label{thm:AliceBobCompute}
Let $G=(V=V_A\dot\cup V_B,E,w)$ be a graph with an edge-weight function $w:E\to \set{1,\ldots, W}$, such that $W\in \poly n$. Suppose that $G_A$,  $C_B$, $C$ and the values of $w$ on $E_A$ and $C$ are given as input to Alice, and that $G_B$, $C_A$, $C$ and the values of $w$ on $E_B$ and $C$ are given as input to Bob.

Then, Alice can compute the distances in $G$ from all nodes in $V_A$ to all nodes in $V$ and Bob can compute the distances from all nodes in $V_B$ to all the nodes in $V$, using $O(\size{V(C)}n\log n)$ bits of communication.
\end{lemma}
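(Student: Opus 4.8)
The plan is to exploit that every path in $G$ joining a vertex of $V_A$ to a vertex of $V_B$ must cross the cut, hence pass through a vertex of $V(C)$; consequently a shortest path decomposes into segments each of which is either a single cut edge or lies entirely inside $G_A$ or entirely inside $G_B$, glued together at vertices of $V(C)$. First, with no communication, Alice computes $\wdist_{G_A}(u,u')$ for all $u,u'\in V_A$ and Bob computes $\wdist_{G_B}(v,v')$ for all $v,v'\in V_B$ (e.g.\ by running Dijkstra from each node of their own subgraph). The single communication step is then: Alice sends Bob the values $\wdist_{G_A}(s,u)$ for all $s\in C_A$ and $u\in V_A$, and Bob sends Alice the values $\wdist_{G_B}(t,v)$ for all $t\in C_B$ and $v\in V_B$. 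Since all weights, and hence all distances, are at most $nW\in\poly n$, each value fits in $O(\log n)$ bits, so this exchange costs $O\bigl((\size{C_A}\cdot\size{V_A}+\size{C_B}\cdot\size{V_B})\log n\bigr)=O(\size{V(C)}\,n\log n)$ bits; the players also know $C$ and the weights of $w$ on $C$ by hypothesis.

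Next, each player builds the same weighted ``skeleton'' graph $H$ on vertex set $V(C)$: an edge of weight $\wdist_{G_A}(s,s')$ for $s,s'\in C_A$, an edge of weight $\wdist_{G_B}(t,t')$ for $t,t'\in C_B$, and an edge of weight $w(s,t)$ for each cut edge $(s,t)\in C$. The key claim is $\wdist_H(p,q)=\wdist_G(p,q)$ for all $p,q\in V(C)$. The inequality $\le$ is clear because every edge of $H$ is realized by a walk in $G$ of equal weight. For $\ge$, take a shortest $p$--$q$ path in $G$ and list its vertices lying in $V(C)$ in order as $q_0=p,q_1,\dots,q_r=q$; each subpath between consecutive $q_i,q_{i+1}$ has no internal vertex in $V(C)$, so (since changing side forces traversing a cut edge, whose endpoints are in $V(C)$) it is either the single cut edge $(q_i,q_{i+1})$ or a path inside $G_A$ or inside $G_B$, and in every case its weight is at least the weight of the corresponding edge of $H$; summing over $i$ gives $\wdist_G(p,q)\ge\wdist_H(p,q)$. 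Running any shortest-path algorithm on $H$ therefore gives both players $\wdist_G$ on $V(C)\times V(C)$.

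Finally, Alice recovers all distances she needs by min-plus combinations, each justified by a ``first exit / last entry'' splitting of a shortest path: in order, $\wdist_G(s,v)=\min_{t\in C_B}\bigl(\wdist_G(s,t)+\wdist_{G_B}(t,v)\bigr)$ for $s\in C_A,v\in V_B$; $\wdist_G(s,v)=\min\bigl(\wdist_{G_A}(s,v),\min_{s'\in C_A}(\wdist_G(s,s')+\wdist_{G_A}(s',v))\bigr)$ for $s\in C_A,v\in V_A$; $\wdist_G(u,v)=\min\bigl(\wdist_{G_A}(u,v),\min_{s\in C_A}(\wdist_{G_A}(u,s)+\wdist_G(s,v))\bigr)$ for $u,v\in V_A$; and $\wdist_G(u,v)=\min_{s\in C_A}\bigl(\wdist_{G_A}(u,s)+\wdist_G(s,v)\bigr)$ for $u\in V_A,v\in V_B$. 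For each identity the ``$\le$'' direction concatenates walks and the ``$\ge$'' direction cuts a shortest path at its first vertex of $C_A$ after which it leaves $V_A$ (resp.\ its last vertex of $C_B$ after which it stays in $V_B$), using that the corresponding prefix/suffix then stays on one side. Bob proceeds symmetrically. I expect the main obstacle to be stating and verifying the skeleton-preserves-distances claim together with these path decompositions cleanly (in particular handling the degenerate cases where an endpoint already lies in $V(C)$); once that is in place the bit count is immediate and matches the claimed $O(\size{V(C)}\,n\log n)$.
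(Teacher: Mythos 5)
Your proposal is correct, and the protocol is the same as the paper's: each player sends the distances within their own subgraph from every cut vertex on their side to every vertex on their side, for $O(\size{V(C)}\,n\log n)$ bits, and correctness rests on the same decomposition of shortest paths at the vertices of $V(C)$. The only difference is in the (communication-free) local reconstruction — the paper has Alice run APSP once on an augmented graph $G_A'$ on $V_A\cup C_B$ (with a clique on $C_B$ weighted by the received $G_B$-distances) followed by a single min-plus step for targets in $V_B\setminus C_B$, whereas you run APSP on a skeleton graph on $V(C)$ and then apply a cascade of four min-plus identities; both are valid and yield the same bound.
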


\ifabs{
\begin{proof}
We describe a protocol for the required computation, as follows. For each node $u\in C_B$, Bob sends to Alice the weighted distances in $G_B$ from $u$ to all nodes in $V_B$, that is, Bob sends $\{\wdist_{G_B}(u,v) \mid u\in C_B, v\in V_B\}$
(or $\infty$ for pairs of nodes not connected in $G_B$).
Alice constructs a virtual graph $G_A'=(V_A',E_A',w_A')$ with the nodes $V_A' = V_A\cup C_B$ and edges $E_A'=E_A\cup C \cup (C_B\times C_B)$. The edge-weight function $w_A'$ is defined by $w_A'(e)=w(e)$ for each $e\in E_A\cup C$, and $w_A'(u,v)$ for $u,v\in C_B$ is defined to be the weighted distance between $u$ and $v$ in $G_B$, as received from Bob. Alice then computes the set of all weighted distances in $G_A'$, $\{\wdist_{G_A'}(u,v) \mid u, v\in V_A'\}$.

Alice assigns her output for the weighted distances in $G$ as follows. For two nodes $u,v\in V_A\cup C_B$, Alice outputs their weighted distance in $G_A'$, $\wdist_{G_A'}(u,v)$. For a node $u\in V_A'$ and a node $v\in V_B\setminus C_B$, Alice outputs $\min\{\wdist_{G_A'}(u,x)+\wdist_{G_B}(x,v)\mid x\in C_B\}$, where $\wdist_{G_A'}$ is the distance in $G_A'$ as computed by Alice, and $\wdist_{G_B}$ is the distance in $G_B$ that was sent by Bob.

For Bob to compute his required weighted distances, for each node $u\in C_A$, similar information is sent by Alice to Bob, that is, Alice sends to Bob the weighted distances in $G_A$ from $u$ to all nodes in $V_A$. Bob constructs the analogous graph $G_B'$ and outputs his required distance. The next paragraph formalizes this for completeness, but may be skipped by a convinced reader.

Formally, Alice sends $\{\wdist_{G_A}(u,v) \mid u\in C_A, v\in V_A\}$. Bob constructs $G_B'=(V_B',E_B',w_B')$ with $V_B' = V_B\cup C_A$ and edges $E_B'=E_B\cup C \cup (C_A\times C_A)$. The edge-weight function $w_B'$ is defined by $w_B'(e)=w(e)$ for each $e\in E_B\cup C$, and $w_B'(u,v)$ for $u,v\in C_A$ is defined to be the weighted distance between $u$ and $v$ in $G_A$, as received from Alice (or $\infty$ if they are not connected in $G_A$). Bob then computes the set of all weighted distances in $G_B'$, $\{\wdist_{G_B'}(u,v) \mid u, v\in V_B'\}$. Bob assigns his output for the weighted distances in $G$ as follows. For two nodes $u,v\in V_B\cup C_A$, Bob outputs their weighted distance in $G_B'$, $\wdist_{G_B'}(u,v)$. For a node $u\in V_B'$ and a node $v\in V_A\setminus C_A$, Bob outputs $\min\{\wdist_{G_B'}(u,x)+\wdist_{G_A}(x,v)\mid x\in C_A\}$, where $\wdist_{G_B'}$ is the distance in $G_B'$ as computed by Bob, and $\wdist_{G_A}$ is the distance in $G_A$ that was sent by Alice.

The correctness and complexity of the algorithm are addressed in Appendix~\ref{append:AliceBobCompute}.
\end{proof}
}{
\begin{proof}
	We describe a protocol for the required computation, as follows. For each node $u\in C_B$, Bob sends to Alice the weighted distances in $G_B$ from $u$ to all nodes in $V_B$, that is, Bob sends $\{\wdist_{G_B}(u,v) \mid u\in C_B, v\in V_B\}$
	(or $\infty$ for pairs of nodes not connected in $G_B$).
	Alice constructs a virtual graph $G_A'=(V_A',E_A',w_A')$ with the nodes $V_A' = V_A\cup C_B$ and edges $E_A'=E_A\cup C \cup (C_B\times C_B)$. The edge-weight function $w_A'$ is defined by $w_A'(e)=w(e)$ for each $e\in E_A\cup C$, and $w_A'(u,v)$ for $u,v\in C_B$ is defined to be the weighted distance between $u$ and $v$ in $G_B$, as received from Bob. Alice then computes the set of all weighted distances in $G_A'$, $\{\wdist_{G_A'}(u,v) \mid u, v\in V_A'\}$.
	
	Alice assigns her output for the weighted distances in $G$ as follows. For two nodes $u,v\in V_A\cup C_B$, Alice outputs their weighted distance in $G_A'$, $\wdist_{G_A'}(u,v)$. For a node $u\in V_A'$ and a node $v\in V_B\setminus C_B$, Alice outputs $\min\{\wdist_{G_A'}(u,x)+\wdist_{G_B}(x,v)\mid x\in C_B\}$, where $\wdist_{G_A'}$ is the distance in $G_A'$ as computed by Alice, and $\wdist_{G_B}$ is the distance in $G_B$ that was sent by Bob.
	
	For Bob to compute his required weighted distances, for each node $u\in C_A$, similar information is sent by Alice to Bob, that is, Alice sends to Bob the weighted distances in $G_A$ from $u$ to all nodes in $V_A$. Bob constructs the analogous graph $G_B'$ and outputs his required distance. The next paragraph formalizes this for completeness, but may be skipped by a convinced reader.
	
	Formally, Alice sends $\{\wdist_{G_A}(u,v) \mid u\in C_A, v\in V_A\}$. Bob constructs $G_B'=(V_B',E_B',w_B')$ with $V_B' = V_B\cup C_A$ and edges $E_B'=E_B\cup C \cup (C_A\times C_A)$. The edge-weight function $w_B'$ is defined by $w_B'(e)=w(e)$ for each $e\in E_B\cup C$, and $w_B'(u,v)$ for $u,v\in C_A$ is defined to be the weighted distance between $u$ and $v$ in $G_A$, as received from Alice (or $\infty$ if they are not connected in $G_A$). Bob then computes the set of all weighted distances in $G_B'$, $\{\wdist_{G_B'}(u,v) \mid u, v\in V_B'\}$. Bob assigns his output for the weighted distances in $G$ as follows. For two nodes $u,v\in V_B\cup C_A$, Bob outputs their weighted distance in $G_B'$, $\wdist_{G_B'}(u,v)$. For a node $u\in V_B'$ and a node $v\in V_A\setminus C_A$, Bob outputs $\min\{\wdist_{G_B'}(u,x)+\wdist_{G_A}(x,v)\mid x\in C_A\}$, where $\wdist_{G_B'}$ is the distance in $G_B'$ as computed by Bob, and $\wdist_{G_A}$ is the distance in $G_A$ that was sent by Alice.

~\\\textbf{Complexity.}
	Bob sends to Alice the distances from all nodes in $C_B$ to all node in $V_B$, which takes $O(\size{C_B}\size{V_B}\log n)$ bits, and similarly Alice sends
	$O(\size{C_A}\size{V_A} \log n)$ bits to Bob, for a total of $O(\size{V(C)}n\log n)$ bits.
	
	~\\\textbf{Correctness.}
	By construction, for every edge $(u,v)\in C_B\times C_B$ in $G_A'$ with weight $\wdist_{G_A'}(u,v)$, there is a corresponding shortest path $P_{u,v}$ of the same weight in $G_B$. Hence, for any path $P'=(v_0,v_1,\ldots,v_k)$ in $G_A'$ between $v_0,v_k \in V_A'$, there is a corresponding path $P_{v_0,v_k}$ of the same weight in $G$, where $P$ is obtained from $P'$ by replacing every two consecutive nodes $v_i, v_{i+1}$ in $P\cap C_B$ by the path $P_{v_i,v_{i+1}}$. Thus, $\wdist_{G_A'}(v_0,v_k)\geq \wdist_{G}(v_0,v_k)$.
	
	On the other hand,  for any shortest path $P=(v_0,v_1,\ldots,v_k)$ in $G$ connecting $v_0,v_k \in V_A'$, there is a corresponding path $P'$ of the same weight in $G_A'$, where $P'$ is obtained from $P$ by replacing any sub-path $(v_i,\ldots,v_j)$ of $P$ contained in $G_B$ and connecting $v_i,v_j\in C_B$ by the edge $(v_i,v_j)$ in $G_A'$. Thus, $\wdist_{G}(v_0,v_k)\geq \wdist_{G_A'}(v_0,v_k)$. Alice thus correctly computes the weighted distances between pairs of nodes in $V_A'$.
	
	It remains to argue about the weighted distances that Alice computes to nodes in $V_B\setminus C_B$. Any lightest path $P$ in $G$ connecting a node $u\in V_A'$ and a node$v\in V_B\setminus C_B$ must cross at least one edge of $C$ and thus must contain a node in $C_B$. Therefore, $\wdist_G(u,v) = \min\{\wdist_{G}(u,x)+\wdist_{G}(x,v)\mid x\in C_B\}$. Recall that we have shown that $\wdist_{G_A'}(u,x)=\wdist_{G}(u,x)$ for any
	$u,x\in V_A'$. The sub-path of $P$ connecting $x$ and $v$ is a shortest path between these nodes, and is contained in $G_B$,
	so $\wdist_{G_B}(x,v)=\wdist_{G}(x,v)$. Hence, the distance $\min\{\wdist_{G_A'}(u,x)+\wdist_{G_B}(x,v)\mid x\in C_B\}$ returned by Alice is indeed equal to $\wdist_G(u,v)$.
	
	The outputs of Bob are correct by the analogous arguments, completing the proof.
\end{proof}
}
\begin{proofof}{Theorem~\ref{thm:noAliceBob}}
Let $f:\set{0,1}^{K_1}\times\set{0,1}^{K_2}\to \set{0,1}^{L_1}\times\set{0,1}^{L_2}$ be a function and let $G_{x,y}$ be a family of lower bound graphs w.r.t. $f$ and the weighted APSP problem. By Lemma~\ref{thm:AliceBobCompute}, Alice and Bob can compute the weighted distances for any graph in $G_{x,y}$ by exchanging at most $O(|V(C)|n\log{n})$ bits, which is at most $O(|C|n\log{n})$ bits. Since $G_{x,y}$ is a family of lower bound graphs w.r.t.\ $f$ and weighted APSP, condition (4) gives that this number of bits is an upper bound for $CC(f)$. Therefore, when applying Theorem~\ref{thm: general lb framework APSP} to $f$ and $G_{x,y}$, the lower bound obtained for the number of rounds for computing weighted APSP is $\Omega(CC(f)/|C|\log{n})$, which is no higher than a bound of $\Omega(n)$.
\end{proofof}

~\\
\textbf{Extending to $t$ players:} We argue that generalizing the Alice-Bob framework to a shared-blackboard multi-party setting is still insufficient for providing a super-linear lower bound for weighted APSP. Suppose that we increase the number of players in the above framework to $t$ players, $P_0,\dots,P_{t-1}$, each simulating the nodes in a set $V_i$ in a partition of $V$ in a family of lower bound graphs w.r.t.\ a $t$-party function $f$ and  weighted APSP. That is, the outputs of nodes in $V_{i}$ for an algorithm $ALG$ for solving a problem $P$ in the CONGEST model, uniquely determines the output of player $P_i$ in the function $f$. The function $f$ is now a function from $\{0,1\}^{K_0}\times\cdots\times\{0,1\}^{K_{t-1}}$ to $\{0,1\}^{L_0}\times\cdots\times\{0,1\}^{L_{t-1}}$.

The communication complexity $CC(f)$ is the total number of bits written on the shared blackboard by all players. Denote by $C$ the set of cut edges, that is, the edge whose endpoints do not belong to the same $V_i$. Then, if  $ALG$ is a $R$-rounds algorithm, we have that writing $O(R|C|\log{n})$ bits on the shared blackboard suffice for computing $f$, and so $R=\Omega(CC(f)/|C|\log{n})$.

We now consider the problem $P$ to be weighted APSP. Let $f$ be a $t$-party function and let $G_{x_0,\dots,x_{t-1}}$ be a family of lower bound graphs w.r.t.\ $f$ and weighted APSP. We first have the players write all the edges in $C$ on the shared blackboard, for a total of $O(|C|\log{n})$ bits. Then, in turn, each player $P_i$ writes the weighted distances from all nodes in $V_i$ to all nodes in $V(C)\cap V_i$. This requires no more than $O(|V(C)|n\log{n})$ bits.

It is easy to verify that every player $P_i$ can now compute the weighted distances from all nodes in $V_i$ to all nodes in $V$, in a manner that is similar to that of Lemma~\ref{thm:AliceBobCompute}.

This gives an upper bound on $CC(f)$, which implies that any lower bound obtained by a reduction from $f$ is $\Omega(CC(f)/|C|\log{n})$, which is no larger than $\Omega((|V(C)|n+|C|)\log{n}/|C|\log{n})$, which is no larger than $\Omega(n)$, since $|V(C)|\leq 2|C|$.

~\\
\textbf{Remark 1:} Notice that the $t$-party simulation of the algorithm for the CONGEST model does not require a shared blackboard and can be done in the peer-to-peer multiparty setting as well, since simulating the delivery of a message does not require the message to be known globally. This raises the question of why would one consider a reduction to the CONGEST model from the stronger shared-blackboard model to begin with. Notice that our argument above for $t$ players does not translate to the peer-to-peer multiparty setting, because it assumes that the edges of the cut $C$ can be made global knowledge within writing $|C|\log{n}$ bits on the blackboard. However, what our extension above shows is that if there is a lower bound that is to be obtained using a reduction from peer-to-peer $t$-party computation, \emph{it must use a function $f$ that is strictly harder to compute in the peer-to-peer setting compared with the shared-blackboard setting}.

~\\
\textbf{Remark 2:} We suspect that a similar argument can be applied for the framework of non-fixed Alice-Bob partitions (e.g.,~\cite{SarmaHKKNPPW12}), but this requires precisely defining this framework which is not addressed in this version.

\section{Discussion}
This work provides the first super-linear lower bounds for the CONGEST model, raising a plethora of open questions. First, we showed for some specific problems, namely, computing a minimum vertex cover, a maximum independent set and a $\chi$-coloring, that they are nearly as hard as possible for the CONGEST model. However, we know that approximate solutions for some of these problems can be obtained much faster, in a polylogarithmic number of rounds or even less. A family of specific open questions is then to characterize the exact trade-off between approximation factors and round complexities for these problems.

Another specific open question is the complexity of weighted APSP, which has also been asked in previous work~\cite{Elkin04, Nanongkai14}. Our proof that the Alice-Bob framework is incapable of providing super-linear lower bounds for this problem may be viewed as providing evidence that weighted APSP can be solved much faster than is currently known. Together with the recent sub-quadratic algorithm of~\cite{Elkin17}, this brings another angle to the question: can weighted APSP be solved in linear time?

Finally, we propose a more general open question which addresses a possible classification of complexities of global problems in the CONGEST model. Some such problems have complexities of $\Theta(D)$, such as constructing a BFS tree. Others have complexities of $\tilde{\Theta}(D+\sqrt{n})$, such as finding an MST. Some problems have near-linear complexities, such as unweighted APSP. And now we know about the family of hardest problems for the CONGEST model, whose complexities are near-quadratic. Do these complexities capture all possibilities, when natural global graph problems are concerned? Or are there such problems with a complexity of, say, $\Theta(n^{1+\delta})$, for some constant $0<\delta<1$? A similar question was recently addressed in~\cite{ChangP17} for the LOCAL model, and we propose investigating the possibility that such a hierarchy exists for the CONGEST model.

~\\
\textbf{Acknowledgement:} We thank Amir Abboud, Ohad Ben Baruch, Michael Elkin, Yuval Filmus and Christoph Lenzen for useful discussions.

\renewcommand{\baselinestretch}{0.9}
\bibliographystyle{alpha}
\bibliography{bib}

\appendix
\renewcommand{\baselinestretch}{1}
\ifabs{
\section{Graph Coloring}
\label{sec:coloring}
\ColoringSection
}{}
	
\ifabs{
\section{Quadratic and Near-Quadratic Lower Bounds for Problems in P}
\label{sec:P}
\Psection
}{}

\ifabs{
\section{Finishing the Proof of Lemma~\ref{thm:AliceBobCompute}}
\label{append:AliceBobCompute}
~\\\textbf{Complexity.}
Bob sends to Alice the distances from all nodes in $C_B$ to all node in $V_B$, which takes $O(\size{C_B}\size{V_B}\log n)$ bits, and similarly Alice sends
$O(\size{C_A}\size{V_A} \log n)$ bits to Bob, for a total of $O(\size{V(C)}n\log n)$ bits.

~\\\textbf{Correctness.}
By construction, for every edge $(u,v)\in C_B\times C_B$ in $G_A'$ with weight $\wdist_{G_A'}(u,v)$, there is a corresponding shortest path $P_{u,v}$ of the same weight in $G_B$. Hence, for any path $P'=(v_0,v_1,\ldots,v_k)$ in $G_A'$ between $v_0,v_k \in V_A'$, there is a corresponding path $P_{v_0,v_k}$ of the same weight in $G$, where $P$ is obtained from $P'$ by replacing every two consecutive nodes $v_i, v_{i+1}$ in $P\cap C_B$ by the path $P_{v_i,v_{i+1}}$. Thus, $\wdist_{G_A'}(v_0,v_k)\geq \wdist_{G}(v_0,v_k)$.

On the other hand,  for any shortest path $P=(v_0,v_1,\ldots,v_k)$ in $G$ connecting $v_0,v_k \in V_A'$, there is a corresponding path $P'$ of the same weight in $G_A'$, where $P'$ is obtained from $P$ by replacing any sub-path $(v_i,\ldots,v_j)$ of $P$ contained in $G_B$ and connecting $v_i,v_j\in C_B$ by the edge $(v_i,v_j)$ in $G_A'$. Thus, $\wdist_{G}(v_0,v_k)\geq \wdist_{G_A'}(v_0,v_k)$. Alice thus correctly computes the weighted distances between pairs of nodes in $V_A'$.

It remains to argue about the weighted distances that Alice computes to nodes in $V_B\setminus C_B$. Any lightest path $P$ in $G$ connecting a node $u\in V_A'$ and a node$v\in V_B\setminus C_B$ must cross at least one edge of $C$ and thus must contain a node in $C_B$. Therefore, $\wdist_G(u,v) = \min\{\wdist_{G}(u,x)+\wdist_{G}(x,v)\mid x\in C_B\}$. Recall that we have shown that $\wdist_{G_A'}(u,x)=\wdist_{G}(u,x)$ for any
$u,x\in V_A'$. The sub-path of $P$ connecting $x$ and $v$ is a shortest path between these nodes, and is contained in $G_B$,
so $\wdist_{G_B}(x,v)=\wdist_{G}(x,v)$. Hence, the distance $\min\{\wdist_{G_A'}(u,x)+\wdist_{G_B}(x,v)\mid x\in C_B\}$ returned by Alice is indeed equal to $\wdist_G(u,v)$.

The outputs of Bob are correct by the analogous arguments, completing the proof.
}{}

\end{document}